\newcommand{\E}{\mathbb{E}}
\begin{document}

\title{
A New Poisson Noise Filter based on
Weights Optimization
}
%\subtitle{Do you have a subtitle?\\ If so, write it here}

%\titlerunning{Short form of title}        % if too long for running head

\author{Qiyu JIN         \and
        Ion Grama        \and
        Quansheng Liu
}

%\authorrunning{Short form of author list} % if too long for running head

\institute{Qiyu JIN \at
              Universit\'{e} de Bretagne-Sud, Campus de Tohaninic, BP 573,
56017 Vannes, France \\
Universit\'{e} Europ\'{e}enne de Bretagne, France\\
              %Tel.: +33-2-97017154\\
              %Fax:  +33-2-97017175\\
              \email{qiyu.jin@univ-ubs.fr}           %  \\
%             \emph{Present address:} of F. Author  %  if needed
           \and
           Ion Grama \at
              Universit\'{e} de Bretagne-Sud, Campus de Tohaninic, BP 573,
56017 Vannes, France\\
Universit\'{e} Europ\'{e}enne de Bretagne, France\\
              \email{ion.grama@univ-ubs.fr}
           \and
           Quansheng Liu \at
              Universit\'{e} de Bretagne-Sud, Campus de Tohaninic, BP 573,
56017 Vannes, France\\
Universit\'{e} Europ\'{e}enne de Bretagne, France\\
              \email{quansheng.liu@univ-ubs.fr}
}
%\date{27-01-2012}
%\date{Received: date / Accepted: date}
% The correct dates will be entered by the editor

\maketitle

\begin{abstract}
We propose a new image denoising algorithm when the data is contaminated by a
Poisson noise. As in the Non-Local Means filter, the proposed algorithm is based on a weighted linear combination of the observed image. But in contract to the latter where the  weights are defined by a Gaussian kernel, we propose to choose them in an optimal way. First some "oracle" weights are defined by minimizing a very tight upper bound of the Mean Square Error.
For a practical application the weights are estimated from the observed image.
We prove that the proposed filter converges at the usual optimal rate to the true image.
Simulation results are presented to compare the performance of the presented filter with conventional filtering methods.
\keywords{Poisson noise\and Mean Square Error\and oracle
estimate \and Optimal Weights Filter }
% \PACS{PACS code1 \and PACS code2 \and more}
% \subclass{MSC code1 \and MSC code2 \and more}
\end{abstract}

\section{Introduction}
In a variety of applications, ranging from nuclear medicine to night vision
and from astronomy to traffic analysis, data are collected by counting a
series of discrete events, such as photons hitting a detector or vehicles
passing a sensor.   Many such
problems can be viewed as the recovery of the intensity from the indirect Poisson data.
The measurements are often inherently noisy due to low
count levels, and we wish to reconstruct salient features of the underlying
phenomenon from these noisy measurements as accurately as possible.

There are many types of methods to reconstruct the image contaminated by the
Poisson noise. The most popular method  is
performed through a Variance Stabilizing Transform (VST) with the  following
three-step procedure. First, the variance of the Poisson distribution  is
stabilized  by applying a VST. So that the
transformed data are approximately homoscedastic and Gaussian. The VST can
be an Anscombe root transformation (Anscombe \citep{ANSCOMBE1948TRANSFORMATION}
 and Borovkov \citep{borovkov2000estimates}), multiscal VSTs (Bardsley and  Luttman
\citep{ZHANG2008WAVELETS}), Conditional Variance Stabilization (CVS) (Jansen
\citep{jansen2006multiscale}), or Haar-Fisz transformation (Fryzlewicz and  Nason
\citep{FRYZLEWICZ2007GOES,FRYZLEWICZ2004HAAR}). Second, the
noise is removed using a conventional denoising algorithm for additive
Gaussian white noise, see for example
Buades, Coll and Morel (2005 \citep{buades2005review}),
Kervrann (2006 \citep{kervrann2006optimal}), Aharon and Elad and Bruckstein
(2006 \citep{aharon2006rm}), Hammond and Simoncelli (2008 %
\citep{hammond2008image}), Polzehl and Spokoiny (2006 %
\citep{polzehl2006propagation}), Hirakawa and Parks (2006 %
\citep{hirakawa2006image}), Mairal, Sapiro and Elad (2008 %
\citep{mairal2008learning}), Portilla, Strela, Wainwright and
Simoncelli (2003 \citep{portilla2003image}), Roth and Black (2009 %
\citep{roth2009fields}), Katkovnik, Foi, Egiazarian, and Astola (2010 %
\citep{Katkovnik2010local}), Dabov, Foi, Katkovnik and Egiazarian (2006 %
\citep{buades2009note}),
 Abraham, Abraham, Desolneux and  Li-Thiao-Te (2007 \citep{Abraham2007significant}), and Jin, Grama and Liu (2011 \citep{JinGramaLiuowf}).
 Third, an inverse transformation is applied to the
denoised signal, obtaining the estimate of the signal of interest. Makitalo
and Foi (2009 \citep{MAKITALO2009INVERSION} and 2011 %
\citep{makitalo2011optimal}) focus on this last step, and introduce the Exact
Unbiased Inverse (EUI) approach. Zhang, Fadili, and Starck (2008 %
\citep{ZHANG2008WAVELETS}), Lefkimmiatis, Maragos, and Papandreou (2009 %
\citep{lefkimmiatis2009bayesian}), Luisier,  Vonesch, Blu and Unser (2010 \citep{luisier2010fast})
 improved both  the stabilization and   the inverse transformation.

Regularization based on a total variation seminorm has also attracted
significant attention, see for example Beck and Teboulle (2009 %
\citep{beck2009fast}), Bardsley and Luttman (2009 \citep{bardsley2009total}%
), Setzer, Steidl and Teuber (2010 \citep{setzer2010deblurring}). Nowak and
Kolaczyk (1998 \citep{nowak1998multiscale} and 2000 \citep{nowak2000statistical})
have investigated reconstruction algorithms specifically designed for the
Poisson noise without the need of VSTs.

In this paper, we introduce a new algorithm to restore the Poisson noise without
using VST's. We combine the special properties
 of the
Poisson distribution and the idea of Optimal Weights Filter %
\citep{JinGramaLiuowf} for removing efficiently the Poisson noise. The use
of the proposed filter is justified both from the theoretical point of view
by  convergence theorems, and by simulations which show that the
filter is very effective.

The paper is organized as follows. Our main results are presented in Section %
\ref{Sec:oracle} where we construct an adaptive estimator and give an
estimation of its rate of convergence. In Section \ref{Sec:simulations}, we
present our simulation results with a brief analysis. Proofs of the main
results are deferred to Section \ref{Sec:Appendix Proofs}.

\section{\label{Sec:oracle} Construction of the estimator and its convergence}
\subsection{The model and the notations}

We
suppose that the original image of the object being photographed is a
integrable two-dimensional function $f(x)$, $x\in (0,1]\times(0,1]$. Let the
mean value of $f$ in a set $\mathbf{B}_x$ be
\begin{equation*}
\Lambda(\mathbf{B}_x)=N^2\int\limits_{\mathbf{B}_x}f(t)dt.  \label{Lam001}
\end{equation*}
Typically we observe a discrete data set of counts $\mathbf{Y}=\{\mathcal{N}(\mathbf{B}%
_x)$\}, where $\mathcal{N}(\mathbf{B}_x)$ is a Poisson random variable of intensity $%
\Lambda(\mathbf{B_x})$. We consider that if
$
\mathbf{B}_x \cap \mathbf{B}_y=\emptyset,
$
 then $\mathcal{N}(\mathbf{B}_x)$ is independent of $\mathcal{N}(\mathbf{B}_y)$. For a positive integer $N$ the uniform $N\times N$
grid   on the unit square is defined
by
\begin{equation}
\mathbf{I}=\left\{ \frac{1}{N},\frac{2}{N},\cdots ,\frac{N-1}{N},1\right\}
^{2}.  \label{def I}
\end{equation}%
Each element $x$ of the grid $\mathbf{I}$ is called pixel. The number of pixels
is $n=N^{2}.$  Suppose that $%
x=(x^{(1)},x^{(2)})\in \mathbf{I}$,
and $\mathbf{B}_x=(x^{(1)}-1/N,x^{(1)}]\times(x^{(2)}-1/N,x^{(2)}]$. Then $\{%
\mathbf{B}_x\}_{x\in \mathbf{I}}$ is a partition of the square $(0,1]\times(0,1]$. The image function $f$ is considered to be constant on each
 $\mathbf{B}_x$, $x\in \mathbf{I}$.
Hence we get a discrete function $f(x)=\Lambda(\mathbf{B}_x)$, $%
x\in \mathbf{I}$. The denoising aims at estimating the underlying intensity
profile $f(x)$.
In the sequence we shall use the following important property of the Poisson distribution:
\begin{equation}
\mathbb{E}(\mathcal{N}(\mathbf{B}_{x}))= \mathbb{V}ar(
\mathcal{N}(\mathbf{B}_{x}))=f(x).
\label{porperty Poisson}
\end{equation}
Actually the Poisson noise model can be viewed as the following additive noise model
\begin{equation}
Y(x)=f(x) + \epsilon(x),
\label{model poisson2}
\end{equation}
where
\begin{equation}
\epsilon (y)=Y(x)-f(x).
\label{defi epsilon}
\end{equation}
may be considered as an additive heteroscedastic   noise related to the Poisson model. Due to
(\ref{porperty Poisson}), we have
  $\E (\epsilon(y))=0$ and $\mathbb{V}ar(\epsilon(y))=\mathbb{V}ar(Y(y))=f(x)$.

Let us set some  notations to be used
throughout the paper. The Euclidean norm of a vector $x=\left(x_{1},...,x_{d}\right)
\in \mathbf{R}^{d}$ is denoted by $%
\left\Vert x\right\Vert _{2}=\left(\sum_{i=1}^{d}x_{i}^{2}\right)^{\frac{1}{2}} .$ The
supremum norm of $x$ is denoted by $\Vert x\Vert _{\infty }=\sup_{1\leq
i\leq d}\left\vert x_{i}\right\vert .$ The cardinality of a set $\mathbf{A}$ is
denoted $\mathrm{card}\, \, \mathbf{A}$.  For any pixel $x_{0}\in \mathbf{I}$ and a given $h>0,$ the
square window
\begin{equation}
\mathbf{U}_{x_{0},h}=\left\{ x\in \mathbf{I:\;}\Vert x-x_{0}\Vert _{\infty
}\leq h\right\}  \label{def search window}
\end{equation}%
is called \emph{search window} at $x_{0}.$ We naturally take $h$ as a multiple of $\frac{1}{N}$ ($ h=\frac{k}{N}$ for some $k\in \{ 1, 2,\cdots,N\}$). The size of the square
search window $\mathbf{U}_{x_{0},h}$ is the positive integer number
\begin{equation*}
M=(2Nh+1)^{2}=\mathrm{card}\, \ \mathbf{U}_{x_{0},h}.
\end{equation*}
 For any pixel $x\in \mathbf{U}%
_{x_{0},h}$ and a given $\eta >0$. Consider a second square window $\mathbf{U}_{x,\eta}$ of size
\begin{equation*}
m=(2N\eta+1)
^{2}=\mathrm{card\ }\mathbf{U}_{x_{0},\eta }.
\end{equation*}
We shall call $\mathbf{U}_{x,\eta}$ local patches and $\mathbf{U}_{x,h}$ search windows.
 Finally, the positive part of  a real number $%
a$ is denoted by $a^{+}$:
\begin{equation*}
a^{+}=\left\{
\begin{array}{cc}
a & \text{if }a\geq 0, \\
0 & \text{if }a<0.%
\end{array}%
\right.
\end{equation*}

\subsection{Construction of the estimator}

Let $h>0$ be fixed. For any pixel $x_{0}\in \mathbf{I}$ consider a family of
weighted estimates $\widetilde{f}_{h,w}(x_{0})$ of the form
\begin{equation}
\widetilde{f}_{h,w}(x_{0})=\sum_{x\in \mathbf{U}_{x_{0},h}}w(x)Y(x),
\label{s2fx}
\end{equation}%
where the unknown weights satisfy
\begin{equation}
w(x)\geq 0\;\;\;\text{and\ \ \ }\sum_{x\in \mathbf{U}_{x_{0},h}}w(x)=1.
\label{s2wx}
\end{equation}%
The usual bias and  variance decomposition of the Mean Square Error
gives%
\begin{equation}
\mathbb{E}\left(\widetilde{f}_{h,w}(x_{0})-f(x_{0})\right)
^{2}=Bias^{2}+Var,  \label{s2ef}
\end{equation}%
where
\begin{equation*}
Bias^{2}=\left(\sum_{x\in \mathbf{U}_{x_{0},h}}w(x)\left(f(x)-f(x_{0})\right) \right) ^{2}
\end{equation*}
and
\begin{equation*}
Var=\sum_{x\in
\mathbf{U}_{x_{0},h}}w(x)^{2}f(x).
\end{equation*}%
The decomposition (\ref{s2ef}) is commonly used to construct asymptotically
minimax estimators over some given classes of functions in the nonparametric
function estimation.
With our approach the bias term $Bias^{2}$ will be bounded in terms of the
unknown function $f$ itself. As a result we obtain some "oracle" weights $w$
adapted to the unknown function $f$ at hand, which will be estimated
further using data patches from the image $Y.$

First, we shall address the problem of determining the "oracle" weights. With
this aim denote%
\begin{equation}
\rho(x)=\rho _{f,x_{0}}\left(x\right) = \left\vert f(x)-f(x_{0})\right\vert .
\label{s2rx}
\end{equation}%
Note that the value of $\rho _{f,x_{0}}\left(x\right) $ characterizes the
variation of the image brightness of the pixel $x$ with respect to the pixel
$x_{0}.$ From the decomposition (\ref{s2ef}), we easily obtain a tight upper
bound in terms of the vector $\rho _{f,x_{0}}:$%
\begin{equation}
\mathbb{E}\left(\widetilde{f}_{h}(x_{0})-f(x_{0})\right) ^{2}\leq g_{\rho
_{f,x_{0}}}(w)=g_{\rho}(w),  \label{s2ef-vers2}
\end{equation}%
where
\begin{equation}
g_{\rho }(w)=\left(\sum_{x\in \mathbf{U}_{x_{0},h}}w(x)\rho (x)\right)
^{2}+\sum_{x\in \mathbf{U}_{x_{0},h}}w(x)^{2}f(x).
\label{def gw}
\end{equation}

From the following theorem we can obtain the form of the weights $w$ which
minimize the function $g_{\rho }(w)$ under the constraints (\ref%
{s2wx}) in terms of  $\rho \left(x\right) .$
For the sake of generality, we shall formulate the result for an arbitrary non-negative
function $\rho(x)$, $x\in \mathbf{U}_{x,h}$, not necessarily defined by (\ref{s2rx}).

Introduce into consideration the strictly increasing function%
\begin{equation}
M_{\rho }\left(t\right) =\sum_{x\in \mathbf{U}_{x_{0},h}}\frac{1}{f(x)}\rho (x)(t-\rho
(x))^{+},\ \ \ t\geq 0.
\label{def mt}
\end{equation}%
Let $K_{\text{tr}}$ be the usual triangular kernel:
\begin{equation}
K_{\text{tr}}\left(t\right) =\left(1-\left\vert t\right\vert \right)
^{+},\quad t\in \mathbf{R}^{1}.
\label{def kernel}
\end{equation}

\begin{theorem}
\label{Th weights 001}Let $\rho \left(x\right) ,$ $x\in \mathbf{U}%
_{x_{0},h} $ be an arbitrary similarity function and let $g_{\rho}(w)$ be given by (%
\ref{def gw}). Suppose that $f(x)> 0$ for all $x\in \mathbf{U}_{x_0,h}$. Then there are  unique weights which minimize $g_{\rho}(w)$
subject to (\ref{s2wx}), given by
\begin{equation}
w_{\rho }(x)=\frac{%
K_{\text{tr}}\left(\frac{\rho(x)}{a}\right)/f(x)}{\sum_{y\in \mathbf{U}_{x_0,h}}
K_{\text{tr}}\left(\frac{\rho(y)}{a}\right)/f(y)},  \label{eq th weights 001}
\end{equation}%
where $a>0$ is the unique solution of the equation
\begin{equation}
M_{\rho }\left(a\right) =1.
\label{s2rt}
\end{equation}
\end{theorem}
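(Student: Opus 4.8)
The plan is to treat this as a constrained convex optimization problem and solve it via Lagrange multipliers (Karush–Kuhn–Tucker conditions). First I would observe that $g_\rho(w)$ is a strictly convex function of $w$ on the simplex: the term $\sum_x w(x)^2 f(x)$ is strictly convex since $f(x)>0$, and $(\sum_x w(x)\rho(x))^2$ is convex (square of a linear functional). The feasible set defined by (\ref{s2wx}) is a compact convex polytope, so a minimizer exists and is unique. This disposes of the existence/uniqueness part once we exhibit the critical point.

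Next I would write down the Lagrangian incorporating the equality constraint $\sum_x w(x)=1$ with multiplier $\mu$ and the inequality constraints $w(x)\ge 0$ with multipliers. Setting $S = \sum_{x}w(x)\rho(x)$, differentiating in $w(x)$ gives the stationarity condition $2S\rho(x) + 2w(x)f(x) - \mu + (\text{sign from }w(x)\ge 0) = 0$. For pixels where $w(x)>0$ this forces $w(x) = \frac{\mu/2 - S\rho(x)}{f(x)}$, and positivity of $w(x)$ means this is active precisely when $\rho(x) < \mu/(2S)$; otherwise $w(x)=0$. Writing $a = \mu/(2S)$, this is exactly $w(x) \propto \left(1 - \frac{\rho(x)}{a}\right)^{+}/f(x) = K_{\text{tr}}(\rho(x)/a)/f(x)$, which after normalizing to sum to $1$ yields (\ref{eq th weights 001}).

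It then remains to pin down $a$ via the self-consistency relation. From $w(x) = \frac{a - \rho(x)}{f(x)}\cdot c$ on the support (where $c$ is the normalizing constant) and the definition $S = \sum_x w(x)\rho(x)$, together with $a = \mu/(2S)$ and the relation between $\mu$, $S$, and the normalization $\sum_x w(x)=1$, I would derive that $\sum_x \frac{1}{f(x)}\rho(x)(a-\rho(x))^{+} = 1$, i.e. $M_\rho(a)=1$. One checks that $M_\rho$ as defined in (\ref{def mt}) is continuous, strictly increasing on the range where at least one summand is active, tends to $0$ as $t\to 0^{+}$ and to $+\infty$ as $t\to\infty$ (as long as $\rho$ is not identically zero on $\mathbf{U}_{x_0,h}$; the degenerate case $\rho\equiv 0$ should be handled separately, giving uniform weights $w(x)\propto 1/f(x)$), so (\ref{s2rt}) has a unique solution.

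The main obstacle I anticipate is the bookkeeping that connects the multiplier $\mu$, the auxiliary sum $S$, and the normalization constant so that the threshold parameter $a$ is characterized \emph{intrinsically} by the clean equation $M_\rho(a)=1$ rather than by a messier implicit relation — in particular verifying that the same $a$ which appears inside $K_{\text{tr}}(\rho(x)/a)$ is the one solving (\ref{s2rt}), and checking consistency of the support set (that the pixels with $w(x)>0$ are exactly those with $\rho(x)<a$). A clean way to avoid circularity is to argue in reverse: \emph{define} $a$ as the unique root of $M_\rho(a)=1$, \emph{define} $w_\rho$ by (\ref{eq th weights 001}), verify directly that this $w_\rho$ satisfies the KKT conditions for some multiplier $\mu$, and then invoke strict convexity to conclude it is the unique global minimizer. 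This turns the hard direction into a verification rather than a derivation.
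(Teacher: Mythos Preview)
Your proposal is correct and follows essentially the same route as the paper: the paper also derives the optimal weights via Karush--Kuhn--Tucker conditions (packaged as a preliminary lemma giving $w_\rho(x)=\frac{1}{f(x)}(b-\lambda\rho(x))^{+}$ with $(b,\lambda)$ determined by two equations, which are your $(\mu/2,S)$), and then substitutes $a=b/\lambda$ to obtain the triangular-kernel form and the characterization $M_\rho(a)=1$. Your explicit strict-convexity argument for uniqueness and your flagging of the degenerate case $\rho\equiv 0$ are small improvements over the paper's presentation, but the core argument is the same.
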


 The proof of Theorem  \ref{Th weights 001} is deferred to Section  \ref{Sec: proof of Th weights
001}.

\begin{remark}
\label{calculate a} The bandwidth $a>0$ is the solution of
\begin{equation*}
\sum_{x\in \mathbf{U}_{x_{0},h}}\frac{1}{f(x)}\rho(x)({a} -%
\rho(x))^{+}=1,
\end{equation*}%
and can be calculated as follows. We sort the set $\{\rho(x)\,|\,x\in \mathbf{U}_{x_{0},h}\}$ in the ascending order $0=%
\rho_{1}\leq \rho_{2}\leq \cdots \leq \rho%
_{M}<\rho_{M+1}=+\infty $, where $M=\mathrm{card}\, \,\mathbf{U}_{x_{0},h}$.
 Let $f_i$ be the corresponding value of $f(x)$
  (we have $f_i=f(x)$ if $\rho_i=\rho(x)$, $x\in \mathbf{U}_{x_0,h}$).
Let
\begin{equation}
a_{k}=\frac{1+\sum\limits_{i=1}^{k}\rho_{i}^{2}/f_i}{%
\sum\limits_{i=1}^{k}\rho_{i}/f_i},\quad 1\leq k\leq M,  \label{a k}
\end{equation}%
and
\begin{eqnarray}
k^{\ast } &=&\max \{1\leq k\leq M\,|\,a_{k}\geq \rho_{k}\}  \notag
\\
&=&\min \{1\leq k\leq M\,|\,a_{k}<\rho_{k}\}-1,  \label{k star}
\end{eqnarray}%
with the convention that $a_{k}=\infty $ if $\rho_{k}=0$ and that
$\min \varnothing =M+1$. The bandwidth $a>0$ can be expressed as $ {a} %
=a_{k^{\ast }}$. Moreover, $k^{\ast }$ is also the unique integer $k\in
\{1,\cdots ,M\}$ such that $a_{k}\geq \rho_{k}$ and $a_{k+1}<%
\rho_{k+1}$ if $k<M$.
\end{remark}

The proof of Remark \ref{calculate a} can be found in \citep{JinGramaLiuowf}.

Let $\rho \left(x\right) ,$ $x\in \mathbf{U}_{x_{0},h},$ be an arbitrary
non-negative function and let $w_{\rho }$ be the optimal weights given by
(\ref{eq th weights 001}). Using these weights $w_{\rho }$
we define the family of estimates%
\begin{equation}
f_{h}^{\ast }(x_{0})=\sum_{x\in \mathbf{U}_{x_{0},h}}w_{\rho }(x)Y(x)
\label{s2fx3}
\end{equation}%
depending on the unknown function $\rho .$ The next theorem shows that one
can pick up an useful estimate from the family $f_{h}^{\ast }$ if the
function $\rho $ is close to the "true" function $\rho _{f,x_{0}}(x)=\left\vert
f\left(x\right) -f\left(x_{0}\right) \right\vert ,$ i.e. if
\begin{equation}
\rho \left(x\right) =\left\vert f\left(x\right) -f\left(x_{0}\right)
\right\vert +\delta _{n},  \label{s2fx3a}
\end{equation}%
where $\delta _{n}\geq 0$ is a small deterministic error. We shall prove the
convergence of the estimate $f_{h}^{\ast }$ under the local H\"{o}lder
condition
\begin{equation}
|f(x)-f(y)|\leq L\Vert x-y\Vert _{\infty }^{\beta },\,\,\,\forall x,\,y\in
\mathbf{U}_{x_{0},h+\eta},  \label{Local Holder cond}
\end{equation}%
where $\beta >0$ is a constant, $h>0,$ $\eta>0$ and $x_{0}\in \mathbf{I}.$
\par
In the
following, $c_{i}>0$ $(i\geq 1)$ denotes a positive constant,
and $O(a_n)$  $(n\geq 1)$ denotes a sequence  bounded
by $c\cdot a_n$ for some constant $c>0$ and all $n\geq 1$.
 All the constants $c_i>0$ and $c>0$ depend only on $L$
 and  $\beta$; their values can be different from line to line. Let
 \begin{equation}
 \Gamma \geq \max\{f(x):x\in \mathbf{I}\}
\label{defi Gamma}
 \end{equation}
 be an upper bound of the image $f$.

\begin{theorem}
\label{Th oracle 001}Assume  that $h\geq
c_{0}n^{-\alpha }$ with $0\leq \alpha <\frac{1}{2\beta +2}$ and $c_{0}>0,$
or that $h=c_{0}n^{-\frac{1}{2\beta +2}}$ with $c_{0}>c_{1}=\left(\Gamma
\frac{\left(\beta +2\right) \left(2\beta +2\right) }{8L^{2}\beta }\right)
^{\frac{1}{2\beta +2}}.$  Suppose also that the function $f>0$ satisfies the local H%
\"{o}lder condition (\ref{Local Holder cond}). Let $f_{h}^{\ast }(x_{0})$ be given by (\ref{s2fx3}%
), where the weights $w_{\rho}$ are defined by (\ref{eq th weights 001}) and (\ref{s2rt}%
) with $\rho \left(x\right) =\left\vert f\left(x\right) -f\left(x_{0}\right) \right\vert +\delta _{n}$ and $\delta _{n}=O\left(n^{-\frac{%
\beta }{2+2\beta }}\right) .$ Then
\begin{equation}
\mathbb{E}\left(f_{h}^{\ast }(x_{0})-f(x_{0})\right) ^{2}=O\left(n^{-\frac{%
2\beta }{2+2\beta }}\right) .  \label{s2ef2}
\end{equation}
\end{theorem}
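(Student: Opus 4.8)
The plan is to bound the mean square error of $f_{h}^{\ast }(x_{0})$ by the \emph{minimum} of $g_{\rho }$ over admissible weights, and then to estimate that minimum by comparing $f_{h}^{\ast }$ with the plain local average over a window of the statistically optimal size. Throughout, write $\rho =\rho _{f,x_{0}}+\delta _{n}$ for the similarity function actually used in $f_{h}^{\ast }$, so that $\rho _{f,x_{0}}(x)=|f(x)-f(x_{0})|$ and $\rho (x)\ge \rho _{f,x_{0}}(x)\ge 0$ for every $x$.

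The core reduction is a chain of three elementary inequalities. Since $f>0$ on $\mathbf{U}_{x_{0},h}$, Theorem~\ref{Th weights 001} applies and $w_{\rho }$ is the unique minimizer of $g_{\rho }$ subject to (\ref{s2wx}). First, $w_{\rho }$ is itself an admissible weight vector, so the bias--variance bound (\ref{s2ef-vers2}), applied with $w=w_{\rho }$, gives $\mathbb{E}(f_{h}^{\ast }(x_{0})-f(x_{0}))^{2}\le g_{\rho _{f,x_{0}}}(w_{\rho })$. Second, for fixed nonnegative weights the map $\rho \mapsto g_{\rho }(w)$ is coordinatewise non-decreasing (from the definition (\ref{def gw})), and $\rho \ge \rho _{f,x_{0}}$ pointwise, so $g_{\rho _{f,x_{0}}}(w_{\rho })\le g_{\rho }(w_{\rho })$. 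Third, by the minimality property, $g_{\rho }(w_{\rho })=\min_{w}g_{\rho }(w)\le g_{\rho }(w_{0})$ for \emph{every} $w_{0}$ obeying (\ref{s2wx}). Hence it suffices to produce a single admissible $w_{0}$ with $g_{\rho }(w_{0})=O(n^{-2\beta /(2+2\beta )})$; note that neither the triangular-kernel form (\ref{eq th weights 001}) nor the oracle bandwidth $a$ is needed for this.

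I would take $w_{0}(x)=1/\bar{M}$ on a concentric square window $\mathbf{U}_{x_{0},\bar{h}}$ and $w_{0}(x)=0$ outside it, where $\bar{M}=\mathrm{card}\,\mathbf{U}_{x_{0},\bar{h}}$ and $\bar{h}$ is the largest multiple of $1/N$ with $\bar{h}\le \bar{c}\,n^{-1/(2+2\beta )}$, the constant $\bar{c}>0$ to be fixed last. Admissibility requires $\mathbf{U}_{x_{0},\bar{h}}\subseteq \mathbf{U}_{x_{0},h}$, i.e.\ $\bar{h}\le h$: under $h\ge c_{0}n^{-\alpha }$ with $\alpha <1/(2\beta +2)$ this holds for all large $n$ regardless of $\bar{c}$, whereas under $h=c_{0}n^{-1/(2+2\beta )}$ it holds as soon as $\bar{c}\le c_{0}$ --- this is exactly why a threshold on $c_{0}$ (the constant $c_{1}$) appears in the statement. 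One also has $N\bar{h}\asymp n^{\beta /(2+2\beta )}\to \infty $, so $\bar{M}=(2N\bar{h}+1)^{2}\ge 4N^{2}\bar{h}^{2}=4n\bar{h}^{2}$. Now split $g_{\rho }(w_{0})$ into the two summands of (\ref{def gw}). The variance-type term is $\sum_{x}w_{0}(x)^{2}f(x)\le \Gamma /\bar{M}\le \Gamma /(4n\bar{h}^{2})$ by (\ref{defi Gamma}). For the bias-type term, the local H\"older condition (\ref{Local Holder cond}) --- valid since $\mathbf{U}_{x_{0},\bar{h}}\subseteq \mathbf{U}_{x_{0},h+\eta }$ --- gives $\sum_{x}w_{0}(x)\rho (x)\le \bar{M}^{-1}\sum_{x\in \mathbf{U}_{x_{0},\bar{h}}}L\|x-x_{0}\|_{\infty }^{\beta }+\delta _{n}$, and grouping the pixels of $\mathbf{U}_{x_{0},\bar{h}}$ into the ``rings'' $\{\|x-x_{0}\|_{\infty }=k/N\}$, each of cardinality $8k$, and comparing $\sum_{k\le N\bar{h}}k^{\beta +1}$ with its integral shows this average is of order $\bar{h}^{\beta }$; thus the term is $O((\bar{h}^{\beta }+\delta _{n})^{2})$. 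Altogether $g_{\rho }(w_{0})\le c\,\bar{h}^{2\beta }+c\,\delta _{n}^{2}+\Gamma /(4n\bar{h}^{2})$. With $\bar{h}\asymp n^{-1/(2+2\beta )}$ both $\bar{h}^{2\beta }$ and $(n\bar{h}^{2})^{-1}$ are of order $n^{-2\beta /(2+2\beta )}$, and $\delta _{n}=O(n^{-\beta /(2+2\beta )})$ makes $\delta _{n}^{2}$ of the same order; a final choice of $\bar{c}$ (subject to $\bar{c}\le c_{0}$ in the critical regime) balances the leading terms and yields (\ref{s2ef2}).

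I do not anticipate a conceptual obstacle: once Theorem~\ref{Th weights 001} is available, the argument is a standard bias--variance trade-off for a kernel-type average, and the Poisson structure has already been folded into $g_{\rho }$ through $\mathbb{V}ar(Y(x))=f(x)$. The real work is bookkeeping: (i) making the discretization honest --- choosing $\bar{h}$ as an admissible multiple of $1/N$, dealing with $x_{0}$ near the boundary of $\mathbf{I}$ (where $\mathbf{U}_{x_{0},\bar{h}}$ only loses a bounded factor in cardinality), and checking $\mathbf{U}_{x_{0},\bar{h}}\subseteq \mathbf{U}_{x_{0},h}$ under precisely the two stated hypotheses; and (ii) carrying the constants through the ring sum $\sum_{k\le N\bar{h}}k^{\beta +1}$ carefully enough to recover the exact value $c_{1}=\left( \Gamma \frac{(\beta +2)(2\beta +2)}{8L^{2}\beta }\right) ^{1/(2+2\beta )}$ and not merely the rate. (In the critical regime $h=c_{0}n^{-1/(2+2\beta )}$ the argument above already gives the rate (\ref{s2ef2}) for every $c_{0}>0$, by taking $\bar{c}=\min \{c_{0},\bar{c}_{\mathrm{opt}}\}$; the sharper threshold $c_{1}$ in the statement presumably comes from instead bounding the oracle bandwidth $a$ directly.)
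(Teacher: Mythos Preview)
Your argument is correct and follows the same overall skeleton as the paper --- bound the MSE by $g_{\rho_{f,x_0}}(w_\rho)$, use $\rho\ge\rho_{f,x_0}$ to pass to $g_\rho(w_\rho)$, then use minimality of $w_\rho$ to replace it by $g_\rho$ at a convenient comparison weight --- but the comparison weight you choose is genuinely different. The paper does not use the flat average on a sub-window; instead it introduces the majorant $\overline{g}(w)=\bigl(\sum_x w(x)L\|x-x_0\|_\infty^\beta\bigr)^2+\Gamma\sum_x w(x)^2$, applies Theorem~\ref{Th weights 001} again to obtain its \emph{exact} minimizer $\overline{w}$ (a triangular-kernel weight with bandwidth $\overline{a}$), and proves a separate lemma computing $\overline{a}\sim c_3 n^{-\beta/(2\beta+2)}$ and $\overline{g}(\overline{w})\le c_4 n^{-2\beta/(2\beta+2)}$ via the same ring sums $\sum_{k\le N\overline{h}}k^{\beta+1}$ you mention. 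The constant $c_1$ in the statement is precisely the asymptotic value of $\overline{h}=(\overline{a}/L)^{1/\beta}$, and the hypothesis $c_0>c_1$ is used to guarantee that the support of $\overline{w}$ sits inside $\mathbf{U}_{x_0,h}$.

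What each approach buys: your uniform-weight comparison is shorter and, as you observe, actually delivers the rate (\ref{s2ef2}) for \emph{every} $c_0>0$ in the critical regime $h=c_0 n^{-1/(2\beta+2)}$, so in that sense it proves slightly more with less machinery. The paper's route, by solving for $\overline{a}$ explicitly, identifies the exact threshold $c_1$ appearing in the statement and yields the sharper leading constant in the risk bound; it also reuses the same lemma later when analyzing the data-driven filter $\widehat{f}_h$. Your parenthetical diagnosis of where $c_1$ comes from is exactly right.
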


For the proof of this theorem see Section \ref{Sec: proof of Th oracle 001}.

Recall that the bandwidth $h$ of order $n^{-\frac{1}{2+2\beta }}$ is
required to have the optimal minimax rate of convergence $O\left(n^{-\frac{2\beta }{%
2+2\beta }}\right) $ of the Mean Square Error for estimating the function $f
$ of local H\"{o}lder smoothness $\beta $  (cf. e.g.
 \cite{FanGijbels1996}). To better understand the adaptivity property of
the oracle $f_{h}^{\ast }(x_{0}),$ assume that the image $f$ at $x_{0}$ has local
H\"{o}lder smoothness $\beta $ (see \citep{Wh}) and that $h\geq c_{0}n^{-\alpha }$ with $0\leq
\alpha <\frac{1}{2\beta +2},$ which means that the radius $h>0$ of the search
window $U_{x_{0},h}$ has been chosen larger than the ``standard" $n^{-\frac{1%
}{2\beta +2}}.$ Then, by Theorem \ref{Th oracle 001}, the rate of convergence
of the oracle is still of order $n^{-\frac{\beta }{2+2\beta }}$. If we choose a sufficiently large
search window $U_{x_{0},h},$ then the oracle $f_{h}^{\ast }(x_{0})$ will
have a rate of convergence which depends only on the unknown maximal local
smoothness $\beta $ of the image $f.$ In particular, if $\beta $ is very
large, then the rate will be close to $n^{-1/2},$ which ensures a good
estimation of the flat regions in cases where the regions are indeed flat.
More generally, since Theorem \ref{Th oracle 001} is valid for arbitrary $%
\beta ,$ it applies for the maximal local H\"{o}lder smoothness $\beta _{x_{0}}$
at $x_{0},$ therefore the oracle $f_{h}^{\ast }(x_{0})$ will exhibit the
best rate of convergence of order $n^{-\frac{2\beta _{x_{0}}}{2+2\beta
_{x_{0}}}}$ at $x_{0}.$ In other words, the procedure adapts to the best
rate of convergence at each point $x_{0}$ of the image.

We justify by simulation results that the difference
between the oracle $f_{h}^{\ast }$ computed with $\rho(x) =\rho
_{f,x_{0}}(x)=\left\vert f\left(x\right) -f\left(x_{0}\right) \right\vert ,$
and the true image $f$, is extremely small (see Table\ \ref{Table oracle}).
This shows that, at least from the practical point of view, it is justified
to optimize the upper bound $g_{\rho _{f,x_{0}}}(w)$ instead of optimizing
the Mean Square Error $\mathbb{E}\left(f^{\ast }_h(x_{0})-f(x_{0})\right) ^{2}$ itself.

The estimate $f_{h}^*$ with the choice $\rho \left(x\right) =\rho_{f,x_0}\left(x\right) $ will be called oracle filter. In
particular for the oracle filter $f_{h}^{\ast },$ under the conditions of
Theorem \ref{Th oracle 001}, we have%
\begin{equation*}
\mathbb{E}\left(f_{h}^{\ast }(x_{0})-f(x_{0})\right) ^{2}\leq g_{\rho}\left(w_{\rho}\right) \leq cn^{-\frac{2\beta }{2+2\beta }}.
\end{equation*}

Now, we turn to the study of the convergence of the Optimal Weights Filter.
 Due to the difficulty in dealing with the dependence of
  the weights we shall consider a slightly modified version
   of the proposed algorithm:  we divide  the set of pixels
   into two disjoint parts,  so that the weights are constructed from one part,  and  the estimation of the target function is a weighted mean  along  the other part. More precisely,
   we proceed as follows.
Assume that $x_{0}\in \mathbf{I}$. Denote
\begin{equation*}
\mathbf{I}'_{x_{0}}=\left\{ x_{0}+\left(\frac{i}{N},\frac{j}{N}\right) \in
\mathbf{I}:i+j\text{ is pair }\right\} ,
\end{equation*}%
and $\mathbf{I}''_{x_{0}}=\mathbf{I}\diagdown \mathbf{I}'_{x_{0}}.$
Denote $\mathbf{U}_{x_{0},h}^{\prime }=\mathbf{U}_{x_{0},h}\cap \mathbf{I}'_{x_{0}}$ and $\mathbf{U}_{x,\eta }^{\prime\prime }=\mathbf{U}%
_{x,\eta }\cap \mathbf{I}''_{x_{0}}.$
Since $%
\mathbf{E}|Y(x)-Y(x_{0})|^{2}=|f(x)-f(x_{0})|^{2}+f(x_0)+f(x)$, an obvious
estimate of $\mathbf{E}\left\vert Y(x)-Y(x_{0})\right\vert ^{2}$ is given by
\begin{equation*}
\frac{1}{\mathrm{card}\, \mathbf{U}''_{x_{0},\eta }}\sum_{y\in \mathbf{U}''_{x_{0},\eta
}}\left\vert Y(y)-Y(Ty)\right\vert ^{2},
\end{equation*}%
where $T=T_{x_{0},x}$ is the translation mapping: $Ty=x+(y-x_{0})$.
Define an estimated similarity function $\widehat{\rho }_{x_0}$ by
\begin{equation}
\widehat{\rho }_{x_0}(x)=\left(\left(\frac{1}{\mathrm{card}\, \mathbf{U}''_{x_{0},\eta }}\sum_{y\in {\mathbf{U}''_{x_{0},\eta }}}|Y(y)-Y(T
y)|^{2}\right) ^{1/2}-\sqrt{2\overline{f}(x_0)} \right)^+,  \label{s3rx}
\end{equation}%
where
\begin{equation*}
\overline{f}(x_0)=\frac{1}{\mathrm{card}\, \mathbf{U}''_{x_{0},\eta }}\sum_{y\in {\mathbf{U''}%
_{x_{0},h }}}Y(y).
\end{equation*}
The Optimal Weights Poisson Noise Filter (OWPNF) proposed in this paper is defined by
\begin{equation}
\widehat{f}_{h }(x_{0})=\sum_{x\in \mathbf{U}_{x_{0},h}^{\prime }}%
\widehat{w}(x)Y(x),  \label{s3fh}
\end{equation}%
where
\begin{equation}
\widehat{w}=\arg \min_{w}\left(\sum_{x\in \mathbf{U}_{x_{0},h}^{\prime
}}w(x)\widehat{\rho }_{x_0}(x,x_{0})\right) ^{2}+\overline{f}(x_0)\sum_{x\in
\mathbf{U}_{x_{0},h}^{\prime }}w^{2}(x).  \label{s3ww}
\end{equation}

In the next theorem, we prove that with the choice $h=c_{0}n^{-\frac{1}{%
2\beta +2}}$ and $\eta =c_{2}n^{-\frac{1}{2\beta +2}},$ the Mean Square Error of the estimator $\widehat{f}_{h}(x_{0})$ converges nearly at the rate
$n^{-\frac{2\beta }{2\beta +2}}$ which is the usual optimal rate of
convergence for a given H\"{o}lder smoothness $\beta >0$ (see e.g. Fan and
Gijbels (1996 \citep{FanGijbels1996})).

\begin{theorem}
\label{Th adapt 001}Assume that
$h=c_{0}n^{-\frac{1}{2\beta +2}}$ with $c_{0}>c_{1}=\left(\Gamma \frac{%
\left(\beta +2\right) \left(2\beta +2\right) }{8L^{2}\beta }\right) ^{%
\frac{1}{2\beta +2}}$, and that $\eta =c_{2}n^{-\frac{1}{2\beta +2}}.$ Suppose that the function $f(x)\geq \frac{1}{\ln n}$
satisfies the local H\"{o}lder condition (\ref{Local Holder cond}).  Then
\begin{equation}
\mathbb{E}(\widehat{f}_{h}(x_{0})-f(x_{0}))^{2}=O\left(n^{-\frac{2\beta }{%
2\beta +2}}\ln ^{2}n\right) .  \label{s3ef}
\end{equation}
\end{theorem}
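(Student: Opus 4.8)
The plan is to reduce the analysis of the data-driven filter to the oracle bound of Theorem~\ref{Th oracle 001} via a ``good event'' argument that exploits the checkerboard splitting $\mathbf{I}=\mathbf{I}'_{x_0}\cup\mathbf{I}''_{x_0}$. The key structural fact is that the estimated weights $\widehat w$ depend only on the counts $\{Y(z):z\in\mathbf{I}''_{x_0}\}$: for $x\in\mathbf{U}'_{x_0,h}$ and $y\in\mathbf{U}''_{x_0,\eta}$ one has $Ty=x+(y-x_0)\in\mathbf{I}''_{x_0}$, so both $\widehat\rho_{x_0}$ and $\overline f(x_0)$ are functionals of $\mathbf{I}''_{x_0}$-counts, whereas $\widehat f_h(x_0)=\sum_{x\in\mathbf{U}'_{x_0,h}}\widehat w(x)Y(x)$ uses only $\mathbf{I}'_{x_0}$-counts. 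Since Poisson counts over disjoint sets are independent, $\widehat w$ is independent of $\{Y(x):x\in\mathbf{U}'_{x_0,h}\}$; conditioning on $\widehat w$ and using $\mathbb{E}\,Y(x)=\mathbb{V}ar\,Y(x)=f(x)$ yields
\begin{equation*}
\mathbb{E}\big[(\widehat f_h(x_0)-f(x_0))^2\mid\widehat w\big]=\Big(\sum_{x}\widehat w(x)(f(x)-f(x_0))\Big)^2+\sum_{x}\widehat w(x)^2f(x)\le g_{\rho_{f,x_0}}(\widehat w),
\end{equation*}
with $g$ as in~(\ref{def gw}) but taken over $\mathbf{U}'_{x_0,h}$, so that $\mathbb{E}(\widehat f_h(x_0)-f(x_0))^2\le\mathbb{E}\,g_{\rho_{f,x_0}}(\widehat w)$ and it remains to control the right-hand side.

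Next I would introduce the good event $\Omega$ on which $|\overline f(x_0)-f(x_0)|\le\delta_n$ and $|\widehat\rho_{x_0}(x)-\rho_{f,x_0}(x)|\le\delta_n$ for every $x\in\mathbf{U}'_{x_0,h}$, with $\delta_n=c_3\,n^{-\beta/(2\beta+2)}\ln n$, and show that $P(\Omega^c)=O(n^{-k})$ for any prescribed $k$ (by taking $c_3$ large enough). The ingredients are a Bernstein inequality for sums of independent random variables with finite exponential moments, applied to $\overline f(x_0)$ (for which $\mathbb{V}ar\,Y(y)=f(y)\le\Gamma$) and to $S_x:=\frac{1}{m}\sum_{y\in\mathbf{U}''_{x_0,\eta}}|Y(y)-Y(Ty)|^2$ (whose mean is $\rho_{f,x_0}(x)^2+f(x_0)+f(x)$ and whose variance is $O(1)$ because the intensities are bounded by $\Gamma$), followed by a union bound over the $M'=\mathrm{card}\,\mathbf{U}'_{x_0,h}\le n$ pixels. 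Since $m=(2N\eta+1)^2\asymp n^{\beta/(\beta+1)}$, the raw deviations are $O(n^{-\beta/(2\beta+2)}\sqrt{\ln n})$; the two square roots and the positive part in~(\ref{s3rx}) are then absorbed using $|\sqrt a-\sqrt b|\le|a-b|/\sqrt{\min(a,b)}$ together with the local H\"{o}lder bound $|f(x)-f(x_0)|\le Lh^\beta$ from~(\ref{Local Holder cond}), and it is precisely the factor $1/\sqrt{\min f}\le\sqrt{\ln n}$ permitted by the hypothesis $f\ge 1/\ln n$ that upgrades $\sqrt{\ln n}$ to the $\ln n$ in $\delta_n$, hence producing the $\ln^2 n$ in~(\ref{s3ef}). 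A minor technical point is that the summands of $S_x$ are not exactly independent (the pairs $(y,Ty)$ may chain), but the dependency graph has maximal degree $2$, so one splits the sum into three independent subsums and applies Bernstein to each. On $\Omega$ one also has $\overline f(x_0)\ge f(x_0)/2>0$ for $n$ large, so by Theorem~\ref{Th weights 001} (applied with the constant $\overline f(x_0)$ in place of $f$) the minimizer $\widehat w$ in~(\ref{s3ww}) is well defined and unique.

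On $\Omega$ I would bound $g_{\rho_{f,x_0}}(\widehat w)$ at the oracle rate, arguing as in the proof of Theorem~\ref{Th oracle 001} but on the sub-grid and with the log-inflated $\delta_n$. The bias proxy is controlled directly by H\"{o}lder continuity: $\sum_x\widehat w(x)\rho_{f,x_0}(x)\le\max_{x\in\mathbf{U}_{x_0,h}}|f(x)-f(x_0)|\le Lh^\beta$, whence $\big(\sum_x\widehat w(x)\rho_{f,x_0}(x)\big)^2=O(n^{-2\beta/(2\beta+2)})$. For the variance proxy $\sum_x\widehat w(x)^2f(x)$, note that $\max_{x\in\mathbf{U}_{x_0,h}}f(x)\le(1+o(1))f(x_0)$, and use the minimality of $\widehat w$: writing $\widehat g(w)=\big(\sum_xw(x)\widehat\rho_{x_0}(x)\big)^2+\overline f(x_0)\sum_xw(x)^2$ for the objective of~(\ref{s3ww}), comparing with the uniform weight $w^{u}\equiv 1/M'$ on $\mathbf{U}'_{x_0,h}$, and using the bounds $\widehat\rho_{x_0}(x)\le Lh^\beta+\delta_n=:\rho_{\max}$ and $\overline f(x_0)\le 2\Gamma$ (both valid on $\Omega$), one gets
\begin{equation*}
\overline f(x_0)\sum_{x}\widehat w(x)^2\le\widehat g(\widehat w)\le\widehat g(w^{u})\le\rho_{\max}^2+\frac{2\Gamma}{M'}=O\!\big(n^{-2\beta/(2\beta+2)}\ln^2 n\big);
\end{equation*}
dividing by $\overline f(x_0)\ge f(x_0)/2$ and using $\max_{x\in\mathbf{U}_{x_0,h}}f(x)=O(f(x_0))$ yields $\sum_x\widehat w(x)^2f(x)=O(n^{-2\beta/(2\beta+2)}\ln^2 n)$, so $g_{\rho_{f,x_0}}(\widehat w)=O(n^{-2\beta/(2\beta+2)}\ln^2 n)$ on $\Omega$. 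Off $\Omega$, since $\widehat w$ is a probability vector with $\rho_{f,x_0}(x)\le 2\Gamma$ and $f(x)\le\Gamma$, one has deterministically $g_{\rho_{f,x_0}}(\widehat w)\le 4\Gamma^2+\Gamma=O(1)$. Combining the two pieces,
\begin{equation*}
\mathbb{E}(\widehat f_h(x_0)-f(x_0))^2\le\mathbb{E}\big[g_{\rho_{f,x_0}}(\widehat w)\,\mathbbm{1}_{\Omega}\big]+O(1)\cdot P(\Omega^c)=O\!\big(n^{-2\beta/(2\beta+2)}\ln^2 n\big),
\end{equation*}
which is~(\ref{s3ef}).

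The step I expect to be the main obstacle is the concentration in the second paragraph: the uniform (in $x\in\mathbf{U}'_{x_0,h}$) control of $|\widehat\rho_{x_0}(x)-\rho_{f,x_0}(x)|$ with the correct power of $\ln n$. This requires a clean Bernstein-type estimate for the Poisson functionals $|Y(y)-Y(Ty)|^2$ in spite of their mild dependence, and a careful propagation of errors through the two square roots and the positive part in~(\ref{s3rx}), where the lower bound $f\ge 1/\ln n$ is essential to keep $1/\sqrt{\min f}$ bounded. The remaining ingredients --- the conditioning/independence reduction, the H\"{o}lder estimates, and the comparison of $\widehat w$ with the uniform weight --- are routine once $\Omega$ is in place.
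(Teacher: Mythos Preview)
Your proof follows essentially the same architecture as the paper's (Section~\ref{Sec: proof of Th adapt 001}): both exploit the checkerboard split to make $\widehat w$ measurable with respect to $\{Y(z):z\in\mathbf I''_{x_0}\}$ and hence independent of the averaging data, then bound the conditional MSE by $g_{\rho_{f,x_0}}(\widehat w)$, introduce a high-probability ``good event'' on which the empirical similarities are controlled at scale $n^{-\beta/(2\beta+2)}$ times a logarithmic factor, use the minimality of $\widehat w$ in the objective~(\ref{s3ww}) to pass to a tractable comparison weight, and finish with a crude deterministic bound on the complementary event.

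The only notable difference is the choice of comparison weight: the paper (Lemma~\ref{lm5_6}) replaces $\widehat w$ by the H\"older-oracle weight $\overline w_1$ and then invokes Lemma~\ref{lm5_2} to bound $\overline g_1(\overline w_1)$, whereas you replace $\widehat w$ by the uniform weight $w^u\equiv 1/M'$ and compute $\widehat g(w^u)$ directly. Your route is slightly more elementary (no second appeal to the oracle lemma) and lands exactly on the $\ln^2 n$ factor in~(\ref{s3ef}). For the concentration step the paper uses the Borovkov-type inequality of Lemma~\ref{Lemma Borovk} in place of your Bernstein/dependency-graph argument, but the two are interchangeable here. One small remark: your good event asks for the two-sided bound $|\widehat\rho_{x_0}(x)-\rho_{f,x_0}(x)|\le\delta_n$, but your argument only ever uses the upper bound $\widehat\rho_{x_0}(x)\le Lh^\beta+\delta_n$ (the bias proxy you control directly via H\"older), which is exactly what the paper establishes in Lemma~\ref{lm5_4}; the lower bound is neither needed nor, in general, true at this scale, so you may as well define $\Omega$ by the one-sided inequality.
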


For the proof of this theorem see Section \ref{Sec: proof of Th adapt 001}.

\section{\label{Sec:simulations}Simulation}

For simulations we use the following usual set of $256\times 256$ images:
Spots$[0.08,4.99]$, Galaxy$[0,5]$, Ridges$[0.05,0.85]$, Barbara$[0.93,15.73]$
and Cells
$[0.53,16.93]$ (see the first row of Figure \ref{Fig oracle}). All
the images are included in the package "Denoising software for Poisson data"
which can be downloaded at http://www.cs.tut.fi/~foi/invansc/. We first do
the simulations with the oracle filter which shows excellent visual quality
of the reconstructed image. We next present our denoising algorithm and the
numerical results which are compared with related recent works (%
\citep{makitalo2011optimal} and \citep{ZHANG2008WAVELETS}). Each of the
aforementioned articles proposes an algorithm specifically designed for
Poisson noise removal (EUI+BM3D, MS-VST + $7/9$ and MS-VST + B3
respectively).

We evaluate the performance of a denoising filter $\widehat{f}$ by using the
Normalized Mean Integrated Square Error (NMISE) defined by
\begin{equation*}
NMISE=\frac{1}{n^{\ast }}\sum_{f(x)>0,x\in \mathbf{I}}\left(\frac{(\widehat{f}%
(x)-f(x))^{2}}{f(x)}\right) ,
\end{equation*}%
where $\widehat{f}(x)$ are the estimated intensities, $f(x)$ are the
respective true vales, and $n^{\ast }=\mathrm{card}\, \{f(x):f(x)>0,x\in \mathbf{I}\}$.

\subsection{Oracle Filter}

In this section we present the denoising algorithm called Oracle Filter, and
show its performance on some test images.

\noindent\rule{\textwidth}{.2pt}

\textbf{Algorithm:}\quad Oracle Filter

\noindent\rule{\textwidth}{.2pt}

Repeat for each $x_{0}\in \mathbf{I}$

\quad \quad Let $a=1$ (give the initial value of a)

\quad \quad compute$\;{\rho }(x_{i})$ by (\ref{empir simil func})

\quad \quad reorder ${\rho }(x_{i})$ as increasing sequence

\quad \quad loop from $k=1$ to $M$

\quad \quad \quad  if $\sum_{i=1}^{k}{\rho }(x_{i})>0$

\quad \quad \quad \quad   if $\frac{1+\sum_{i=1}^{k}{\rho }^{2}(x_{i})/f(x_i)}{\sum_{i=1}^{k}\rho (x_{i})/f(x_i)}\geq {\rho }(x_{k})$ then $a=\frac{1+\sum_{i=1}^{k}{\rho }^{2}(x_{i})/f(x_i)}{\sum_{i=1}^{k}\rho (x_{i})/f(x_i)}\geq {\rho }(x_{k})$

\quad \quad \quad \quad  else quit loop

\quad \quad \quad  else continue loop

\quad \quad end loop

\quad \quad compute $w(x_{i})=\frac{(a-{\rho }(x_{i}))^{+}/f(x_i)}{\sum_{x_{i}\in
\mathbf{U}_{x_{0},h}}(a-{\rho }(x_{j}))^{+}/f(x_j)}$

\quad \quad compute $f_h^{\ast }(x_{0})=\sum_{x_{i}\in \mathbf{U}%
_{x_{0},h}}w(x_{i})Y(x_{i})$. \bigskip

\noindent\rule{\textwidth}{.2pt}

\noindent \rule{0pt}{0.2pt} We calculate the optimal weights from the
original image  and compute the oracle estimate from the observed image
contaminated by the Poisson noise. For choosing the convenient size of the
search windows, we do numerical experiments with different window sizes (see
Table \ref{Table oracle}). The  results show that the difference between
the oracle estimator $f_{h}^{\ast }$ and the true value $f$ is extremely
small. In Figure \ref{Fig oracle}, the second row illustrates the visual
quality of the restored images by the Oracle Filter with $M=19\times 19$. We can
see that almost all the details have been retained.

\begin{table}
\renewcommand{\arraystretch}{0.6} \vskip3mm {\fontsize{8pt}{\baselineskip}%
\selectfont
\caption{NMISE values when oracle estimator $f_{h}^{\ast }$ is applied with
different values of $M$.}
\label{Table oracle}
\begin{tabular}{l|rrrrrrr}
\hline
Size & $7\times7$ & $9\times9$ & $11\times11$ & $13\times13$ & $15\times15$
& $17\times17$ & $19\times19$ \\ \hline
Spots[0.08,4.99] & 0.0302 & 0.0197 & 0.0166 & 0.0139 & 0.0112 & 0.0098 &
0.0104 \\
Galaxy[0,5] & 0.0284 & 0.0208 & 0.0165 & 0.0144 & 0.0122 & 0.0107 & 0.0093
\\
Ridges[0.05,0.85] & 0.0239 & 0.0178 & 0.0131 & 0.0109 & 0.0098 & 0.0085 &
0.0074 \\
Barbara[0.93,15.73] & 0.0510 & 0.0399 & 0.0304 & 0.0248 & 0.0208 & 0.0195 &
0.0174 \\
Cells[0.53,16.93] & 0.0422 & 0.0323 & 0.0257 & 0.0216 & 0.0191 & 0.0164 &
0.0146 \\ \hline
\end{tabular}
} \vskip1mm
\end{table}

\begin{figure}[tbp]
\renewcommand{\arraystretch}{0.5} \addtolength{\tabcolsep}{-6pt} \vskip3mm {%
\fontsize{8pt}{\baselineskip}\selectfont
\begin{tabular}{ccccc}
\includegraphics[width=0.20\linewidth]{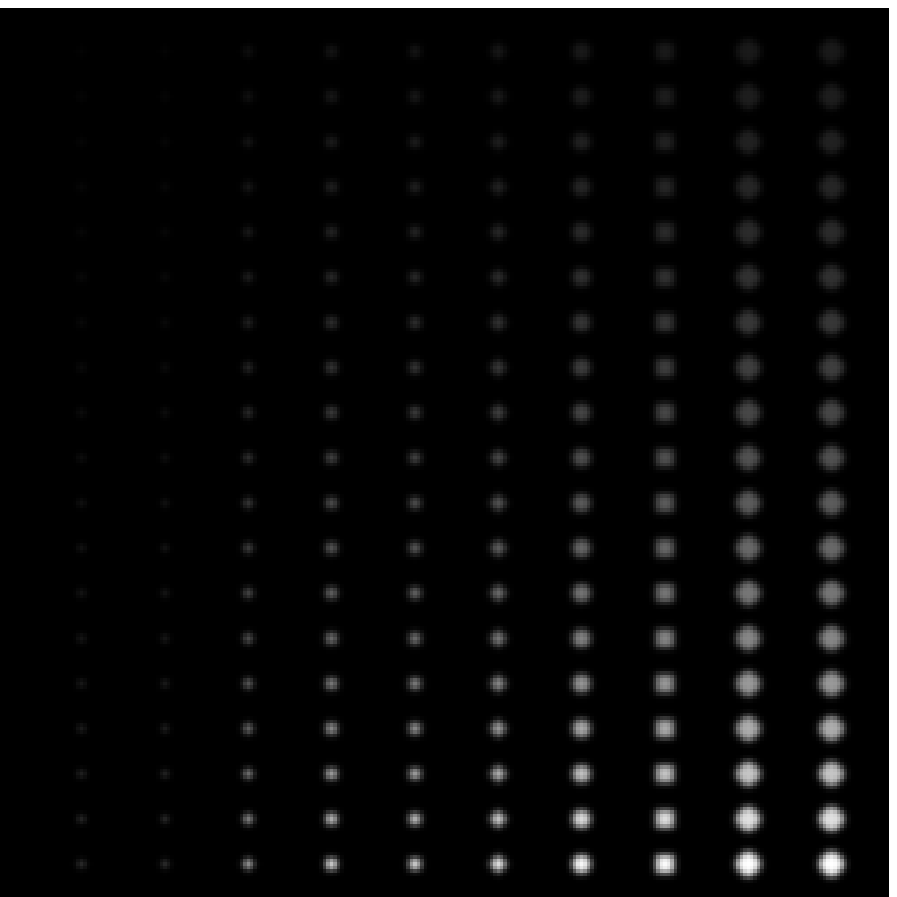} & %
\includegraphics[width=0.20\linewidth]{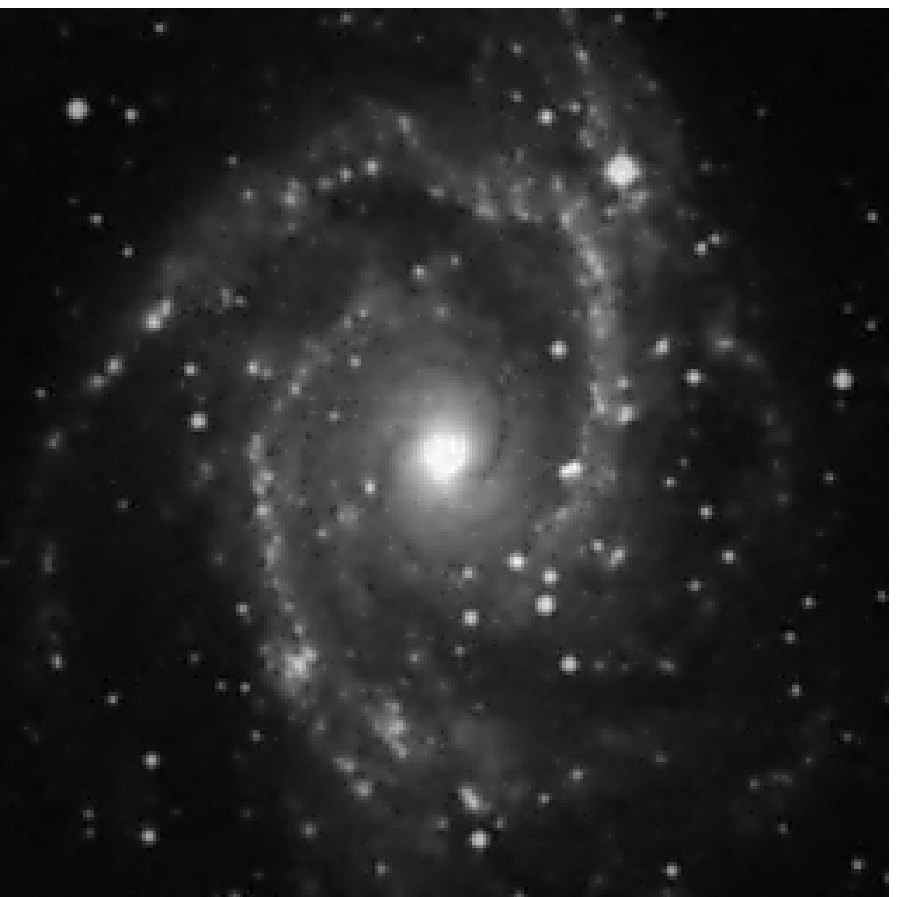} & %
\includegraphics[width=0.20\linewidth]{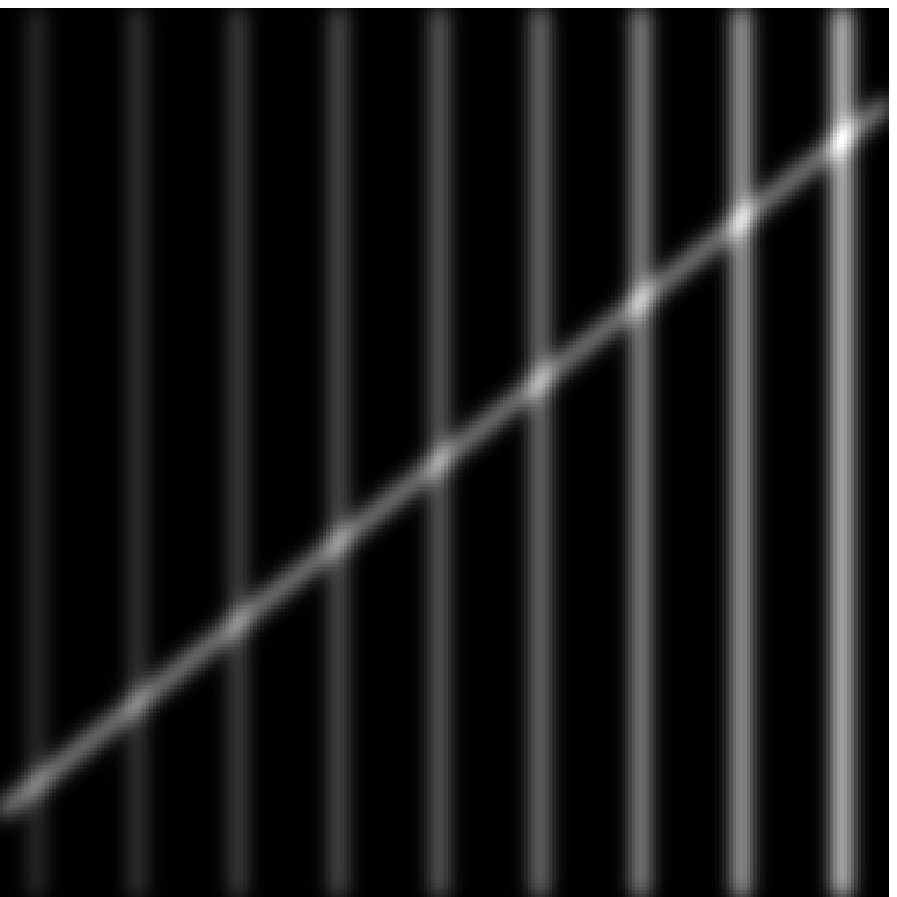} & %
\includegraphics[width=0.20\linewidth]{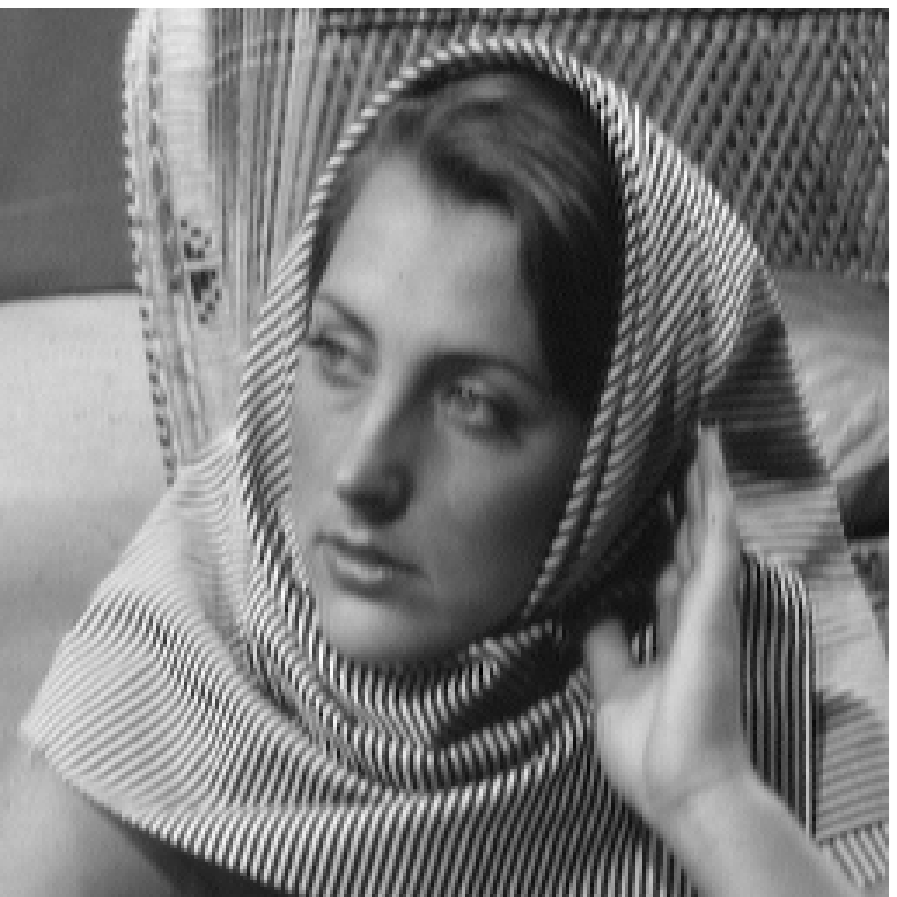} & %
\includegraphics[width=0.20\linewidth]{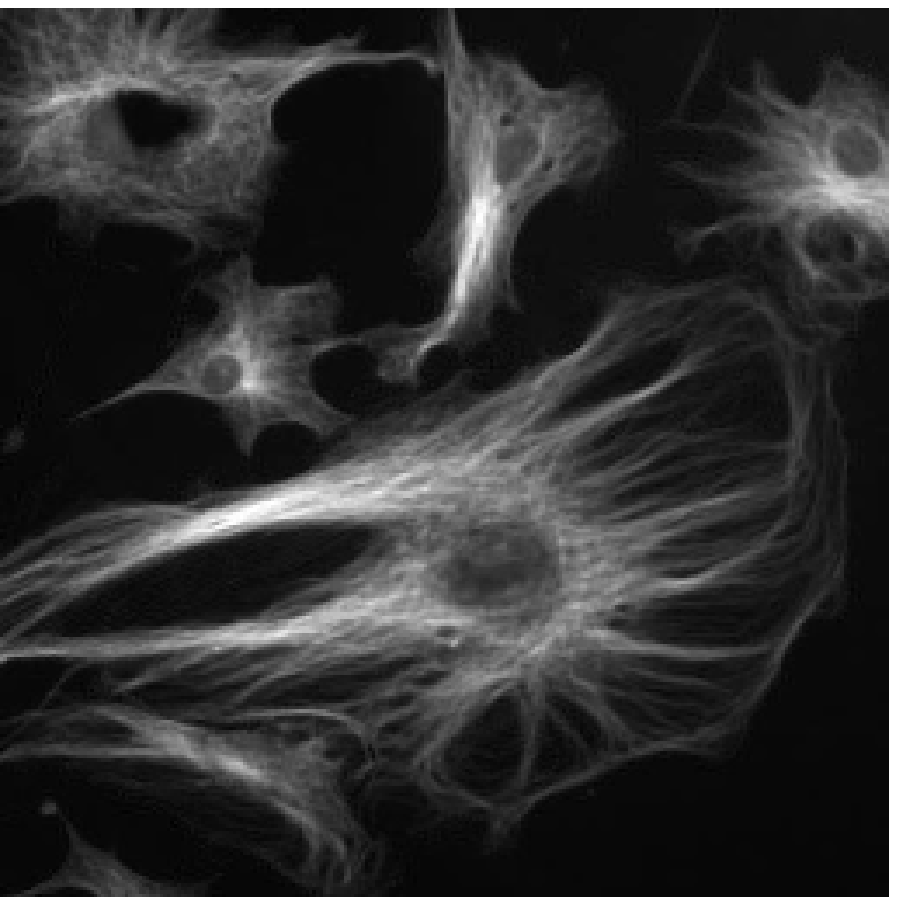} \\
\includegraphics[width=0.20\linewidth]{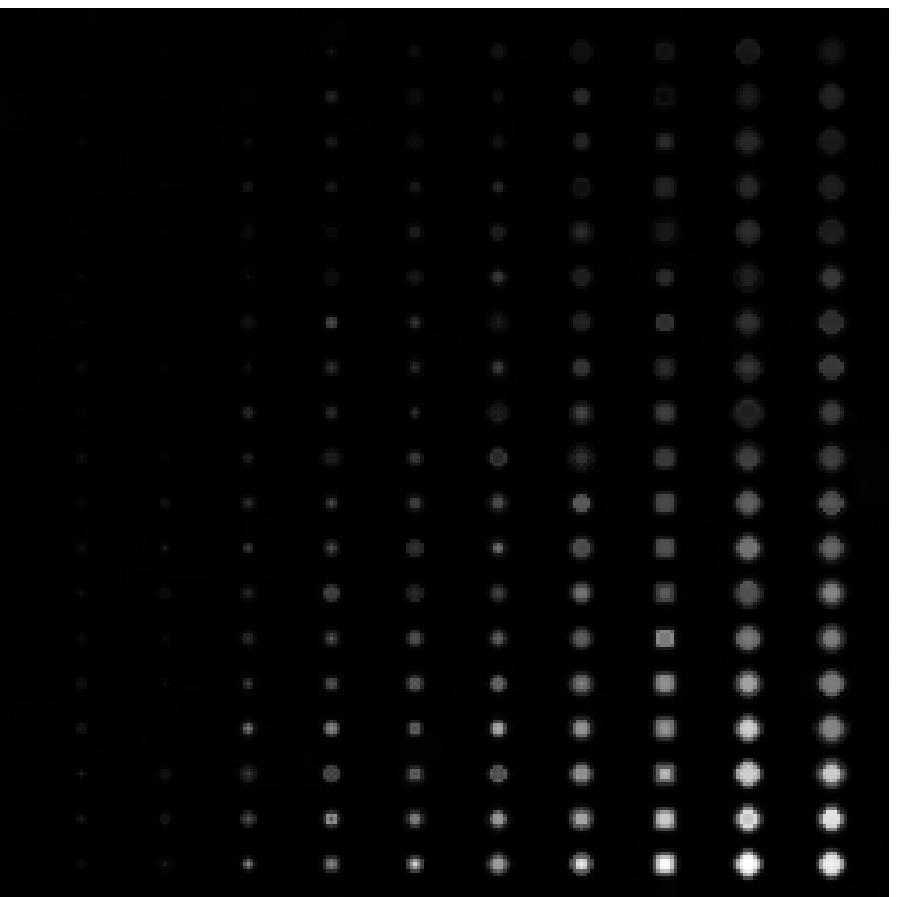} & %
\includegraphics[width=0.20\linewidth]{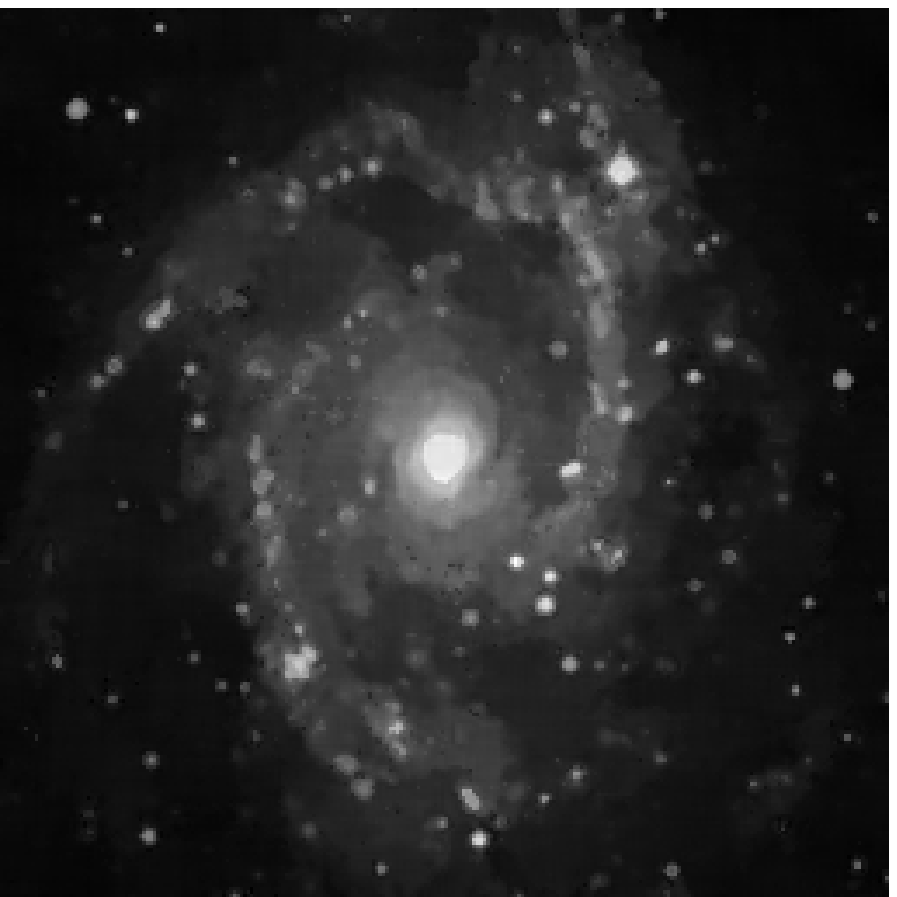} & %
\includegraphics[width=0.20\linewidth]{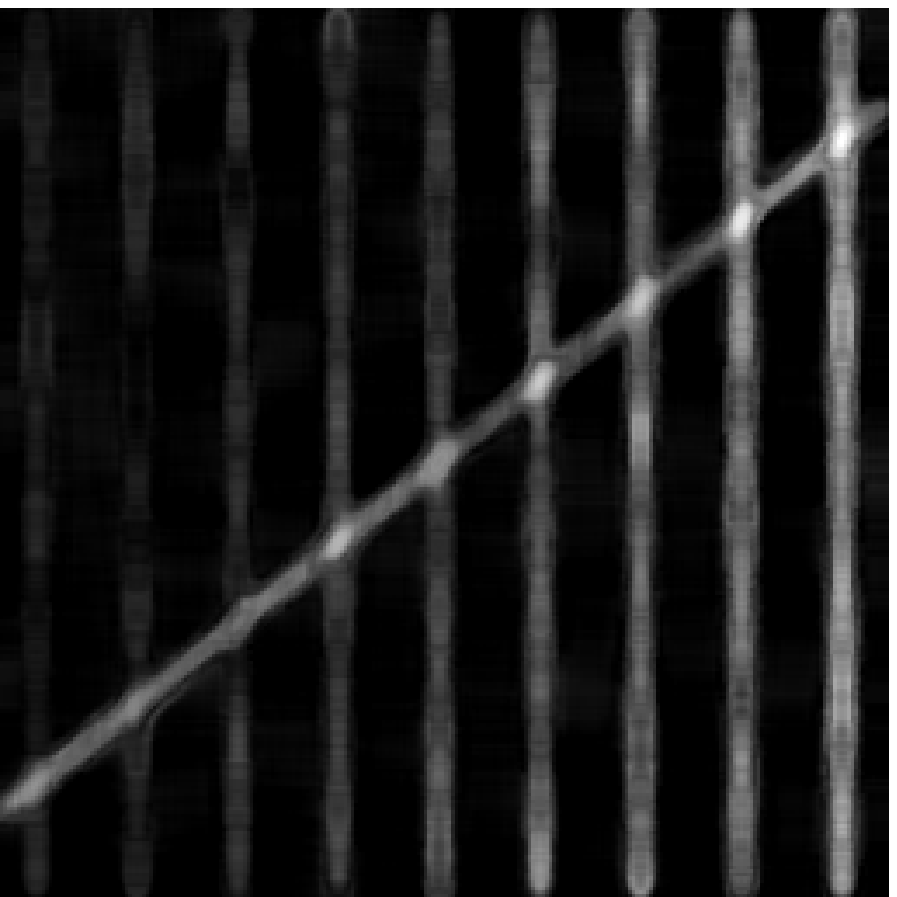} & %
\includegraphics[width=0.20\linewidth]{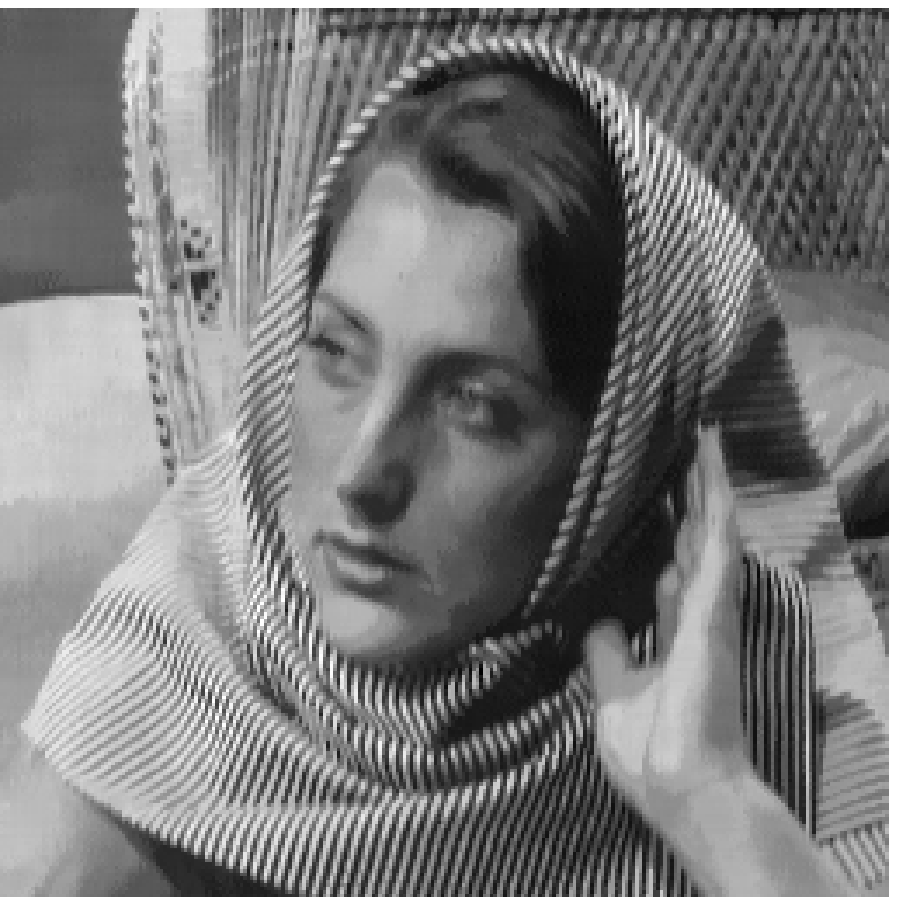} & %
\includegraphics[width=0.20\linewidth]{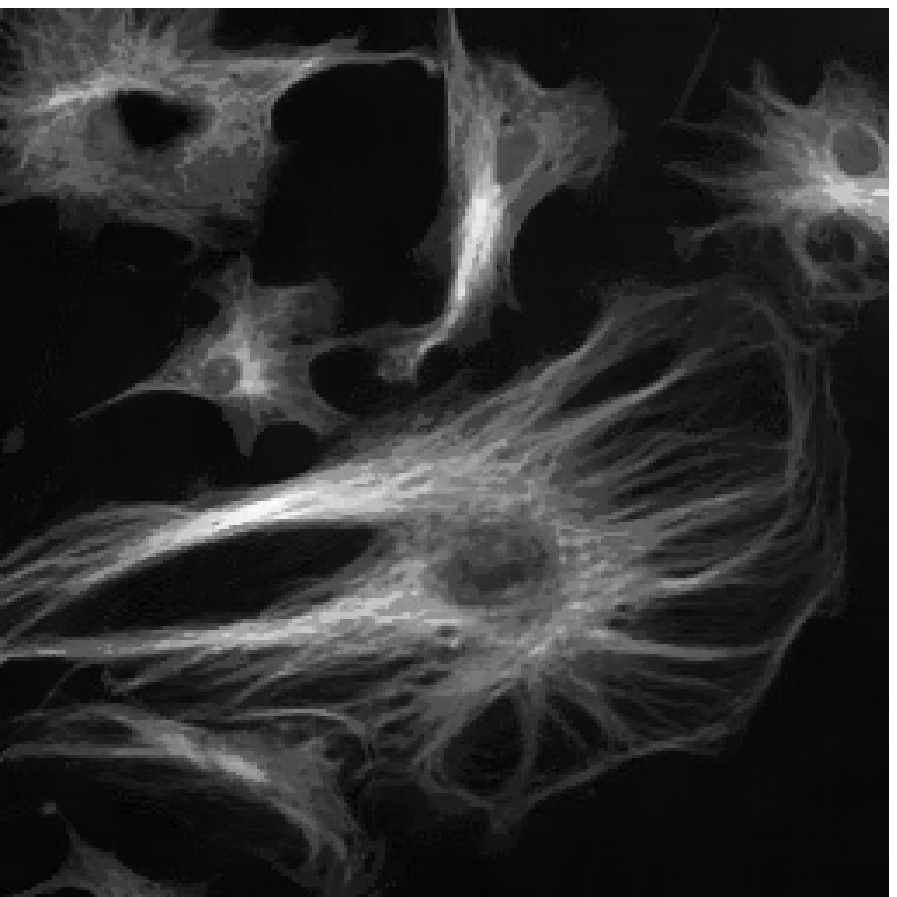} \\
(a) Spots & (b) Galaxy & (c) Ridges & (d) Barbara & (e) Cells%
\end{tabular}
}
\caption{{\protect\small The first row is the test images original, and the
second row is the images restored by Oracle Filter with $M=19\times19$. }}
\label{Fig oracle}
\end{figure}

\subsection{Performance of the Optimal Weights Poisson Noise Filter%
}

Throughout the simulations, we use the following algorithm for computing the
Optimal Weights Poisson Noise Filter $\widehat{f}_{h}(x_{0}).$ The input
values of the algorithm are $Y\left(x\right) ,$ $x\in \mathbf{I}$ (the
image) and two numbers $m=\left(2N\eta+1\right) \times \left(2N\eta+1\right) $ and
$M=\left(2Nh+1\right) \times \left(2Nh+1\right) $.
%(where $2p+1$ and $2d+1$ determine the widths of the similarity window and of the search window respectively).
In order to improve the results we introduce a smoothed
version of the estimated similarity distance%
\begin{equation}
\widehat{\rho }_{\kappa,x_0}(x)=\left(\sqrt{\sum_{y\in
\mathbf{U}_{x_{0},\eta }}\kappa \left(y\right) \left\vert
(Y(y)-Y(Ty)\right\vert ^{2}}-\sqrt{2\overline{f}(x_{0})}\right) ^{+},
\label{empir simil func}
\end{equation}%
where
\begin{equation}
\kappa \left(y\right) =\frac{K(y)}{\sum_{y^{\prime }\in \mathbf{U}%
_{x_{0},\eta }}K(y^{\prime })}.
\end{equation}%
As smoothing kernels $K(y)$ we can use the Gaussian kernel
\begin{equation}
K_{g}(y,h)=\exp \left(-\frac{N^{2}\Vert y-x_{0}\Vert _{2}^{2}}{2h^{2}}%
\right) ,  \label{gaussian kernel}
\end{equation}%
the following kernel: for $y \in \mathbf{U}_{x_0,\eta}$,
\begin{equation}
K_{0}\left( y\right) =\sum_{k=\max(1,j)}^{N\eta}\frac{1}{(2k+1)^2}
 \label{kernel zero}
\end{equation}%
if $\|y-x_0\|_{\infty}=\frac{j}{N}$ for some $j\in \{0,1,\cdots,N\eta\}$,
 and  the
rectangular kernel%
\begin{equation}
K_{r}\left(y\right) =\left\{
\begin{array}{ll}
\frac{1}{\mathrm{card}\, \mathbf{U}_{x_{0},\eta }}, & y\in \mathbf{U}_{x_{0},\eta }, \\
0, & \text{otherwise.}%
\end{array}%
\right.  \label{rect kernel}
\end{equation}%
The best numerical results are obtained using $K(y) =K_{0}(y)$ in the
definition of $\widehat{\rho }_{\kappa,x_0}$. Also note that throughout the paper, we
symmetrize the image near the frontier.

We present below the denoising algorithm which realizes OWPNF and shows its performance on some test images.

\bigskip

\noindent\rule{\textwidth}{.2pt}

\textbf{Algorithm:}\quad Optimal Weights Poisson Noise Filter (OWPNF)

\noindent\rule{\textwidth}{.2pt}

\textbf{First step:}

Repeat for each $x_{0}\in \mathbf{I}$

\quad \quad Let $a=1$  (give the initial value of a)

\quad \quad compute$\;\widehat{\rho }_{\kappa,x_0}(x_{i})$ by (\ref{empir simil func})

\quad \quad reorder $\widehat{\rho }_{\kappa,x_0}(x_{i})$ as increasing sequence

\quad \quad loop from $k=1$ to $M$

\quad \quad \quad \quad if $\sum_{i=1}^{k}\widehat{\rho }_{\kappa,x_0}(x_{i})>0$

\quad \quad \quad \quad \quad \quad if $\frac{\overline{f}%
(x_0)+\sum_{i=1}^{k}\widehat{\rho }_{\kappa,x_0}^{2}(x_{i})}{\sum_{i=1}^{k}\widehat{%
\rho }_{\kappa}(x_{i})}\geq \widehat{\rho }_{\kappa,x_0}(x_{k})$ then $a=\frac{\overline{f}%
(x_0)+\sum_{i=1}^{k}\widehat{\rho }_{\kappa,x_0}^{2}(x_{i})}{\sum_{i=1}^{k}\widehat{%
\rho }_{\kappa}(x_{i})}$

\quad \quad \quad \quad \quad \quad else quit loop

\quad \quad \quad \quad else continue loop

\quad \quad end loop

\quad \quad compute $w(x_{i})=\frac{(a-\widehat{\rho }_{\kappa,x_0}(x_{i}))^{+}}{%
\sum_{x_{i}\in \mathbf{U}_{x_{0},h}}(a-\widehat{\rho }_{\kappa,x_0}(x_{j}))^{+}}$

\quad \quad compute $\widehat{f}^{\prime }(x_{0})=\sum_{x_{i}\in \mathbf{U}%
_{x_{0},h}}w(x_{i})Y(x_{i})$.

\textbf{Second step:}

For each $x_0\in \mathbf{I}$,
compute $\gamma(x_0)=\frac{1}{M}\sum_{x\in \mathbf{U}_{x_0,h}}\widehat{f}^{\prime }(x)$

\quad \quad If $\gamma(x_0)\leq 5$

\quad \quad compute $\widehat{f}(x_{0})= \frac{\sum _{\|x-x_0\|\leq d/N} K_{g}(x,H)%
\widehat{f}^{\prime }(x)}{\sum _{\|x-x_0\|\leq d/N} K_{g}(x,H)}$

\quad \quad else $\widehat{f}(x_{0})=\widehat{f}^{\prime }(x_0).$

\bigskip

\noindent\rule{\textwidth}{.2pt}

\begin{center}
\begin{figure}[tbp]
\renewcommand{\arraystretch}{0.5} \addtolength{\tabcolsep}{-6pt} \vskip3mm {%
\fontsize{8pt}{\baselineskip}\selectfont
\begin{tabular}{ccccc}
\includegraphics[width=0.20\linewidth]{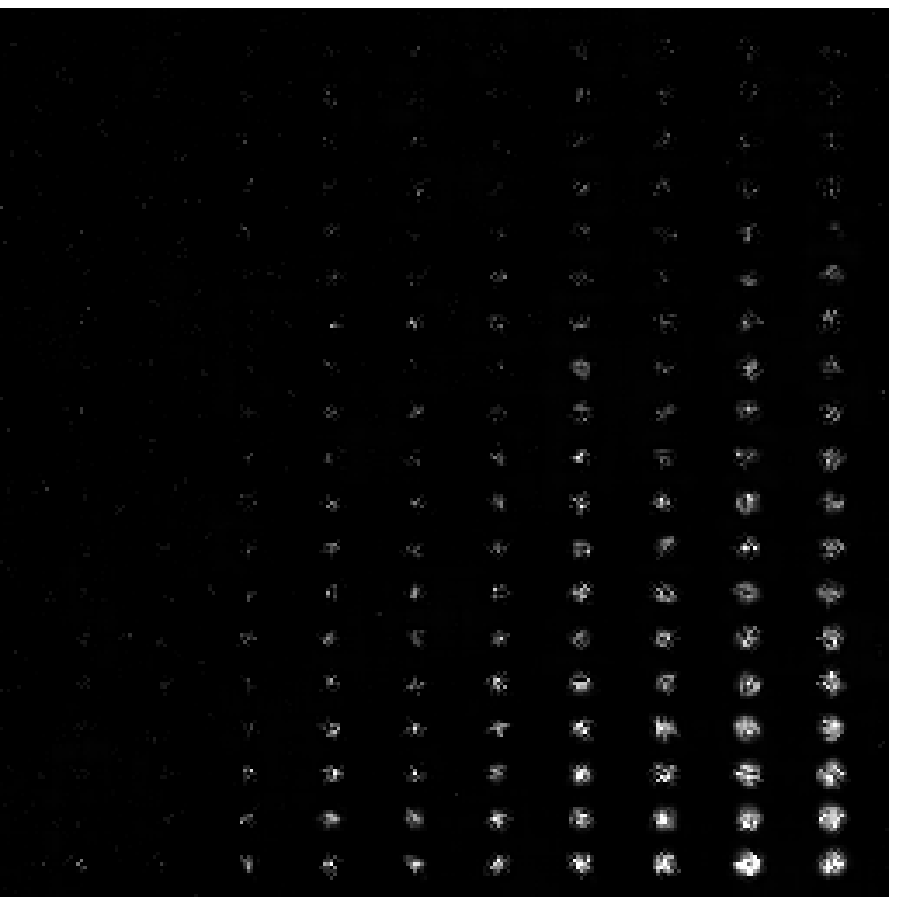} & %
\includegraphics[width=0.20\linewidth]{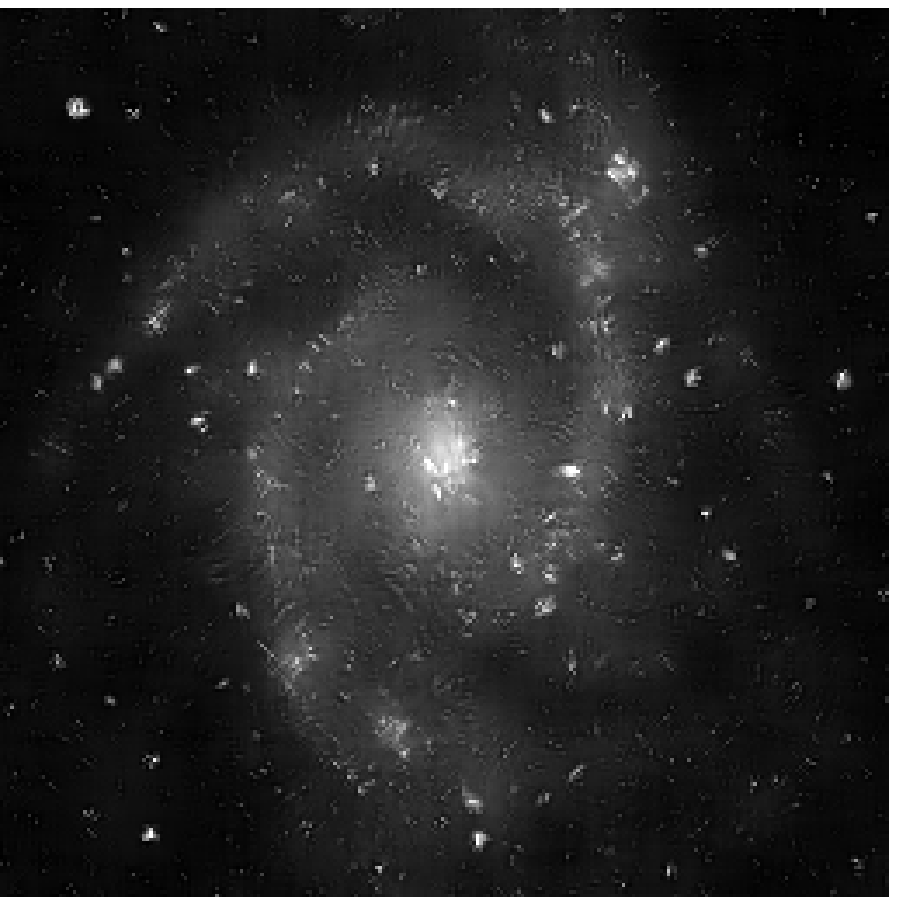} & %
\includegraphics[width=0.20\linewidth]{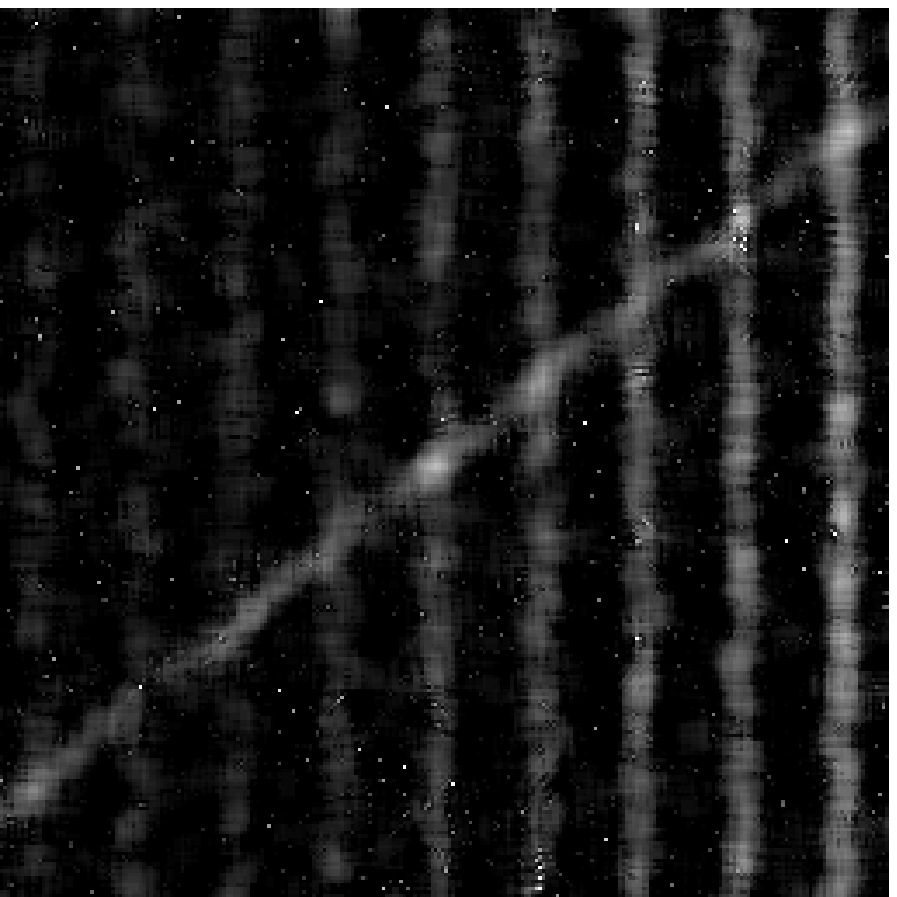} & %
\includegraphics[width=0.20\linewidth]{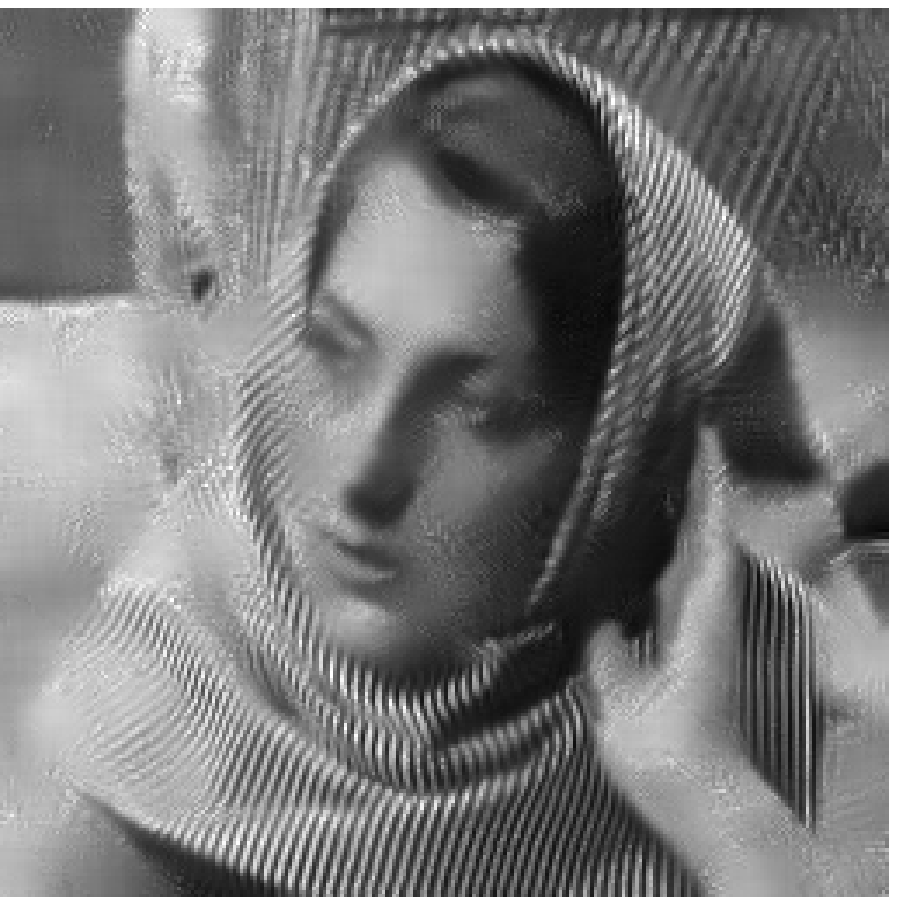} & %
\includegraphics[width=0.20\linewidth]{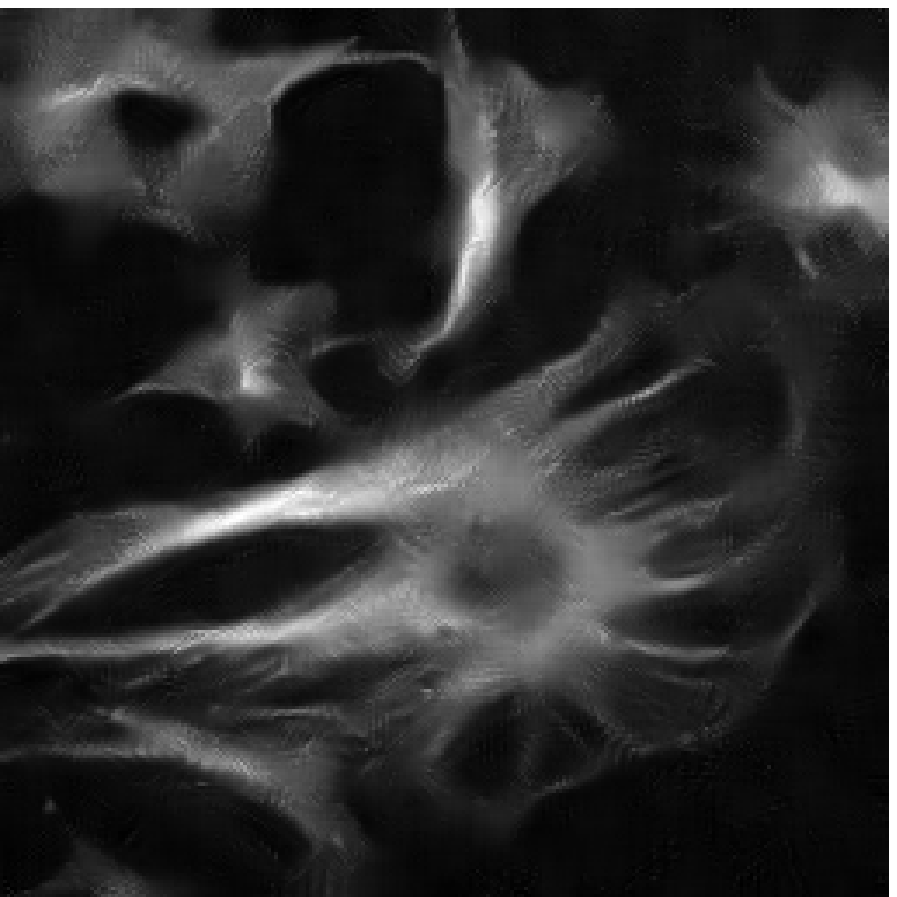} \\
NMISE=0.0739 & NMISE=0.0618 & NMISE=0.0368 & NMISE=0.1061 & NMISE=0.0855 \\
$M=19\times19$ & $M=15\times15$ & $M=9\times9$ & $M=15\times15$ & $%
M=11\times11$ \\
$m=13\times13$ & $m=5\times5$ & $m=19\times19$ & $m=21\times21$ & $%
m=17\times17$ \\
(a) Spots & (b) Galaxy & (c) Ridges & (d) Barbara & (e) Cells%
\end{tabular}
}
\caption{{\protect\small These images restored by the first step of our
algorithm. }}
\label{Fig step1}
\end{figure}
\end{center}

Note that the presented algorithm is divided into two steps: in the first
step we reconstruct the image by OWPNF from noisy data; in the second step, we
smooth the image by a Gaussian kernel. This is explained by the fact that
images with brightness between $0$ and $255$ (like Barbara) are well
denoised by the first step, but for the low count levels images, the
restored images by OWPNF are not smooth enough (see Figure \ref{Fig step1}).
For these types of images, we introduce an additional smoothing using a
Gaussian kernel (see the second step of the algorithm).

Our numerical experiments are done in the same way as in %
\citep{ZHANG2008WAVELETS} and \citep{MAKITALO2009INVERSION} to produce
comparable results; we also use the same set of test images (all of $%
256\times 256$ in size): Spots $[0.08,4.99]$, Galaxy $[0,5]$, Ridges $%
[0.05,0.85]$, Barbara $[0.93,15.73]$, and Cells $[0.53,16.93]$. The authors
of \citep{ZHANG2008WAVELETS} and \citep{MAKITALO2009INVERSION} kindly
provided us with their programs and the test images.

Figures \ref{Fig spots}- \ref{Fig Cells} illustrate the visual quality of
the denoised images using OWPNF,EUI+BM3D  \citep{makitalo2011optimal},
 MS-VST + $7/9$ \citep{ZHANG2008WAVELETS},  MS-VST + B3 \citep{ZHANG2008WAVELETS}
  and PH-HMP \citep{lefkimmiatis2009bayesian}.

Table \ref{table1} shows the NMISE values of images reconstructed by OWPNF,
EUI+BM3D, MS-VST + $7/9$, and MS-VST + B3. For Spots $[0.08,4.99]$ and
Galaxy $[0,5]$, our results are the best; for Ridges $[0.05,0.85]$, Barbara $%
[0.93,15.73]$, and Cells $[0.53,16.93]$, the method EUI+BM3D gives the
best results, but our method is also very competitive.

\begin{center}
\begin{table}[tbp]
\begin{center}
\vskip3mm {\footnotesize
\begin{tabular}{l|rrrrr}
\hline
Algorithm & Our       & EUI    & MS-VST    & MS-VST   &PH-HMT\\
Algorithm & algorithm &  +BM3D &   + $7/9$ &  + B3    &\\\hline
Spots$[0.08,4.99]$ & $\mathbf{0.0259}$ & $0.0358$ & $0.0602$ & $0.0810$ &$0.048$\\
Galaxy$[0,5]$ & $\mathbf{0.0285}$ & $0.0297$ & $0.0357$ & $0.0338$ &$0.030$\\
Ridges$[0.05,0.85]$ & $0.0162$ & $\mathbf{0.0121}$ & $0.0193$ & $0.0416$ &$-$\\
Barbara$[0.93,15.73]$ & $0.1061$ & $\mathbf{0.0863}$ & $0.2391$ & $0.3777$&$0.159$
\\
Cells$[0.53,16.93]$ & $0.0794$ & $\mathbf{0.0643}$ & $0.0909$ & $0.1487$ &$0.082$\\
\hline
\end{tabular}
} \vskip1mm
\end{center}
\caption{Comparison between Optimal Weights Filter, MS-VST + $7/9$, and
MS-VST + B3.}
\label{table1}
\end{table}
\end{center}

\section{\label{Sec:Appendix Proofs} Proofs of the main results}

\subsection{\label{Sec: proof of Th weights 001} Proof of Theorem \protect
\ref{Th weights 001}}

We begin with some preliminary results. The following lemma is similar to
Theorem 1 of Sacks and Ylvisaker \citep{Sacks1978linear} where, however, the
inequality constraints are absent.

\begin{lemma}
\label{Lemma weights}Let $g_{\rho }(w)$ be defined by (\ref{def gw}). Then
there are unique weights $w_{\rho }$ which minimize $g_{\rho }(w)$ subject
to (\ref{s2wx}), given by
\begin{equation}
w_{\rho }(x)=\frac{1}{f(x)}(b-\lambda \rho (x))^{+},
\label{eq Lemma W000}
\end{equation}%
where $b$ and $\lambda $ are uniquely determined by the following two equations:
\begin{eqnarray}
\sum_{x\in \mathbf{U}_{x_{0},h}}\frac{1}{f(x)}(b-\lambda \rho
(x))^{+} &=&1,  \label{eq Lemma W001} \\
\sum_{x\in \mathbf{U}_{x_{0},h}}\frac{1}{f(x)}(b-\lambda \rho
(x))^{+}\rho (x) &=&\lambda .  \label{eq Lemma W002}
\end{eqnarray}
\end{lemma}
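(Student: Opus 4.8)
The function $g_\rho(w)$ is a strictly convex quadratic in $w$ on the closed convex polytope defined by the constraints $w(x)\ge 0$, $\sum_x w(x)=1$ (strict convexity follows from $f(x)>0$, which makes the quadratic form $\sum_x w(x)^2 f(x)$ positive definite, and adding the nonnegative term $(\sum_x w(x)\rho(x))^2$ keeps it strictly convex). Since the feasible set is nonempty and compact, a unique minimizer $w_\rho$ exists. The plan is to characterize it via the Karush--Kuhn--Tucker conditions and then solve the resulting equations explicitly to get the stated form $\frac1{f(x)}(b-\lambda\rho(x))^+$.

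First I would write the Lagrangian $\mathcal L(w,\mu,\nu)=g_\rho(w)-2\mu\bigl(\sum_x w(x)-1\bigr)-\sum_x \nu(x)w(x)$ with $\nu(x)\ge 0$, and differentiate in $w(x)$. Writing $S=\sum_{y\in\mathbf U_{x_0,h}}w(y)\rho(y)$, stationarity gives $2S\rho(x)+2w(x)f(x)-2\mu-\nu(x)=0$ for each $x$, i.e. $w(x)f(x)=\mu+\tfrac12\nu(x)-S\rho(x)$. Complementary slackness $\nu(x)w(x)=0$ then forces, for the indices where $w(x)>0$, the relation $w(x)f(x)=\mu-S\rho(x)$, while for $w(x)=0$ one needs $\mu-S\rho(x)\le 0$; both cases are summarized by $w(x)=\frac1{f(x)}(\mu-S\rho(x))^+$. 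Setting $b=\mu$ and $\lambda=S$, the defining identity $\sum_x w(x)=1$ becomes \eqref{eq Lemma W001}, and the consistency condition $S=\sum_x w(x)\rho(x)$ becomes \eqref{eq Lemma W002}. Conversely, any $w$ of this form satisfying \eqref{eq Lemma W001}--\eqref{eq Lemma W002} satisfies the KKT conditions (with $\nu(x)=2(\,S\rho(x)-b\,)^+\ge 0$ and the complementary slackness holding termwise), hence by convexity is the global minimizer; since the minimizer is unique, so are the weights.

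The remaining point is that the pair $(b,\lambda)$ solving \eqref{eq Lemma W001}--\eqref{eq Lemma W002} is unique. Here I would argue as follows: from \eqref{eq Lemma W001}, for each fixed $\lambda\ge 0$ the left-hand side is continuous and strictly increasing in $b$ (on the range where it is positive), running from $0$ to $\infty$, so there is a unique $b=b(\lambda)$ making it equal to $1$; substituting into \eqref{eq Lemma W002} reduces everything to a single scalar equation in $\lambda$. Monotonicity of this reduced equation — equivalently, the fact that the optimal $S$ is determined — follows from strict convexity of $g_\rho$ (two distinct solutions would give two distinct minimizers). One also checks $\lambda>0$ unless $\rho\equiv 0$ on the search window.

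The main obstacle I anticipate is not the existence/convexity part but the careful bookkeeping in the KKT step: one must handle the kink in $(\cdot)^+$ correctly, verify that the proposed $\nu(x)$ is genuinely nonnegative and that complementary slackness holds on the nose, and make sure the "consistency" equation \eqref{eq Lemma W002} (which is what distinguishes this from the equality-constrained Sacks--Ylvisaker setting) is exactly the fixed-point condition $\lambda=S$ and not merely a consequence. Once Lemma \ref{Lemma weights} is in place, Theorem \ref{Th weights 001} follows by the substitution $a=b/\lambda$: then $\frac1{f(x)}(b-\lambda\rho(x))^+=\frac{\lambda}{f(x)}(a-\rho(x))^+=\frac{\lambda a}{f(x)}K_{\mathrm{tr}}(\rho(x)/a)$, the constant cancels in the normalization \eqref{eq th weights 001}, and combining \eqref{eq Lemma W001} and \eqref{eq Lemma W002} to eliminate $b$ yields exactly $M_\rho(a)=1$, with $a$ unique because $M_\rho$ is strictly increasing.
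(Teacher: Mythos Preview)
Your proposal is correct and follows essentially the same route as the paper: set up the Lagrangian for $g_\rho$ under the simplex constraints, apply the KKT conditions to obtain $w_\rho(x)f(x)=b-\lambda\rho(x)$ on the active set (with $\lambda=\sum_y w_\rho(y)\rho(y)$), and read off \eqref{eq Lemma W000}--\eqref{eq Lemma W002}; uniqueness of $(b,\lambda)$ is then argued via the substitution $a=b/\lambda$. If anything, you are slightly more explicit than the paper in first invoking strict convexity to secure existence and uniqueness of the minimizer and in checking the converse direction of KKT.
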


\begin{proof}
Consider the Lagrange function%
\begin{equation*}
G(w,b_{0},b)=g_{\rho }(w)-2b\left(\sum_{x\in \mathbf{U}_{x_{0},h}}w(x)-1\right)-2%
\sum_{x\in \mathbf{U}_{x_{0},h}}b(x)w(x),
\end{equation*}%
where $b_{0}\in \mathbb{R}$ and $b\in \mathbb{R}^{\mathrm{card}\, \left(\mathbf{U}%
_{x_{0},h}\right) }$ is a vector with components $b(x)\geq 0,$ $x\in \mathbf{%
U}_{x_{0},h}.$ Let $w_{\rho }$ be a minimizer of $g_{\rho }\left(w\right) $
under the constraints (\ref{s2wx}). By standard results (cf. Theorem 2.2 of
Rockafellar (1993 \citep{rockafellar1993lagrange}); see also Theorem 3.9 of Whittle (1971 \citep{Wh})), there
are Lagrange multipliers $b_{0}\in \mathbb{R}$ and $b(x)\geq 0,$ $x\in
\mathbf{U}_{x_{0},h}$ such that the following Karush-Kuhn-Tucker conditions
hold: for any $x\in \mathbf{U}_{x_{0},h},$%
\begin{equation}
\frac{\partial }{\partial w\left(x\right) }G\left(w\right) \bigg|%
_{w=w_{\rho }}=2\lambda \rho (x)+2f(x)w_{\rho }(x)-2b-2b(x)=0,
\label{s5kw1}
\end{equation}
with
\begin{equation}
\lambda =\sum_{y\in \mathbf{U}_{x_{0},h}}w_{\rho }(y)\rho (y),
\label{lambda def}
\end{equation}%
and%
\begin{eqnarray}
\frac{\partial }{\partial b_{0}}G\left(w\right) \bigg|_{w=w_{\rho }}
&=&\sum_{y\in \mathbf{U}_{x_{0},h}}w_{\rho }(y)-1=0,  \label{Lb0} \\
\frac{\partial }{\partial b\left(x\right) }G\left(w\right) \bigg|%
_{w=w_{\rho }} &=&w_{\rho }(x)\;\left\{
\begin{array}{cc}
=0, & \text{if }b\left(x\right) >0, \\
\geq 0, & \text{if }b\left(x\right) =0.%
\end{array}%
\right.   \label{Lbx}
\end{eqnarray}
(Notice that the gradients of the equality constraint function $h\left(w\right) =\sum_{x\in \mathbf{U}_{x_{0},h}}$   $w(x)-1$ and of the active
inequality constraints $h_{x}\left(w\right) =w\left(x\right) ,$ $x\in
\mathbf{U}_{x_{0},h},$ are always linearly independent, since the number of
inactive inequality constraints is strictly less than $\mathrm{card}\, ~\mathbf{U}%
_{x_{0},h}.$)

If $b\left(x\right) =0,$ then by (\ref{Lbx}) we have $w_{\rho }\left(x\right) \geq
0,$ so that by (\ref{s5kw1}) we obtain $b-\lambda \rho (x)=f(x) w_{\rho
}(x)\geq 0$ and%
\begin{equation*}
w_{\rho }(x)=\frac{(b-\lambda \rho (x))^{+}}{f(x) }.
\end{equation*}
If $b\left(x\right) >0,$ then by (\ref{Lb0}) we have $w_{\rho }\left(x\right) =0.$
Taking into account (\ref{s5kw1}) we obtain
\begin{equation}
b-\lambda \rho (x)= -b(x)\leq 0,  \label{s5bl2}
\end{equation}
so that%
\begin{equation*}
w_{\rho }(x)=0=\frac{(b-\lambda \rho (x))^{+}}{f(x) }.
\end{equation*}
As to conditions (\ref{eq Lemma W001}) and (\ref{eq Lemma W002}), they
follow immediately from the constraint (\ref{Lb0}) and the equation (\ref%
{lambda def}).

Since the system (\ref{eq Lemma W001}) and (\ref{eq Lemma W002}) has a
unique solution (this can be verified by substituting $\frac{b}{\lambda }=a$%
), the minimizer of $g_{\rho }\left(w\right) $ is also unique. The
assertion of the Lemma is proved.
\end{proof}

Now we turn to the proof of Theorem \ref{Th weights 001}. Applying Lemma \ref%
{Lemma weights} with $a=b/\lambda $, we see that the unique optimal weights $%
w_{\rho}$ minimizing $g_{\rho}(w)$ subject to (\ref{s2wx}), are given by
\begin{equation}
w_{\rho}=\frac{\lambda }{f(x)}(a-\rho (x))^{+}.
 \label{s5wl}
\end{equation}%
Since the function
\begin{equation*}
M_{\rho}(t)=\sum_{x\in \mathbf{U}_{x_{0},h}}\frac{1}{f(x)}(t-\rho (x))^{+}\rho (x)
\end{equation*}
is strictly increasing and continuous with $M_{\rho}(0)=0$ and $\lim\limits_{t%
\rightarrow \infty }M_{\rho}(a)=+\infty ,$ the equation
\begin{equation*}
M_{\rho}(a)=1
\end{equation*}%
has a unique solution on $(0,\infty)$. The equation (\ref{eq Lemma W002}) together with (\ref{s5wl})
imply  (\ref{eq th weights 001}).

\subsection{\label{Sec: proof of Th oracle 001}Proof of Theorem \protect\ref%
{Th oracle 001}}

First assume that $\rho \left(x\right) =\rho _{f,x_{0}}(x)=\left\vert f\left(x\right)
-f\left(x_{0}\right) \right\vert .$ Recall that $g_{\rho}$ and $w_{\rho}$ were defined by
($\ref{def gw}$) and ($\ref{eq th weights 001}$).
Using the H\"{o}lder condition (\ref%
{Local Holder cond}) we have, for any $w$,
\begin{equation*}
g_{\rho}(w_{\rho})\leq g_{\rho}(w)\leq \overline{g}(w),
\end{equation*}%
where
\begin{equation}
\overline{g}(w)=\left(\sum_{x\in \mathbf{U}_{x_{0},h}}w(x)L\Vert
x-x_{0}\Vert _{\infty }^{\beta }\right) ^{2}+\Gamma\sum_{x\in \mathbf{U}%
_{x_{0},h}}w^{2}(x)
\label{defi gw}
\end{equation}%
and $\Gamma$ is a constant satisfying (\ref{defi Gamma}). Denote $\overline{w}=\arg \min_{w}\overline{g}(w)$. Since $w_{\rho}$
minimize $g_{\rho}(w)$ and $\rho(x)\leq L\|x-x_0\|_{\infty}^{\beta}$,
  we get
\begin{equation*}
g_{\rho}(w_{\rho})\leq g_{\rho}(\overline{w})\leq\overline{g}(\overline{w}).
\end{equation*}%
By Theorem \ref{Th weights 001},
\begin{equation}
\overline{w}(x)=\left(\overline{a}-L\Vert x-x_{0}\Vert _{\infty }^{\beta
}\right) ^{+}\Big/\sum\limits_{x^{\prime }\in \mathbf{U}_{x_{0},h}}\left(\overline{a}-L\Vert x^{\prime }-x_{0}\Vert _{\infty }^{\beta }\right) ^{+},
\label{defi w_overline}
\end{equation}%
where $\overline{a}>0$ is the unique solution on $(0,\infty)$ of the equation $\overline{M}_h(\overline{a})=1$, where
\begin{equation}
\overline{{M}}_{h}\left(t\right) =\sum_{x\in \mathbf{U}_{x_{0},h}}\frac{1}{%
\Gamma}L\Vert x-x_{0}\Vert _{\infty }^{\beta }(t-L\Vert x-x_{0}\Vert
_{\infty }^{\beta })^{+}>0 .
\label{defi Mt}
\end{equation}%
Now Theorem \ref{Th oracle 001} is a consequence of the following lemma.

\begin{lemma}
\label{lm5_2}Assume that $\rho(x) = L\|x-x_0\|_{\infty}^{\beta}$ and that $h\geq c_{0}n^{-\alpha }$ with $0\leq \alpha <\frac{%
1}{2\beta +2}$, or $h=c_{0}n^{-\frac{1}{2\beta +2}}$ with $c_{0}\geq
c_{1}\left(L,\beta \right) =\left(\frac{(2\beta +2)(\beta +2)\Gamma}{8L^{2}%
}\right) ^{\frac{1}{2\beta +2}}.$ Then
\begin{equation}
\overline{a}=c_{3}n^{-\beta /(2\beta +2)}(1+o(1))  \label{s5hk}
\end{equation}%
and
\begin{equation}
\overline{g}(\overline{w})\leq c_{4}n^{-\frac{2\beta }{2+2\beta }}(1+o(1)),
\label{s5gw4}
\end{equation}%
where $c_{3}$ and $c_{4}$ are constants depending only on $\beta $ and $L$.
\end{lemma}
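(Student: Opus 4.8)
The plan is to reduce everything to an explicit computation, since with the special choice $\rho(x)=L\|x-x_0\|_\infty^\beta$ the minimizer $\overline w$ and the scalar $\overline a$ are governed by the equation $\overline M_h(\overline a)=1$, which only involves summing a radial function over the grid $\mathbf U_{x_0,h}$. The first step is to convert the defining sum for $\overline M_h$ into something tractable by grouping the pixels of $\mathbf U_{x_0,h}$ into the ``shells'' $\|x-x_0\|_\infty = j/N$, $j=0,1,\dots,Nh$, each shell containing $8j$ pixels (for $j\ge 1$). Thus
\begin{equation*}
\overline M_h(t)=\frac{L}{\Gamma}\sum_{j=0}^{Nh}\#\{x:\|x-x_0\|_\infty=j/N\}\,\bigl(L(j/N)^\beta\bigr)\bigl(t-L(j/N)^\beta\bigr)^+ .
\end{equation*}
Writing $t=\overline a$ and letting $r=(\overline a/L)^{1/\beta}$ be the radius below which the positive part is active, this is a Riemann sum for $\frac{L^2}{\Gamma}\int_{\|u\|_\infty\le r}\|u\|_\infty^\beta(\overline a-L\|u\|_\infty^\beta)\,N^2\,du$, i.e. up to lower-order terms
\begin{equation*}
\overline M_h(\overline a)\;=\;\frac{L^2 N^2}{\Gamma}\int_{\{\|u\|_\infty\le r\}}\|u\|_\infty^\beta\bigl(\overline a-L\|u\|_\infty^\beta\bigr)\,du\,(1+o(1)),
\end{equation*}
provided $r$ (equivalently $\overline a$) tends to $0$ slower than $1/N$ but fast enough to stay inside the window, which is exactly what the hypotheses on $h$ guarantee: when $h\ge c_0 n^{-\alpha}$ with $\alpha<\frac1{2\beta+2}$ the relevant radius $r\asymp n^{-1/(2\beta+2)}$ is eventually $\le h$, and in the borderline case $h=c_0 n^{-1/(2\beta+2)}$ the condition $c_0\ge c_1$ is precisely what forces $r\le h$ in the limit.

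Next I would evaluate the integral. Using the substitution to polar-like coordinates for the sup-norm ball (the ``sphere'' $\|u\|_\infty=s$ has one-dimensional measure $8s$), $\int_{\|u\|_\infty\le r}\|u\|_\infty^\beta(\overline a - L\|u\|_\infty^\beta)\,du = 8\int_0^r s^{\beta+1}(\overline a - Ls^\beta)\,ds = 8\bigl(\frac{\overline a r^{\beta+2}}{\beta+2}-\frac{L r^{2\beta+2}}{2\beta+2}\bigr)$. Substituting $r=(\overline a/L)^{1/\beta}$ this collapses to a constant times $\overline a^{(2\beta+2)/\beta}$; setting the whole expression equal to $1$ and solving for $\overline a$ gives $\overline a = c_3 n^{-\beta/(2\beta+2)}(1+o(1))$ with an explicit $c_3=c_3(L,\beta)$, which is \eqref{s5hk}. (Here $n=N^2$, so $N^2\overline a^{(2\beta+2)/\beta}\asymp 1$ reads $n\cdot n^{-(2\beta+2)/\beta\cdot \beta/(2\beta+2)}=1$, consistent.)

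For \eqref{s5gw4} I would plug $\overline w$ from \eqref{defi w_overline} back into $\overline g$. Note $\sum_x w^2(x) = \sum_x(\overline a-L\|x-x_0\|_\infty^\beta)^{+2}/\bigl(\sum_x(\overline a - L\|x-x_0\|_\infty^\beta)^+\bigr)^2$ and $\sum_x w(x)L\|x-x_0\|_\infty^\beta = \overline a - 1/\sum_x(\overline a-L\|x-x_0\|_\infty^\beta)^+$ by the defining relation (the triangular-kernel identity $\sum \rho(x)(\overline a - \rho(x))^+/\Gamma =1$ combined with $\sum(\overline a-\rho)^+ \cdot \overline a - \sum \rho(\overline a-\rho)^+ = \sum (\overline a-\rho)^{+2}$, all over $\Gamma$). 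Both the numerator and denominator sums are again radial Riemann sums of the type above, each evaluating to a constant times a power of $\overline a$ times $n$; substituting $\overline a = c_3 n^{-\beta/(2\beta+2)}(1+o(1))$ gives $\overline g(\overline w)\le c_4 n^{-2\beta/(2+2\beta)}(1+o(1))$.

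The main obstacle is the first step: making the passage from the discrete sum over $\mathbf U_{x_0,h}$ to the integral rigorous with the stated $(1+o(1))$ uniformity, and in particular verifying that the active radius $r=(\overline a/L)^{1/\beta}$ genuinely lies inside the search window in the limit. This is where the two alternative hypotheses on $h$ (and the explicit threshold $c_1$ in the borderline regime) get used, and one has to be careful that the error in the Riemann-sum approximation — of relative order $1/(Nr)=O(N^{-1}n^{1/(2\beta+2)})=o(1)$ — does not swamp the leading term. Everything after that is the routine calculus sketched above.
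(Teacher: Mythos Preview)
Your proposal is correct and follows essentially the same route as the paper. Both arguments decompose $\mathbf U_{x_0,h}$ into the shells $\|x-x_0\|_\infty=j/N$ (with $8j$ points each), introduce the effective radius $r=(\overline a/L)^{1/\beta}$ (the paper calls it $\overline h$), evaluate the defining sum for $\overline M_h(\overline a)=1$ asymptotically via $\sum_{k\le K}k^{p}\sim K^{p+1}/(p+1)$ (which is your Riemann-sum step), and then use the hypotheses on $h$ to ensure $r\le h$ so that the truncation by the search window is inactive.

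Two small remarks. First, in your Riemann-sum display the prefactor should be $\tfrac{L N^{2}}{\Gamma}$ rather than $\tfrac{L^{2}N^{2}}{\Gamma}$ (one factor of $L$ is already inside the integrand). Second, for \eqref{s5gw4} the paper obtains the clean closed form $\overline g(\overline w)=\tfrac{\Gamma}{L}\,\overline a\big/\sum_x(\overline h^{\beta}-\|x-x_0\|_\infty^{\beta})^{+}$ by exploiting the identity $\overline M_h(\overline a)=1$ directly, which shortcuts the separate estimation of the bias and variance pieces you sketch; your route works too but is a bit longer.
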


\begin{proof}
We first prove (\ref{s5hk}) in the  case where $h=1$ i.e. $\mathbf{U}_{x_0,h}=\mathbf{I}$.
 In this case by the
definition of $\overline{a},$ we have
\begin{equation}
\overline{M}_{1}\left(\overline{a}\right) =\sum\limits_{x\in \mathbf{I}}\frac{1}{\Gamma}(%
\overline{a}-L\Vert x-x_{0}\Vert _{\infty }^{\beta })^{+}L\Vert x-x_{0}\Vert
_{\infty }^{\beta }=1.  \label{eq - M1 001}
\end{equation}%
Let $\overline{h}=\left(\overline{a}/L\right) ^{1/\beta }$.
 Then $a- L\|x-x_0\|_{\infty}^2\geq 0$ if and only if $\|x-x_0\|\leq \overline{h}$. So from
 (\ref{eq - M1 001}) we get
\begin{equation}
L^{2}\overline{h}^{\beta }\sum_{\Vert x-x_{0}\Vert _{\infty }\leq \overline{h%
}}\Vert x-x_{0}\Vert _{\infty }^{\beta }-L^{2}\sum_{\Vert x-x_{0}\Vert
_{\infty }\leq \overline{h}}\Vert x-x_{0}\Vert _{\infty }^{2\beta }=\Gamma.
\label{eq - M1 002}
\end{equation}%
By the definition of the neighborhood $\mathbf{U}_{x_{0},\overline{h}}$,
it is easily seen that%
\begin{equation*}
\sum_{\Vert x-x_{0}\Vert _{\infty }\leq \overline{h}}\Vert x-x_{0}\Vert
_{\infty }^{\beta }=8N^{-\beta }\sum_{k=1}^{N\overline{h}}k^{\beta +1}=8N^{2}%
\frac{\overline{h}^{\beta +2}}{\beta +2}\left(1+o\left(1\right) \right)
\end{equation*}%
and%
\begin{equation*}
\sum_{\Vert x-x_{0}\Vert _{\infty }\leq \overline{h}}\Vert x-x_{0}\Vert
_{\infty }^{2\beta }=8N^{-2\beta }\sum_{k=1}^{N\overline{h}}k^{2\beta
+1}=8N^{2}\frac{\overline{h}^{2\beta +2}}{2\beta +2}\left(1+o\left(1\right) \right).
\end{equation*}%
Therefore, (\ref{eq - M1 002}) implies
\begin{equation*}
\frac{8L^{2}\beta }{\left(\beta +2\right) \left(2\beta +2\right) }N^{2}%
\overline{h}^{2\beta +2}(1+o{(1)})=\Gamma,
\end{equation*}%
from which we infer that%
\begin{equation}
\overline{h}=c_{1}n^{-\frac{1}{2\beta +2}}(1+o(1))  \label{eq - h bar}
\end{equation}%
with $c_{1}=\left(\Gamma\frac{\left(\beta +2\right) \left(2\beta
+2\right) }{8L^{2}\beta }\right) ^{\frac{1}{2\beta +2}}.$ From ($\ref{eq - h bar}$)
and the definition of $\overline{h}$, we obtain
\begin{equation*}
\overline{a}=L\overline{h}^{\beta }=Lc_{1}^{\beta }n^{-\frac{%
\beta }{2\beta +2}}(1+o(1)),
\end{equation*}
which proves (\ref{s5hk}) in the case where $h=1$.

We next prove (\ref{eq - h bar}), which implies(\ref{s5hk}), under the conditions of the lemma.
 First, notice that if $\overline{h}\leq h \leq 1$, then $\overline{M}_h(t)=\overline{M}_1(t)$, $\forall t >0$.
 If $h\geq c_{0}n^{-\alpha },$ where $0\leq \alpha <\frac{1}{2\beta +2},$
then it is clear that $h\geq \overline{h},$ for $n$ sufficiently large.
Therefore $\overline{M}_{h}\left(\overline{a}\right) =\overline{M}_{1}\left(\overline{a}\right) $%
, thus we arrive at the equation (\ref{eq - M1 001}), from which we deduce (%
\ref{eq - h bar}). If $h\geq c_{0}n^{-\frac{1}{2\beta +2}}$ and $c_{0}\geq
c_{1}, $ then again $h\geq \overline{h}$ for $n$ sufficiently large.
Therefore, $\overline{M}_{h}\left(\overline{a}\right) =\overline{M}_{1}\left(\overline{a}\right)
$, and we arrive again at (\ref{eq - h bar}).

We finally prove  (\ref{s5gw4}). Denote for brevity
\begin{equation*}
G_{h}=\sum_{\Vert x-x_{0}\Vert _{\infty }\leq h}(\overline{h}^{\beta }-\Vert
x-x_{0}\Vert _{\infty }^{\beta })^{+}.
\end{equation*}%
Since $h\geq \overline{h},$ for $n$ sufficiently large, we have $\overline{{M}}%
_{h}\left(\overline{a}\right) =\overline{{M}}_{\overline{h}}\left(\overline{a%
}\right) =1$ and $G_{h}=G_{\overline{h}}.$ Therefore by (\ref{defi gw}), (\ref{defi w_overline})
and (\ref{defi Mt}), we get
\begin{eqnarray*}
\overline{g}(\overline{w}) &=&{\Gamma}\frac{\overline{M}_{\overline{h}%
}\left(\overline{a}\right) +\sum_{\Vert x-x_{0}\Vert _{\infty }\leq
\overline{h}}\left(\left(\overline{a}-L\Vert x-x_{0}\Vert _{\infty
}^{\beta }\right) ^{+}\right) ^{2}}{L^{2}G_{\overline{h}}^{2}} =\frac{{\Gamma}}{L}\frac{\overline{a}}{G_{\overline{h}}}.
\end{eqnarray*}%
Since
\begin{align*}
G_{h}& =\sum_{\Vert x-x_{0}\Vert _{\infty }\leq \overline{h}}(\overline{h}%
^{\beta }-\Vert x-x_{0}\Vert _{\infty }^{\beta }) \\
& =\overline{h}^{\beta }\sum_{1\leq k\leq N\overline{h}}8k-\frac{8}{N^{\beta
}}\sum_{1\leq k\leq N\overline{h}}k^{\beta +1} \\
& =\frac{4\beta }{\beta +2}N^{2}\overline{h}^{\beta +2}\left(1+o\left(1\right) \right) \\
& =\frac{4\beta }{\left(\beta +2\right) L^{1/\beta }}N^{2}\overline{a}%
^{\left(\beta +2\right) /\beta }\left(1+o\left(1\right) \right) ,
\end{align*}
we obtain
\begin{equation*}
\overline{g}\left(\overline{w}\right) =\Gamma\frac{\left(\beta +2\right) }{%
4\beta }L^{1/\beta -1}\frac{\overline{a}^{-\frac{2}{\beta }}}{N^{2}}\left(1+o\left(1\right) \right) \leq c_{4}n^{-\frac{2\beta }{2\beta +2}}\left(1+o\left(1\right) \right) ,
\end{equation*}%
where $c_{4}$ is a constant depending on $\beta $ and $L$.
\end{proof}

{\bf{Proof of Theorem \ref{Th oracle 001}}}. As $\rho
\left( x\right) =\left\vert f\left( x\right) -f\left( x_{0}\right)
\right\vert +\delta _{n},$
we have
\begin{equation*}
\begin{split}
\left( \sum_{x\in \mathbf{U}_{x_0,h}} w(x)\rho(x)\right)^2
&\leq
\left( \sum_{x\in \mathbf{U}_{x_0,h}} w(x) |f(x)-f(x_0)|+\delta_n\right)^2
\\& \leq
2\left(\sum_{x\in \mathbf{U}_{x_0,h}} w(x) |f(x)-f(x_0)|  \right)^2+2\delta_n^2.
\end{split}
\end{equation*}
Since $f(x)\leq \Gamma,$ with $g_{\rho}$ and $\overline{g}$ by (\ref{def gw})
and (\ref{defi gw}), we get
\begin{equation*}
g_{\rho}(w)\leq 2 \overline{g}(w) + 2\delta_n^2.
\end{equation*}
So
\begin{equation*}
g_{\rho}(w_{\rho})\leq g_{\rho}(\overline{w})\leq 2\overline{g}(\overline{w})+2\delta
_{n}^{2}.
\end{equation*}
Therefore, by  Lemma \ref{lm5_2} and the condition that $\delta_n=
O\left( n^{-\frac{\beta}{2\beta+2}}\right)$, we obtain
\begin{equation*}
g_{\rho}(w_{\rho})=O\left(n^{-\frac{2\beta }{2\beta +2}}\right).
\end{equation*}%
This together with (\ref{s2ef-vers2})  give (\ref{s2ef2}).

\subsection{\label{Sec: proof of Th adapt 001}Proof of Theorem \protect\ref%
{Th adapt 001}}

Let $M'=\mathrm{card}\, \mathbf{U}_{x_{0},h}^{\prime }=(2Nh+1)^{2}/2$, $m'=\mathrm{card}\, \mathbf{U}''_{x_{0},\eta
}=(2N\eta+1) ^{2}/2$. Denote $\Delta _{x_{0},x}\left(y\right) =f(y)-f(Ty)$ and $\eta
\left(y\right) =\epsilon (y)-\epsilon (Ty)$,  where $\epsilon$ is defined by (\ref{defi epsilon}).
It is easy to see that%
\begin{equation*}
\frac{1}{m'}\sum_{y\in \mathbf{U}''_{x_{0},\eta }}\left(Y(y)-Y(Ty)\right) ^{2}=%
\frac{1}{m'}\sum_{y\in \mathbf{U}''_{x_{0},\eta }}\Delta ^{2}_{x_{0},x}\left(y\right) +\frac{1}{m'}S(x)+\overline{f}(x_{0})+\overline{f}(x),
\end{equation*}%
where $S(x)=S_{2}(x)-S_{1}(x),$ with
\begin{eqnarray*}
S_{1}(x) &=&2\sum_{y\in \mathbf{U}''_{x_{0},\eta }}\Delta _{x_{0},x}\left(y\right) \eta \left(y\right) , \\
S_{2}(x) &=&\sum_{y\in \mathbf{U}''_{x_{0},\eta }}\left(\eta^{2} \left(y\right)
-\overline{f}(x_{0})-\overline{f}(x)\right) .
\end{eqnarray*}%
Denote%
\begin{equation*}
V=\frac{1}{m'}\sum_{y\in \mathbf{U}''_{x_{0},\eta }}\Delta ^{2} _{x_{0},x}\left(y\right)+\frac{1}{m'}S(x).
\end{equation*}%
Then
\begin{equation}
\begin{split}
\widehat{\rho }_{x_0}(x)& =\left(\sqrt{V+\overline{f}(x_{0})+\overline{f}(x)}-%
\sqrt{2\overline{f}(x_{0})}\right) ^{+} \\
& \leq \left\vert \sqrt{V+\overline{f}(x_{0})+\overline{f}(x)}-\sqrt{2%
\overline{f}(x_{0})}\right\vert .
\end{split}%
\label{ineq rho}
\end{equation}%
Using one-term Taylor expansion, we obtain
\begin{equation}
\begin{split}
& \left\vert \sqrt{V+\overline{f}(x_{0})+\overline{f}(x)}-\sqrt{2\overline{f}%
(x_{0})}\right\vert \\
& \leq \frac{|V|}{(\overline{f}(x_{0})+\overline{f}(x)+\theta V)^{1/2}}%
+\left\vert \sqrt{\overline{f}(x_{0})+\overline{f}(x)}-\sqrt{2\overline{f}%
(x_{0})}\right\vert \\
& \leq \frac{\frac{1}{m'}\sum_{y\in \mathbf{U}''_{x_{0},\eta }}\Delta^{2}_{x_0,x} (y)+%
\frac{1}{m'}|S(x)|}{(\overline{f}(x_{0})+\overline{f}(x)+\theta V)^{1/2}}%
+\left\vert \frac{\overline{f}(x)-\overline{f}(x_{0})}{ \sqrt{%
\overline{f}(x_{0})+\overline{f}(x)}+\sqrt{2\overline{f}(x_{0})} }%
\right\vert .
\end{split}%
\label{ineq vff}
\end{equation}%
Since $f(x)\geq 1/\ln n$, $x\in \mathbf{I}$, (\ref{ineq rho}) and (\ref{ineq vff})
imply that
\begin{equation}
\widehat{\rho }_{x_0}(x)\leq \frac{L^{2}h^{2\beta }+\frac{1}{m'}|S(x)|}{(2/\ln
n+\theta V)^{1/2}}+c_{3}h\sqrt{\ln n}.  \label{eq rho hate}
\end{equation}

We shall use three lemmas to finish the Proof of Theorem \ref{Th adapt 001}.

The following lemma can be deduced form the results in Borovkov %
\citep{borovkov2000estimates}, see also Merlevede, Peligrad and Rio \citep{merlevède2010bernstein}.

\begin{lemma}
\label{Lemma Borovk} If, for some $\delta >0,\gamma \in (0,1)$ and $K>1$ we
have
\begin{equation*}
\sup \mathbb{E}\exp \left(\delta \left\vert X_{i}\right\vert ^{\gamma }\right) \leq
K,\;i=1,...,n,
\end{equation*}%
then there are two positive constants $c_{1}$ and $c_{2}$ depending only on $%
\delta ,$ $\gamma $ and $K$ such that, for any $t>0,$
\begin{equation*}
\mathbb{P}\left(\sum_{i=1}^{n}X_{i}\geq t\right) \leq \exp \left(-c_{1}t^{2}/n\right) +n\exp \left(-c_{2}t^{\gamma }\right) .
\end{equation*}
\end{lemma}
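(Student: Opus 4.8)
The plan is to obtain this as a special case of the large-deviation inequalities of Borovkov \citep{borovkov2000estimates} for sums of independent summands that satisfy a Weibull-type moment condition in place of the Cram\'er condition; equivalently, it is the independent-variable instance of the Bernstein-type inequality of Merlev\`ede, Peligrad and Rio. First I would reduce to centered summands: the hypothesis forces $\sup_i\mathbb{E}|X_i|\le C(\delta,\gamma,K)$, so passing from $X_i$ to $X_i-\mathbb{E}X_i$ only alters the constants and displaces $t$ by at most $C(\delta,\gamma,K)\,n$; moreover, for $t$ below a constant depending on $\delta,\gamma,K$ the right-hand side already exceeds $1$, so I may assume $t$ is large.

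Next I would display the two mechanisms behind the two terms, since the whole estimate comes from balancing them against a truncation level $u>0$. \emph{(i) The Gaussian term.} Because $x^{2}\le C'(\delta,\gamma)\exp(\delta x^{\gamma})$ for all $x\ge0$, the hypothesis yields $\mathbb{E}X_{i}^{2}\le C_{0}:=C'(\delta,\gamma)K$, hence $\sum_{i=1}^{n}\mathrm{Var}(X_{i}\mathbf{1}\{|X_{i}|\le u\})\le C_{0}n$; applying the classical Bernstein inequality to the truncated and re-centered variables, which are bounded by $2u$, gives a bound of the shape $\exp(-c\,t^{2}/(C_{0}n+ut))$, which for an appropriate $u$ reproduces $\exp(-c_{1}t^{2}/n)$ in the bulk range of $t$. \emph{(ii) The ``one big jump'' term.} On the event $\{\max_{1\le i\le n}|X_{i}|\le u\}$ the sum $\sum_{i}X_{i}$ coincides with the truncated sum, so the remaining contribution is $\mathbb{P}(\max_{i}|X_{i}|>u)\le\sum_{i=1}^{n}\mathbb{P}(|X_{i}|>u)\le nK\exp(-\delta u^{\gamma})$ by Markov's inequality applied to $\exp(\delta|X_{i}|^{\gamma})$; taking $u$ proportional to $t$ turns this into $n\exp(-c_{2}t^{\gamma})$.

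The main obstacle is that no single truncation level works across all scales of $t$: the choice $u\asymp1$ destroys the exceedance control, whereas $u\asymp n/t$ makes $\exp(-\delta u^{\gamma})$ too large to be absorbed by either target term in the intermediate range $\sqrt{n}\lesssim t\lesssim n^{1/(2-\gamma)}$, and $u\asymp t$ makes the Bernstein estimate degenerate there. Overcoming this is precisely the technical heart of Borovkov's estimates, which truncate at a geometric cascade of levels and sum the annular contributions. I would therefore carry out the proof by checking that $\sup_i\mathbb{E}\exp(\delta|X_i|^{\gamma})\le K$ matches the hypotheses of the pertinent theorem of \citep{borovkov2000estimates} (or of the Merlev\`ede--Peligrad--Rio inequality restricted to independent variables) and by reading off that the constants it produces depend only on $\delta$, $\gamma$ and $K$ — which is exactly the assertion of the lemma.
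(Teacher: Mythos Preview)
Your proposal is correct and matches the paper's treatment: the paper does not prove this lemma at all but simply states that it ``can be deduced from the results in Borovkov \citep{borovkov2000estimates}, see also Merlev\`ede, Peligrad and Rio \citep{merlev�de2010bernstein}.'' Your sketch of the truncation mechanism (Bernstein for the bounded part plus a union bound on the big-jump event, with Borovkov's cascade handling the intermediate range of $t$) is more than the paper itself provides, and your final step---verifying the hypotheses and reading off the constants from those references---is exactly what the paper intends.
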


\begin{lemma}
\label{lm5_4} Assume that $h=c_{0}n^{-\frac{1}{2\beta +2}%
}$ with $c_{0}>c_{1}=\left(\Gamma\frac{\left(\beta +2\right) \left(2\beta
+2\right) }{8L^{2}\beta }\right) ^{\frac{1}{2\beta +2}}$ and that $\eta
=c_{2}n^{-\frac{1}{2\beta +2}}.$ Suppose that the function $f$ satisfies the local H\"{o}lder
condition (\ref{Local Holder cond}).  Then there exists a constant $c_{4}>0$
depending only on $\beta $ and $L$, such that
\begin{equation}
\mathbb P\left\{ \max_{x\in \mathbf{U}'_{x_{0},h}} \widehat{\rho }_{x_0}(x)\geq c_{4}n^{-%
\frac{\beta}{2\beta+2}}\sqrt{\ln n}\right\} = O\left(n^{-\frac{2\beta}{%
2\beta+2}}\right).  \label{s5pr}
\end{equation}
\end{lemma}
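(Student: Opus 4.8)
The plan is to bound the right-hand side of the pointwise inequality \eqref{eq rho hate} uniformly over $x\in \mathbf{U}'_{x_0,h}$. Since $h = c_0 n^{-1/(2\beta+2)}$, the deterministic terms $L^2 h^{2\beta}$ and $c_3 h\sqrt{\ln n}$ are already of the required order $O(n^{-\beta/(2\beta+2)}\sqrt{\ln n})$ (the first is in fact of smaller order $n^{-2\beta/(2\beta+2)}$, the second is exactly the target). So the whole problem reduces to controlling the stochastic term $\frac{1}{m'}|S(x)|$ and showing the denominator $(2/\ln n + \theta V)^{1/2}$ stays bounded away from $0$ with high enough probability. For the denominator, I would simply use $(2/\ln n + \theta V)^{1/2}\geq \sqrt{2/\ln n}$ since $V\geq 0$ is not guaranteed — actually $V$ can be negative through $S(x)$, so I must be slightly more careful: on the complement of a small-probability event $S(x)$ is small, hence $V$ is close to $\frac{1}{m'}\sum \Delta^2_{x_0,x}(y)\geq 0$, so $\theta V\geq -o(1)$ and the denominator is $\gtrsim 1/\sqrt{\ln n}$. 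This produces the extra $\sqrt{\ln n}$ factor when dividing, matching the statement.

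The core estimate is therefore a large-deviation bound: I would show that
\begin{equation*}
\mathbb{P}\left\{\max_{x\in \mathbf{U}'_{x_0,h}} \frac{1}{m'}|S(x)| \geq c\, n^{-\frac{2\beta}{2\beta+2}}\right\} = O\left(n^{-\frac{2\beta}{2\beta+2}}\right).
\end{equation*}
Recall $S(x) = S_2(x) - S_1(x)$ with $S_1(x) = 2\sum_{y} \Delta_{x_0,x}(y)\eta(y)$ and $S_2(x) = \sum_y (\eta^2(y) - \overline f(x_0) - \overline f(x))$, where $\eta(y) = \epsilon(y) - \epsilon(Ty)$ is a difference of centered Poisson fluctuations, hence sub-exponential, i.e. satisfies a bound $\mathbb{E}\exp(\delta|\eta(y)|^\gamma)\leq K$ for suitable $\delta>0$, $\gamma\in(0,1)$ (here $\gamma$ can be taken arbitrarily close to $1$; the relevant moment generating function of a centered Poisson with mean $\geq 1/\ln n$ is finite, and the uniformity in $x_0$ is where the hypothesis $f\geq 1/\ln n$ enters — the constants will depend on $\ln n$, which is absorbed into the final $\ln^2 n$ factor). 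For fixed $x$, $S_1(x)$ is a sum of $m'$ independent centered terms with $|\Delta_{x_0,x}(y)|\leq 2Lh^\beta$ by the Hölder condition, and $\eta(y)$ sub-exponential, so Lemma \ref{Lemma Borovk} gives $\mathbb{P}(|S_1(x)|\geq t)\leq \exp(-c_1 t^2/(m' h^{2\beta})) + m'\exp(-c_2 t^\gamma)$; similarly $S_2(x)$ is a sum of $m'$ centered sub-exponential terms (after controlling the discrepancy between $\mathbb{E}\eta^2(y) = f(y)+f(Ty)$ and $\overline f(x_0)+\overline f(x)$, which is itself $O(h^\beta)$ deterministically plus a small stochastic error), so Lemma \ref{Lemma Borovk} applies again. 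Then I take $t \asymp m' n^{-2\beta/(2\beta+2)} = m' h^{2\beta}/c_0^{2\beta}$ (note $m'\asymp (N\eta)^2 \asymp n^{1-1/(\beta+1)} = n^{\beta/(\beta+1)}$ and $h^{2\beta}\asymp n^{-2\beta/(2\beta+2)}$, so $m' h^{2\beta}\asymp n^{\beta/(\beta+1) - \beta/(\beta+1)}$... let me just say $t$ chosen of the right order), check that both exponential terms are $O(n^{-2\beta/(2\beta+2)})$ — for the first this needs $t^2/(m'h^{2\beta})\gtrsim \ln n$, which holds because $m'h^{2\beta}$ is a small negative power of $n$ while $t^2$ involves it squared times a constant; for the second, $t^\gamma$ is a positive power of $n$ dominating $\ln(m'n)$. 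Finally, a union bound over the $M' = \mathrm{card}\,\mathbf{U}'_{x_0,h}\leq (2Nh+1)^2 = O(n^{\beta/(\beta+1)})$ pixels costs only a polynomial factor, which is again beaten by the $n^{-c_2 t^\gamma}$ decay.

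The main obstacle is bookkeeping the interplay of the three scales — the window size $m'$, the bandwidth $h$, and the lower bound $1/\ln n$ on $f$ — so that the chosen threshold $t$ simultaneously (i) yields, after division by the $\asymp m'$ normalization and the $\asymp 1/\sqrt{\ln n}$ denominator, something of order $n^{-\beta/(2\beta+2)}\sqrt{\ln n}$; (ii) makes the Gaussian-type tail $\exp(-c_1 t^2/(m'h^{2\beta}))$ at most $O(n^{-2\beta/(2\beta+2)})$; and (iii) makes the heavy tail $m'\exp(-c_2 t^\gamma)$ at most $O(n^{-2\beta/(2\beta+2)})$ even after the union bound over $M'$ pixels. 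A secondary technical point is that the sub-exponential constants $\delta, K$ in Lemma \ref{Lemma Borovk} degrade as $f(x_0)\to 0$; since we only assume $f\geq 1/\ln n$, these constants depend on $n$ through $\ln n$, and one must track that this dependence contributes at most the stated $\ln^2 n$ overhead (ultimately in Theorem \ref{Th adapt 001} rather than in this lemma, where I would state the bound with constants $c_4$ depending only on $\beta, L$ and absorb the $\ln n$ into the explicit $\sqrt{\ln n}$ and $n^{-\beta/(2\beta+2)}$ factors). Once the large-deviation estimate is in place, combining it with \eqref{eq rho hate} and the deterministic bounds on the $h$-terms finishes the proof immediately.
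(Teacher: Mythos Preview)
Your overall strategy---apply Lemma~\ref{Lemma Borovk} to $S(x)$, take a union bound over $x\in\mathbf{U}'_{x_0,h}$, and feed the result into \eqref{eq rho hate}---is exactly the paper's. However, the quantitative core of your argument is miscalibrated.

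The large-deviation target you state,
\[
\mathbb{P}\Bigl\{\max_{x}\tfrac{1}{m'}|S(x)|\ge c\,n^{-\frac{2\beta}{2\beta+2}}\Bigr\}=O\bigl(n^{-\frac{2\beta}{2\beta+2}}\bigr),
\]
cannot hold: $S(x)$ is a sum of $m'$ centered terms of bounded variance, so $\tfrac{1}{m'}S(x)$ has typical size $\asymp 1/\sqrt{m'}\asymp n^{-\beta/(2\beta+2)}$, which already exceeds your threshold $n^{-2\beta/(2\beta+2)}$. Your own check confirms this: with $m'\asymp n^{2\beta/(2\beta+2)}$ and $h^{2\beta}\asymp n^{-2\beta/(2\beta+2)}$ one has $m'h^{2\beta}\asymp 1$, \emph{not} ``a small negative power of $n$'', so your choice $t\asymp m'h^{2\beta}$ gives $t=O(1)$ and $t^2/(m'h^{2\beta})=O(1)$, making the Gaussian tail $\exp(-c_1 t^2/(m'h^{2\beta}))$ bounded away from zero. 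The correct threshold (which the paper takes) is $\tfrac{1}{m'}|S(x)|\le c_8\,n^{-\beta/(2\beta+2)}\sqrt{\ln n}$, i.e.\ $z=\sqrt{\tfrac{1}{c_5}\ln(m'M')}$ in the normalized form $\tfrac{1}{m'}|S(x)|\ge z/\sqrt{m'}$; with this choice both tails from Lemma~\ref{Lemma Borovk} become $\le 1/(m'M')$, and the union bound over $M'$ pixels yields $O(1/m')=O(n^{-2\beta/(2\beta+2)})$. After dividing by the denominator $\gtrsim 1/\sqrt{\ln n}$ this gives the claimed order for $\widehat\rho_{x_0}(x)$.

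Two smaller points. First, there is no need to split $S_1$ and $S_2$: the paper applies Lemma~\ref{Lemma Borovk} directly to $S(x)=\sum_y Z(y)$ after checking $\sup_y\mathbb{E}\exp(|Z(y)|^{1/2})\le (e\Gamma)^2 e^{2(e-1)\Gamma+\sqrt{2(\Gamma^2+\Gamma)}}$. Second, your worry that the sub-exponential constants degrade with $f\ge 1/\ln n$ is misplaced: that moment bound depends only on the \emph{upper} bound $\Gamma$, not on any lower bound. The hypothesis $f\ge 1/\ln n$ enters solely through the denominator in \eqref{eq rho hate}, exactly as you say.
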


\begin{proof}
Note that
\begin{equation*}
\mathbb{E}e^{Y(y)}=\sum_{k=0}^{+\infty }e^{k}\frac{f^{k}(x)e^{-f(x)}}{k!}%
=ef(y)e^{(e-1)f(y)}\leq e\Gamma e^{(e-1)\Gamma }.
\end{equation*}%
From this inequality we easily deduce that
\begin{eqnarray*}
\sup_{y} \mathbb{E}\left(e^{|Z(y)|^{1/2}}\right)
&\leq&
 \sup_{y} \left(\mathbb{E}e^{Y(T_{x,x_{0}}y)+Y(y)+\sqrt{2(\Gamma ^{2}+\Gamma)}}\right)
 \\&\leq&
  \left(e\Gamma \right) ^{2}e^{2(e-1)\Gamma +\sqrt{2(\Gamma ^{2}+\Gamma)}},
\end{eqnarray*}%
where
\begin{equation*}
Z(y)=2\Delta _{x_{0},x}\left(y\right) \eta \left(y\right) +\left(\eta^{2}
\left(y\right) -\overline{f}(x_{0})-\overline{f}(x)\right) .
\end{equation*}%
By Lemma \ref{Lemma Borovk}, we infer that there exists two positive
constants $c_{5}$ and $c_{6}$ such that
\begin{equation}
\mathbb{P}\left(\frac{1}{m'}|S(x)|\geq z/\sqrt{m'}\right)\leq \exp (-c_{5}z^{2})+m'\exp (-c_{6}(%
\sqrt{m'}z)^{\frac{1}{2}}).  \label{large number theorem}
\end{equation}%
Substituting $z=\sqrt{\frac{1}{c_{5}}\ln m'M'}$ into the inequality (\ref%
{large number theorem}), we see that for $m'$ large enough,
\begin{equation*}
\mathbb{P}\left(\frac{1}{m'}\left\vert S(x)\right\vert \geq \frac{\sqrt{\frac{1}{c_{5}%
}\ln m'M'}}{\sqrt{m'}}\right) \leq 2\exp \left(-\ln m'M'\right) =\frac{2}{m'M'}.
\end{equation*}%
From this inequality we easily deduce that
\begin{eqnarray*}
&&\mathbb{P}\left(\max_{x\in \mathbf{U}_{x_{0},h}^{\prime }}\frac{1}{m'}\left\vert
S(x)\right\vert \geq \frac{\sqrt{\frac{1}{c_{5}}\ln m'M'}}{\sqrt{m'}}\right)
\\ &
\leq & \sum_{x\in \mathbf{U}_{x_{0},h}^{\prime }}
\mathbb{P}\left(\frac{1}{m'}\left\vert
S(x)\right\vert \geq \frac{\sqrt{\frac{1}{c_{5}}\ln m'M'}}{\sqrt{m'}}\right)
\leq \frac{2}{m'}.
\end{eqnarray*}%
Taking $M'=(2Nh+1)^{2}/2=c_{0}^{2}n^{\frac{2\beta }{2\beta +2}}/2$ and $m'=(2N\eta+1)
^{2}/2=c_{2}^{2}n^{\frac{2\beta }{2\beta +2}}/2$, we arrive at
\begin{equation}
\mathbb{P}\left(\mathbf{B}\right) \leq c_{7}n^{-\frac{2\beta }{2\beta +2}},
\label{ineq p b}
\end{equation}%
where $\mathbf{B}=\{\max_{x\in \mathbf{U}_{x_{0},h}^{\prime }}\frac{1}{m'}%
\left\vert S(x)\right\vert \geq c_{8}n^{-\frac{\beta }{2\beta +2}}\sqrt{\ln n%
}\}$ and $c_{8}$ is a constant depending only on $\beta $ and $L$. Since on
the set $\mathbf{B}$ we have
\begin{equation}
\left(\frac{2}{\ln n}+\theta V\right) ^{1/2}<\frac{1}{\sqrt{\ln n}}
\label{ineq sigma}
\end{equation}%
for $n$ large enough, combining (\ref{eq rho hate}), (\ref{ineq p b}) and (%
\ref{ineq sigma}), we get (\ref{s5pr}).
\end{proof}

%%%%%%%%%%%%%%%%%%%%%%%%%%%%%%%%%%%%%%%%%%%%%%%%%%%%%%%%%%%%%%%%%%%%%%%%%%%%%%%%%%%%%%%
%%%%%%                         proof of Lemma      4                             %%%%%%
%%%%%%%%%%%%%%%%%%%%%%%%%%%%%%%%%%%%%%%%%%%%%%%%%%%%%%%%%%%%%%%%%%%%%%%%%%%%%%%%%%%%%%%

\begin{lemma}
\label{s5ef} Suppose that the conditions of Theorem \ref{Th adapt 001} are
satisfied. Then
\begin{equation*}
\mathbb{P}\left(\mathbb{E}\{|\widehat{f}_{h}(x_{0})-f(x_{0})|^{2}\big|Y(x),x\in \mathbf{I}''_{x_{0}}\}\geq c_{9}n^{-\frac{2\beta }{2\beta +2}}\ln n\right)=O(n^{-\frac{%
2\beta}{2\beta+2}}),
\end{equation*}%
where $c_{9}>0$ is a constant depending only on $\beta$ and $L$. \label%
{lm5_6}
\end{lemma}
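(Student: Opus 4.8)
The plan is to condition on the $\sigma$-algebra $\mathcal{G}=\sigma\left(Y(x):x\in\mathbf{I}''_{x_{0}}\right)$ and to exploit that, by construction, the weights $\widehat{w}$ are $\mathcal{G}$-measurable while the counts entering the sum (\ref{s3fh}) are independent of $\mathcal{G}$. First I would verify this measurability/independence. A parity computation shows that for $x\in\mathbf{U}'_{x_{0},h}$ and $y\in\mathbf{U}''_{x_{0},\eta}$ the translate $Ty=x+(y-x_{0})$ again belongs to $\mathbf{I}''_{x_{0}}$ (its index offsets sum to an even number plus an odd number); hence $\widehat{\rho}_{x_{0}}(x)$ in (\ref{s3rx}) and $\overline{f}(x_{0})$ depend only on $\{Y(y):y\in\mathbf{I}''_{x_{0}}\}$, so $\widehat{w}$ defined by (\ref{s3ww}) is $\mathcal{G}$-measurable. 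Since the cells $\mathbf{B}_{x}$ with $x\in\mathbf{U}'_{x_{0},h}\subset\mathbf{I}'_{x_{0}}$ are disjoint from all cells indexed by $\mathbf{I}''_{x_{0}}$, the variables $\{Y(x):x\in\mathbf{U}'_{x_{0},h}\}$ are independent of $\mathcal{G}$. Writing $\epsilon(x)=Y(x)-f(x)$, using $\sum_{x}\widehat{w}(x)=1$ and that the $\epsilon(x)$ are independent with $\mathbb{E}[\epsilon(x)\mid\mathcal{G}]=0$ and $\mathrm{Var}(\epsilon(x))=f(x)$, the conditional Mean Square Error splits as
\[\mathbb{E}\{|\widehat{f}_{h}(x_{0})-f(x_{0})|^{2}\mid\mathcal{G}\}=B^{2}+V,\]
with $B=\sum_{x\in\mathbf{U}'_{x_{0},h}}\widehat{w}(x)(f(x)-f(x_{0}))$ and $V=\sum_{x\in\mathbf{U}'_{x_{0},h}}\widehat{w}(x)^{2}f(x)$.

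The bias needs no randomness: for $x\in\mathbf{U}'_{x_{0},h}$ one has $\|x-x_{0}\|_{\infty}\le h$, so the local H\"{o}lder condition (\ref{Local Holder cond}) gives $|f(x)-f(x_{0})|\le Lh^{\beta}$, whence $|B|\le Lh^{\beta}\sum_{x}\widehat{w}(x)=Lh^{\beta}=Lc_{0}^{\beta}n^{-\beta/(2\beta+2)}$, i.e. $B^{2}=O(n^{-2\beta/(2\beta+2)})$. This is where the choice $h\asymp n^{-1/(2\beta+2)}$ enters; the optimisation of the weights is needed only to keep $V$ small.

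For the variance I would work on the good event $\mathbf{A}=\mathbf{A}_{1}\cap\mathbf{A}_{2}$, where $\mathbf{A}_{1}=\{\max_{x\in\mathbf{U}'_{x_{0},h}}\widehat{\rho}_{x_{0}}(x)\le c_{4}n^{-\beta/(2\beta+2)}\sqrt{\ln n}\}$, so that $\mathbb{P}(\mathbf{A}_{1}^{c})=O(n^{-2\beta/(2\beta+2)})$ by Lemma~\ref{lm5_4}, and $\mathbf{A}_{2}=\{\tfrac{1}{2}f(x_{0})\le\overline{f}(x_{0})\le 2\Gamma\}$, a concentration event for the empirical mean $\overline{f}(x_{0})$ of the Poisson counts. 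On $\mathbf{A}_{2}$, since $f(x_{0})\ge 1/\ln n$ eventually dominates $Lh^{\beta}$, the H\"{o}lder bound yields $f(x)\le f(x_{0})+Lh^{\beta}\le\tfrac{5}{2}\overline{f}(x_{0})$ for all $x\in\mathbf{U}'_{x_{0},h}$ and $n$ large. Because $\widehat{w}$ minimises the objective in (\ref{s3ww}) over the same constraint set (\ref{s2wx}) restricted to $\mathbf{U}'_{x_{0},h}$, comparison with the feasible uniform weight $u\equiv 1/M'$ (where $M'=\mathrm{card}\,\mathbf{U}'_{x_{0},h}\asymp n^{2\beta/(2\beta+2)}$) gives
\[\overline{f}(x_{0})\sum_{x}\widehat{w}(x)^{2}\le\Big(\tfrac{1}{M'}\sum_{x}\widehat{\rho}_{x_{0}}(x)\Big)^{2}+\frac{\overline{f}(x_{0})}{M'}\le\big(\max_{x}\widehat{\rho}_{x_{0}}(x)\big)^{2}+\frac{2\Gamma}{M'},\]
which on $\mathbf{A}$ is $\le c_{4}^{2}n^{-2\beta/(2\beta+2)}\ln n+O(n^{-2\beta/(2\beta+2)})$. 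Hence on $\mathbf{A}$ we get $V=\sum_{x}\widehat{w}(x)^{2}f(x)\le\tfrac{5}{2}\overline{f}(x_{0})\sum_{x}\widehat{w}(x)^{2}=O(n^{-2\beta/(2\beta+2)}\ln n)$; it is precisely the use of $f(x)\le\tfrac{5}{2}\overline{f}(x_{0})$ in place of $f(x)\le\Gamma$ that produces a single factor $\ln n$ rather than $\ln^{2}n$.

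Combining the last two paragraphs, on $\mathbf{A}$ we obtain $\mathbb{E}\{|\widehat{f}_{h}(x_{0})-f(x_{0})|^{2}\mid\mathcal{G}\}=B^{2}+V\le c_{9}n^{-2\beta/(2\beta+2)}\ln n$ for a constant $c_{9}=c_{9}(\beta,L)$, so the event in the statement is contained in $\mathbf{A}^{c}$ and $\mathbb{P}(\mathbf{A}^{c})\le\mathbb{P}(\mathbf{A}_{1}^{c})+\mathbb{P}(\mathbf{A}_{2}^{c})=O(n^{-2\beta/(2\beta+2)})$. I expect the main obstacle to be the concentration estimate behind $\mathbf{A}_{2}$: one must show that $\overline{f}(x_{0})$ stays of order $f(x_{0})$ with probability $1-O(n^{-2\beta/(2\beta+2)})$ even when $f(x_{0})$ is as small as $1/\ln n$ --- this is what forces the assumption $f\ge 1/\ln n$ and the logarithmic factors --- which follows from a sub-exponential deviation bound for sums of Poisson variables (Lemma~\ref{Lemma Borovk}), exactly as in the proof of Lemma~\ref{lm5_4}.
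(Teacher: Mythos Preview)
Your argument is correct and reaches the same conclusion, but the route differs from the paper's. Both proofs start from the conditional bias--variance decomposition and dispose of the bias term via $|f(x)-f(x_{0})|\le Lh^{\beta}$. For the variance, the paper first replaces $\sum_{x}\widehat{w}(x)^{2}f(x)$ by $\overline{f}(x_{0})\sum_{x}\widehat{w}(x)^{2}$ (a step it does not justify), then adds back $\big(\sum_{x}\widehat{w}(x)\widehat{\rho}_{x_{0}}(x)\big)^{2}$ so as to recreate exactly the objective (\ref{s3ww}); the optimality of $\widehat{w}$ then allows substitution of the \emph{deterministic} comparison weights $\overline{w}_{1}=\arg\min_{w}\overline{g}_{1}(w)$, after which Lemma~\ref{lm5_4} controls the $\widehat{\rho}_{x_{0}}$ contribution and Lemma~\ref{lm5_2} bounds $\overline{g}_{1}(\overline{w}_{1})$. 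You instead compare $\widehat{w}$ against the uniform weights (which is more elementary and avoids the appeal to Lemma~\ref{lm5_2}) and, crucially, you introduce the extra event $\mathbf{A}_{2}=\{\tfrac12 f(x_{0})\le\overline{f}(x_{0})\le 2\Gamma\}$ to legitimise the passage from $f(x)$ to a multiple of $\overline{f}(x_{0})$ in the variance term. This extra event is precisely what the paper's proof is missing when it writes $\overline{f}(x_{0})\sum_{x}\widehat{w}^{2}(x)$ as an upper bound for the conditional variance, so your version is in fact a cleaner and more rigorous variant; the required concentration bound $\mathbb{P}(\mathbf{A}_{2}^{c})=O(n^{-2\beta/(2\beta+2)})$ indeed follows from Lemma~\ref{Lemma Borovk} since $m'\asymp n^{2\beta/(2\beta+2)}$ and $f(x_{0})\ge 1/\ln n$, giving an exponentially small tail.
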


%%%%%%%%%%%%%%%%%%%%%%%%%%%%%%%%%%%%%%%%%%%%%%%%%%%%%%%%%%%%%%%%%%%%%%%%%%%%%%%%%%%%%%%

\begin{proof}
Taking into account (\ref{s3rx}), (\ref{s3fh}) and the independence of $%
\epsilon (x)$, we have
\begin{equation}
\begin{split}
& \mathbb{E}\{|\widehat{f}_{h}(x_{0})-f(x_{0})|^{2}\big|Y(x),x\in \mathbf{I}''_{x_{0}}\} \\
& \leq \left(\sum_{x\in \mathbf{U}_{x_{0},h(x)}^{\prime }}\widehat{w}%
(x)\rho (x)\right) ^{2}+\overline{f}(x)\sum_{x\in \mathbf{U}%
_{x_{0},h(x)}^{\prime }}\widehat{w}^{2}(x).
\end{split}
\label{eq Lemma experance}
\end{equation}%
Since $\rho(x)<Lh^{\beta} $, from (\ref{eq Lemma experance}) we get
\begin{equation*}
\begin{split}
& \mathbb{E}\{|\widehat{f}_{h}(x_{0})-f(x_{0})|^{2}\big|Y(x),x\in \mathbf{I}''_{x_{0}}\} \\
& \leq \left(\sum_{x\in \mathbf{U}_{x_{0},h}^{\prime }}\widehat{w}%
(x)Lh^{\beta} \right) ^{2}+\overline{f}(x)\sum_{x\in \mathbf{U}%
_{x_{0},h}^{\prime }}\widehat{w}^{2}(x) \\
& \leq \left(\left(\sum_{x\in \mathbf{U}_{x_{0},h}^{\prime }}\widehat{w}(x)%
\widehat{\rho }_{x_0}(x)\right) ^{2}+\overline{f}(x)\sum_{x\in \mathbf{U}%
_{x_{0},h}^{\prime }}\widehat{w}^{2}(x)\right) +L^2h^{2\beta}.
\end{split}%
\end{equation*}%
Recall that $\widehat{w}(x)$ stand for the optimal weights, defined by (\ref{s3ww}). Therefore
\begin{equation}
\begin{split}
&\mathbb{ E}\{|\widehat{f}_{h}(x_{0})-f(x_{0})|^{2}\big|Y(x),x\in \mathbf{I}''_{x_{0}}\} \\
& \leq \left(\left(\sum_{x\in \mathbf{U}_{x_{0},h}^{\prime }}{\overline{w}}%
_{1}(x)\widehat{\rho }_{x_0}(x)\right) ^{2}+\overline{f}(x)\sum_{x\in \mathbf{U}%
_{x_{0},h}^{\prime }}{\overline{w}}_{1}^{ 2}(x)\right)+L^2h^{2\beta},
\end{split}
\label{s5ef4}
\end{equation}%
where $\overline{w}_{1}=\arg \min\limits_{w}\overline{g}_{1}(w)$ with
\begin{equation*}
\overline{g}_{1}(w)= \left(\sum_{x\in \mathbf{U}_{x_{0},h}^{\prime
}}{w}(x)L\|x-x_0\|_{\infty}^{\beta}\right) ^{2}+\Gamma\sum_{x\in \mathbf{U}_{x_{0},h}^{\prime
}}{w}^{2}(x).
\end{equation*}%
Since by Lemma \ref{lm5_4},
\begin{equation*}
\mathbb{P}\left\{ \max_{x\in \mathbf{U}_{x_{0},h}} \widehat{\rho }_{x_0}(x)< c_{4}n^{-\frac{%
\beta}{2\beta+2}}\sqrt{\ln n}\right\} = 1-O(n^{-\frac{2\beta}{2\beta+2}}),
\end{equation*}
the inequality (\ref{s5ef4}) becomes
\begin{equation*}
\begin{split}
&\mathbb{P}\left(\mathbb{E}\{|\widehat{f}_{h}(x_{0})-f(x_{0})|^{2}\big|Y(x),x\in \mathbf{I}''_{x_{0}}\} < \overline{g}_{1}(\overline{w}_{1})+2c^2_{4}n^{-\frac{2\beta}{%
2\beta+2}}\ln n+L^2h^{2\beta}\right) \\
&=1-O(n^{-\frac{2\beta}{2\beta+2}}).
\end{split}
\label{eq p e}
\end{equation*}%
Now, the assertion of the theorem is obtained easily if we note that $%
h^{2\beta}=c_{10}^{2\beta }n^{-\frac{2\beta}{2\beta+2}}$ and $\overline{g}%
_{1}(\overline{w}_{1})\leq c_{11}n^{-\frac{2\beta }{2\beta +2}}, $ for some
constant $c_{12}$ depending only on $\beta$ and $L$ (by Lemma \ref{lm5_2}
with $\mathbf{U}_{x_{0},h}^{\prime }$ instead of $\mathbf{U}_{x_{0},h}$).
\end{proof}

{\bf Proof of Theorem \ref{Th adapt 001}}. Using (\ref{eq Lemma experance}), the condition (\ref{Local Holder cond}) and bound $f(x) \leq \Gamma$ we obtain
\begin{equation*}
\mathbb{E}\left(|\widehat{f}_{h}(x_{0})-f(x_{0})|^{2}\big|Y(x),x\in \mathbf{I}''_{x_{0}},\right) \leq \overline{g}_{1}(\widehat{w})\leq c_{12},
\end{equation*}%
for a constant $c_{14}>0$ depending only on $\beta $, $L$ and $\Gamma$. Applying Lemma %
\ref{s5ef}, we have
\begin{equation*}
\begin{split}
\mathbb{E}& \left(|\widehat{f}_{h}(x_{0})-f(x_{0})|^{2}\big|Y(x),x\in \mathbf{I}''_{x_{0}},\right) \\
<& \mathbb{P}\left(E\{|\widehat{f}_{h}(x_{0})-f(x_{0})|^{2}\big|Y(x),x\in \mathbf{I}''_{x_{0}}\}<c_{9}n^{-\frac{2\beta }{2\beta +2}}\ln n\right) c_{9}n^{-\frac{%
2\beta }{2\beta +2}}\ln n \\
& +\mathbb{P}\left(E\{|\widehat{f}_{h}(x_{0})-f(x_{0})|^{2}\big|Y(x),x\in \mathbf{I}''_{x_{0}}\}\geq c_{9}n^{-\frac{2\beta }{2\beta +2}}\ln n\right) c_{12} \\
=& O\left(n^{-\frac{2\beta }{2\beta +2}}\ln n\right) ,
\end{split}%
\end{equation*}%
%
%
%
%.
where the constant in $O$ depending only on $\beta $, $L$ and $\Gamma$. Taking
expectation proves Theorem \ref{Th adapt 001}.

%\bibliographystyle{spmpscinat}
%\bibliographystyle{apalike}
%\bibliography{owfpoisson}

\begin{center}
\begin{figure}[tbp]
\renewcommand{\arraystretch}{0.5} \addtolength{\tabcolsep}{-5pt} \vskip3mm {%
\fontsize{8pt}{\baselineskip}\selectfont
\begin{center}
\begin{tabular}{ccccc}
\includegraphics[width=0.25\linewidth]{Spots_original.eps} & %
\includegraphics[width=0.25\linewidth]{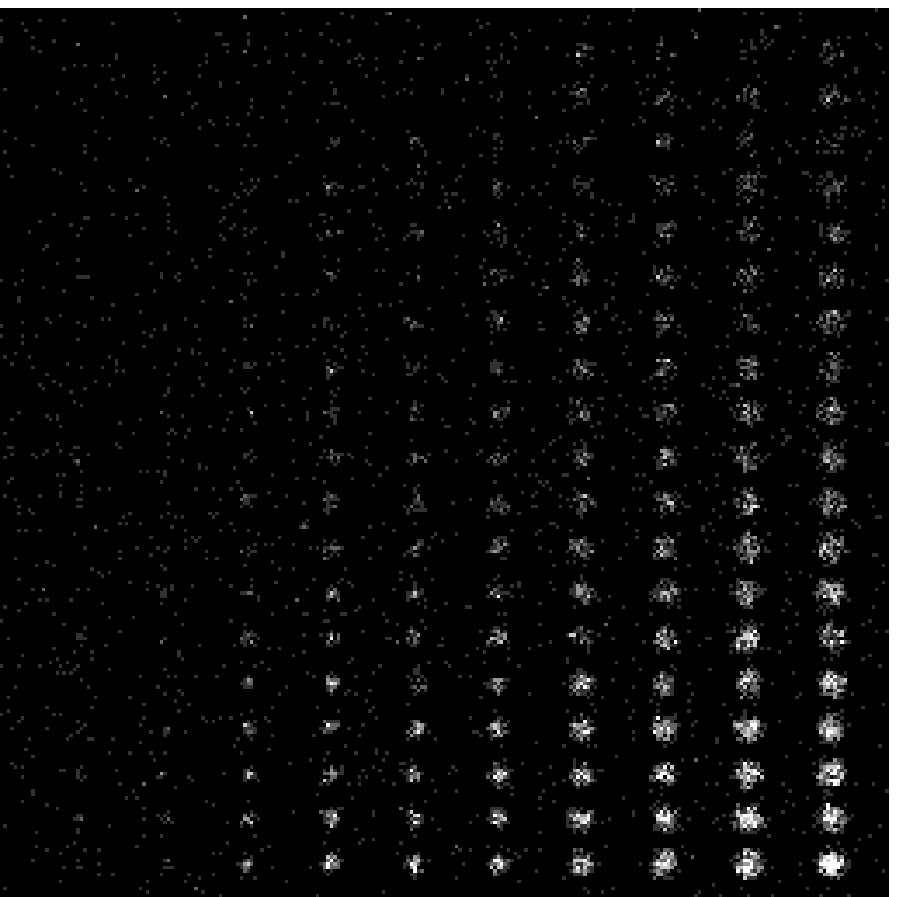} & %
\includegraphics[width=0.25\linewidth]{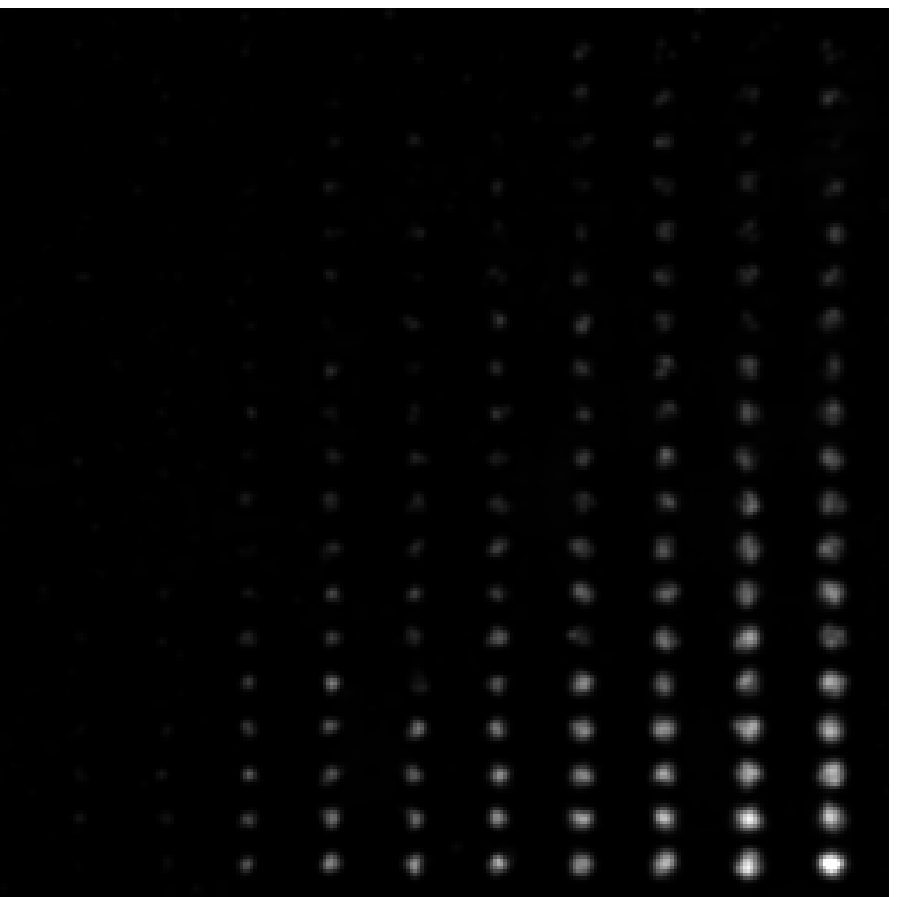} &  &  \\
(a) Original image & (b) Noisy image & (c)  OWF &  &  \\
\includegraphics[width=0.25\linewidth]{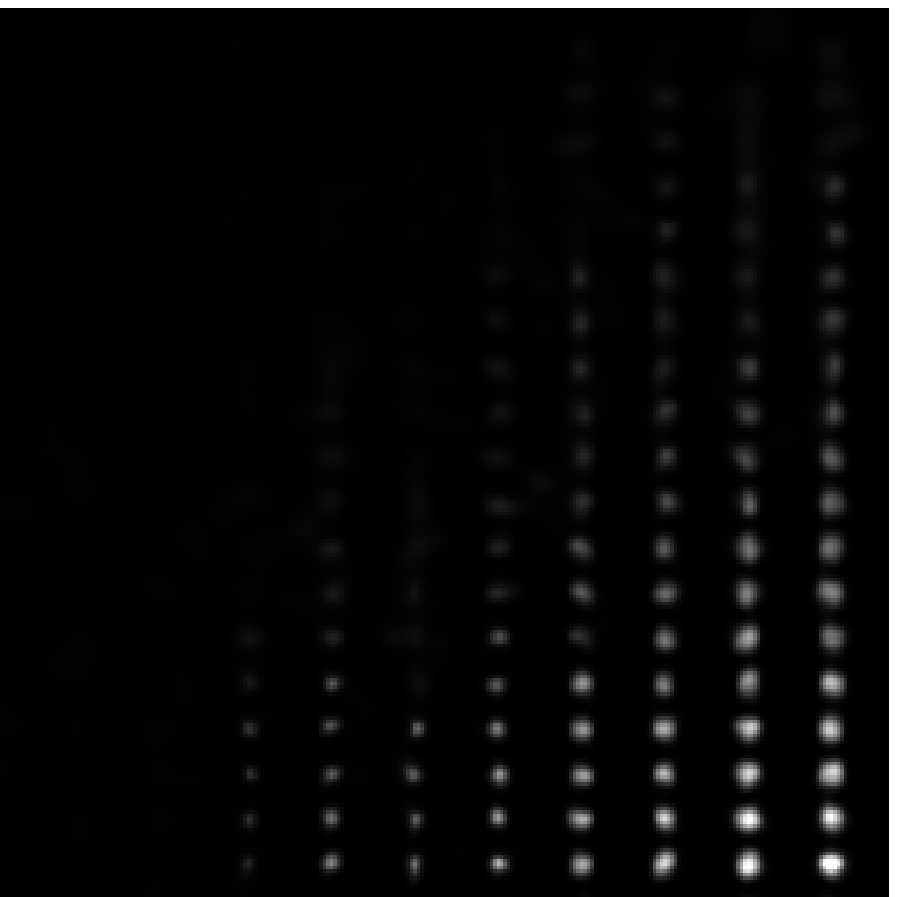} & %
\includegraphics[width=0.25\linewidth]{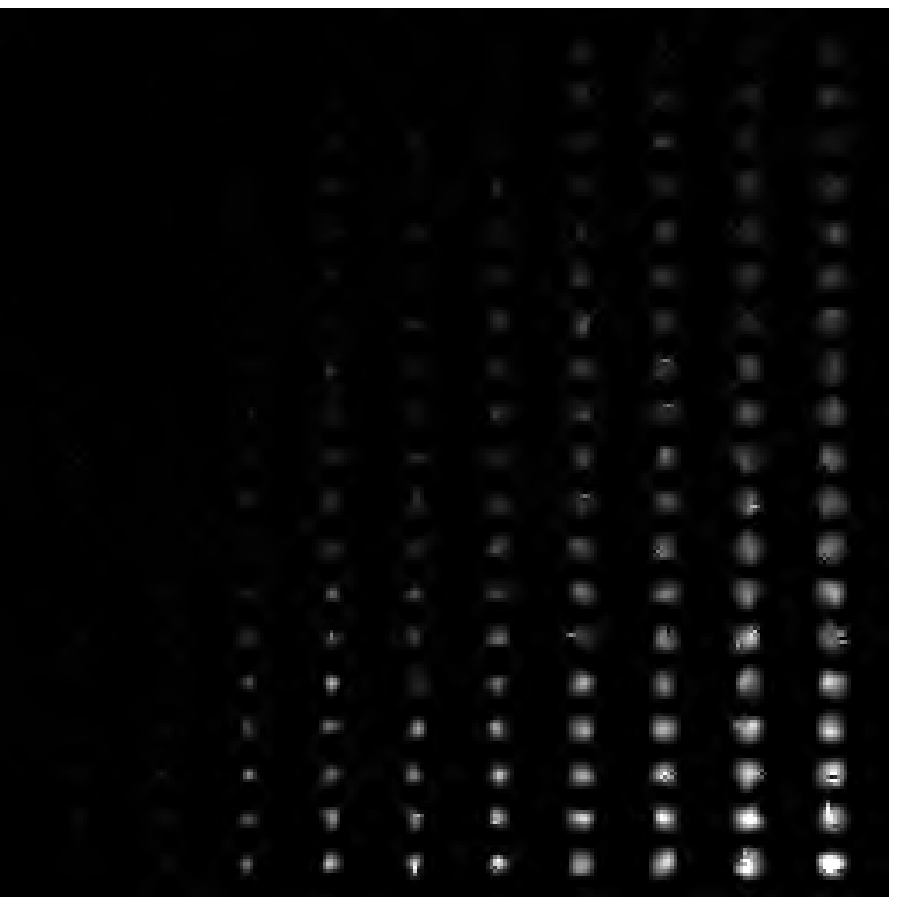} & %
\includegraphics[width=0.25\linewidth]{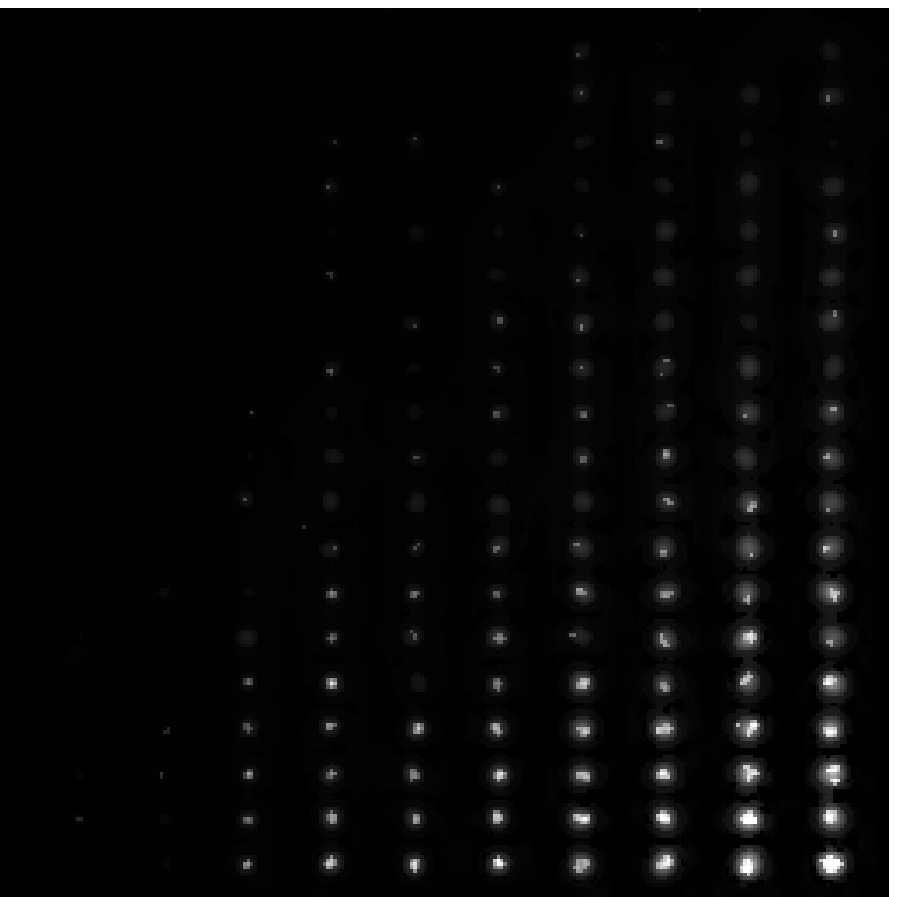} &  &  \\
(d)  EUI+BM3D & (e)  MS-VST + $7/9$ & (f)  MS-VST + B3 &  &
\end{tabular}
\end{center}
}
\caption{{\protect\small Denoising an image of simulated spots of different
radii (image size: $256 \times 256$). (a) simulated sources (amplitudes $\in
[0.08, 4.99]$; background $= 0.03$); (b) observed counts; (c) Optimal
Weights Filter ($M=19\times19$, $m=13\times13$, $d=2$ and $H=1, $ $NMISE =
0.0259$); (d) Exact unbiased inverse + BM3D ($NMISE=0.0358$) (e) MS-VST + $%
7/9$ biorthogonal wavelet ($J = 5$, $FPR = 0.01$,$N_{max}= 5$ iterations, $%
NMISE = 0.0602$); (f) MS-VST + B3 isotropic wavelet ($J = 5$, $FPR = 0.01$, $%
N_{max}= 5$ iterations, $NMISE = 0.81$).}}
\label{Fig spots}
\end{figure}
\end{center}

\begin{center}
\begin{figure}[tbp]
\renewcommand{\arraystretch}{0.5} \addtolength{\tabcolsep}{-5pt} \vskip3mm {%
\fontsize{8pt}{\baselineskip}\selectfont
\begin{center}
\begin{tabular}{ccccc}
\includegraphics[width=0.25\linewidth]{Galaxy_original.eps} & %
\includegraphics[width=0.25\linewidth]{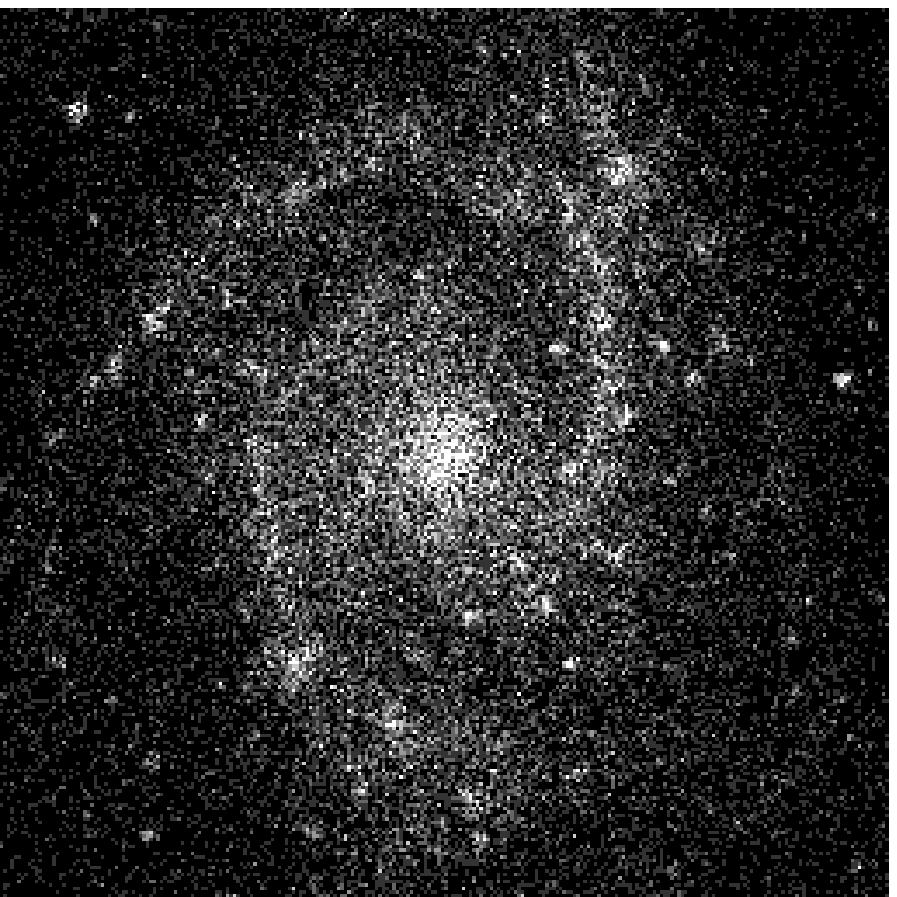} & %
\includegraphics[width=0.25\linewidth]{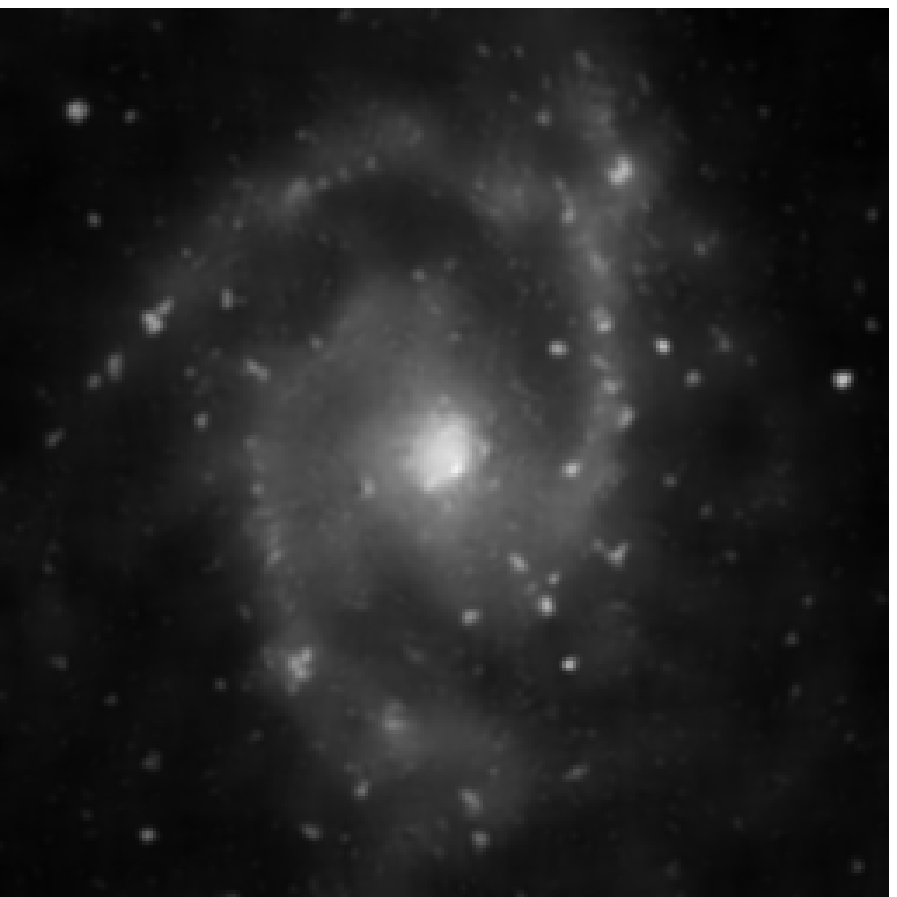} &  &  \\
(a) Original image & (b) Noisy image & (c)  OWF &  &  \\
\includegraphics[width=0.25\linewidth]{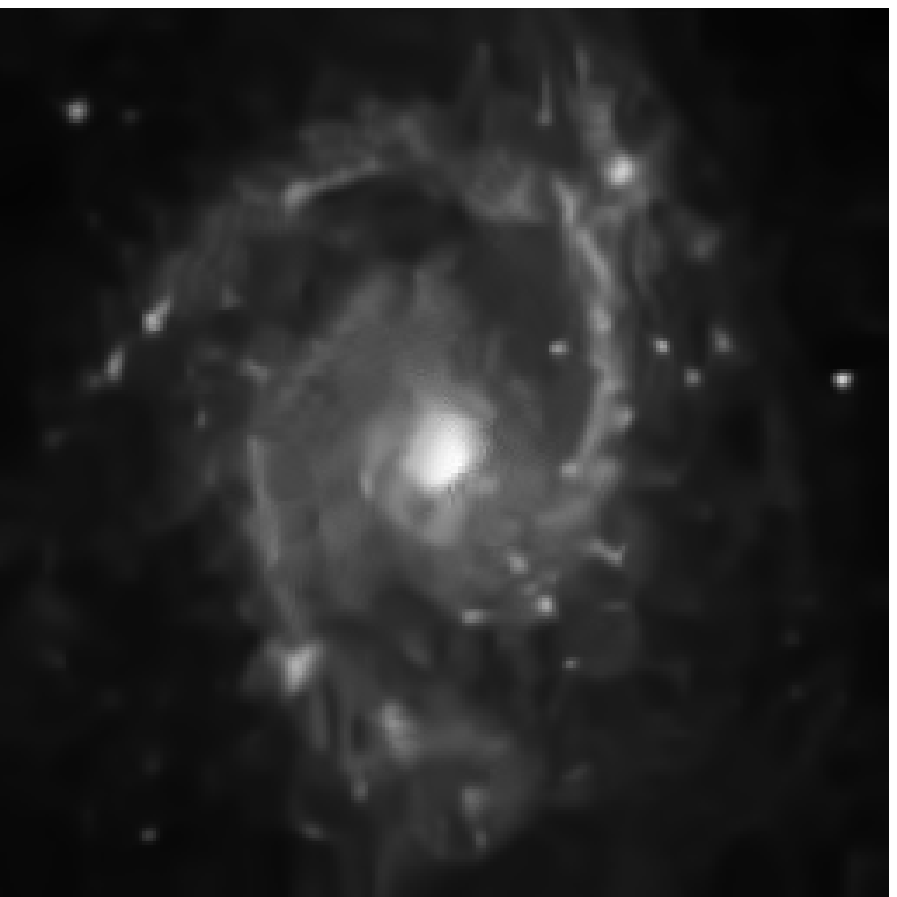} & %
\includegraphics[width=0.25\linewidth]{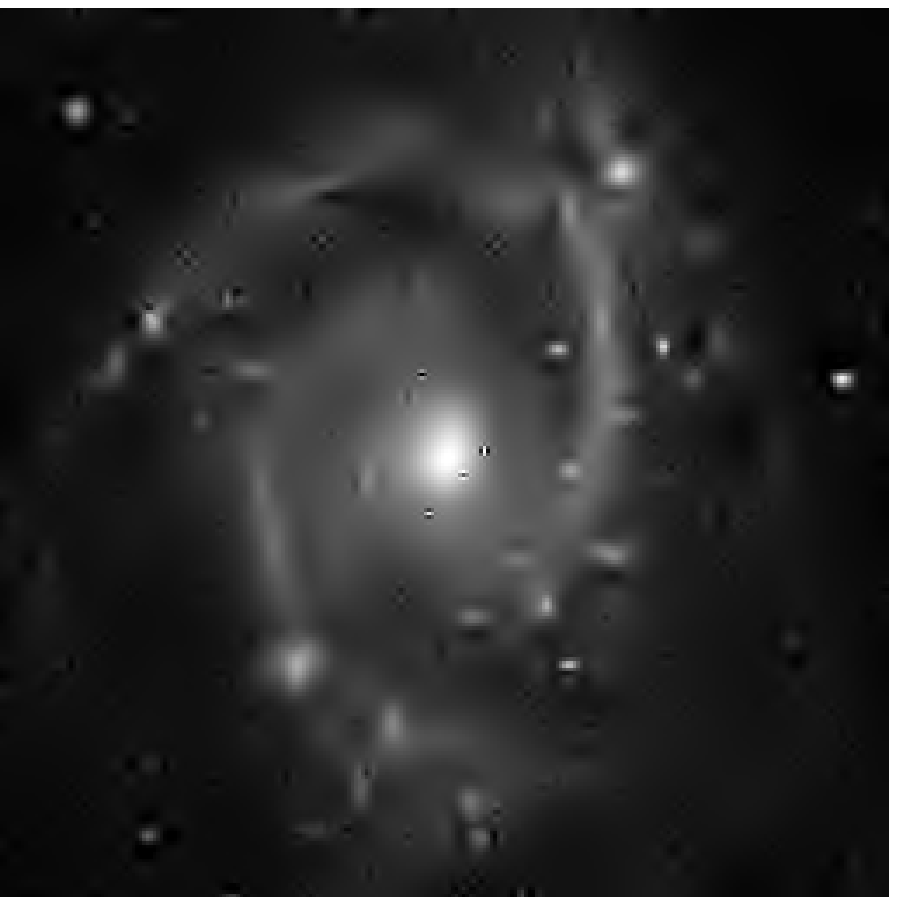} & %
\includegraphics[width=0.25\linewidth]{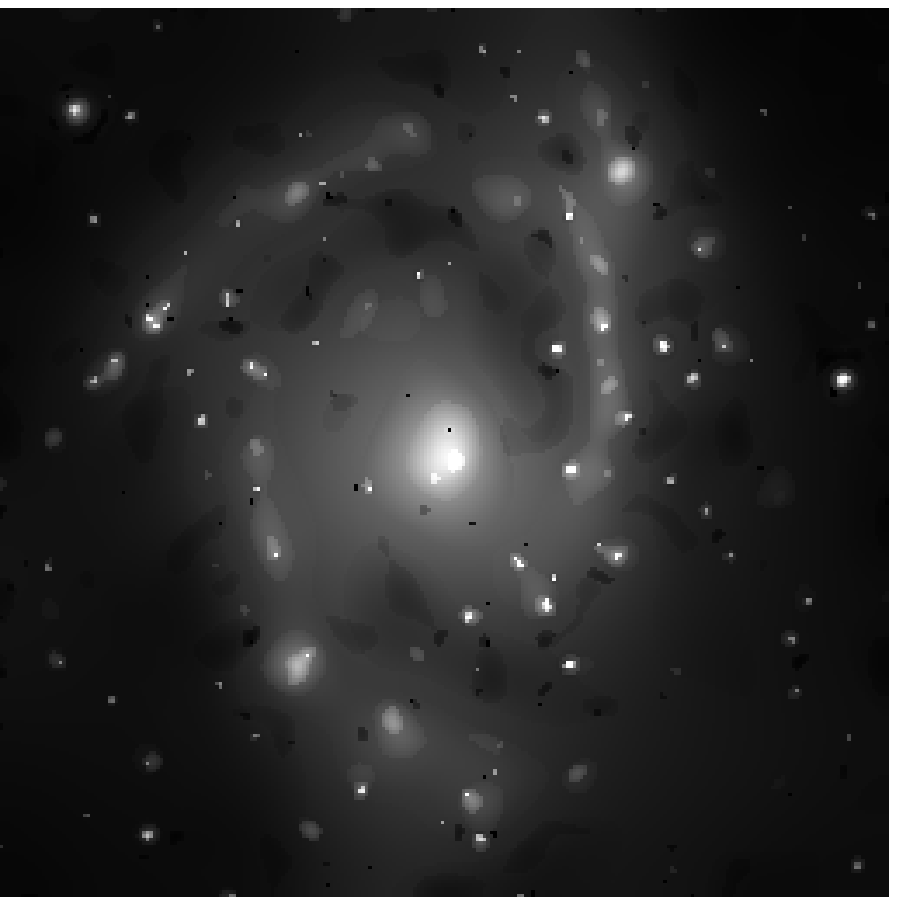} &  &  \\
(d)  EUI+BM3D & (e)  MS-VST + $7/9$ & (f)  MS-VST + B3 &  &
\end{tabular}
\end{center}
}
\caption{{\protect\small Denoising a galaxy image (image size: $256 \times
256$). (a) galaxy image (intensity $\in [0, 5]$); (b) observed counts; (c)
Optimal Weights Filter ($M=15\times15$, $m=5\times5$, $d=2$ and $H=1$, $%
NMISE = 0.0285$); (d) Exact unbiased inverse + BM3D ($NMISE=0.0297$) (e)
MS-VST + $7/9$ biorthogonal wavelet ($J = 5$, $FPR = 0.0001$, $N_{max}= 5$
iterations, $NMISE = 0.0357$); (f) MS-VST + B3 isotropic wavelet ($J = 3$, $%
FPR = 0.0001$, $N_{max}= 10$ iterations, $NMISE = 0.0338$).}}
\label{Fig Galaxy}
\end{figure}
\end{center}

\begin{center}
\begin{figure}[tbp]
\renewcommand{\arraystretch}{0.5} \addtolength{\tabcolsep}{-5pt} \vskip3mm {%
\fontsize{8pt}{\baselineskip}\selectfont
\begin{center}
\begin{tabular}{ccccc}
\includegraphics[width=0.25\linewidth]{Ridges_original.eps} & %
\includegraphics[width=0.25\linewidth]{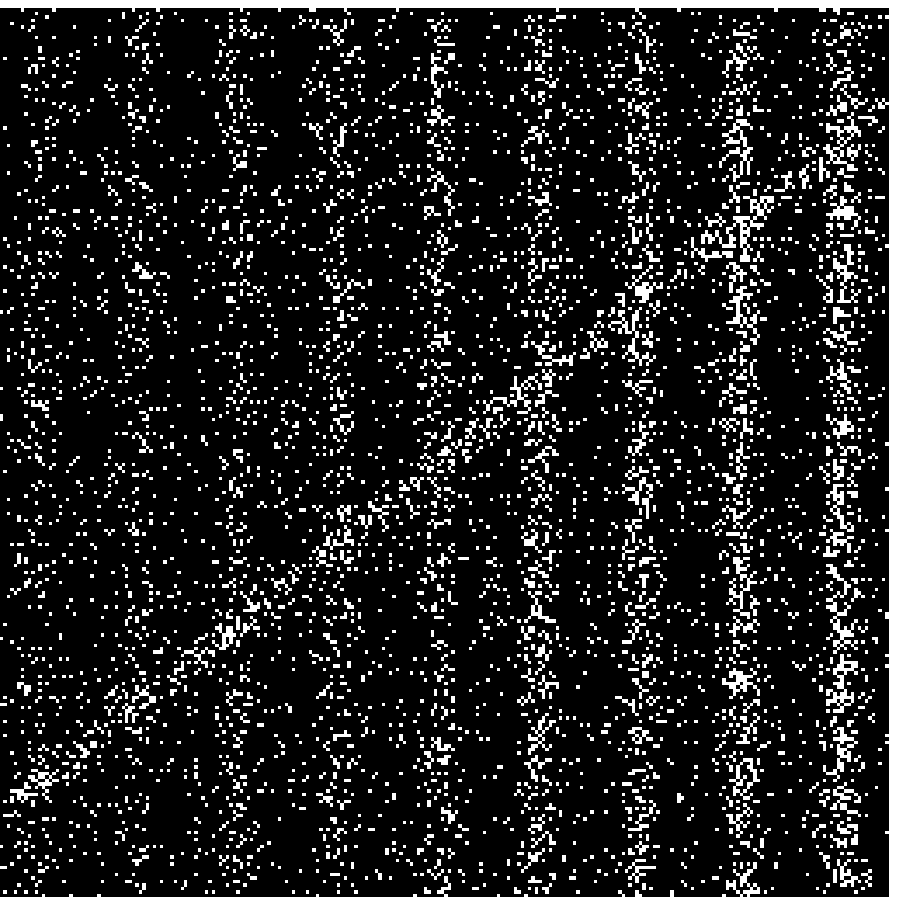} & %
\includegraphics[width=0.25\linewidth]{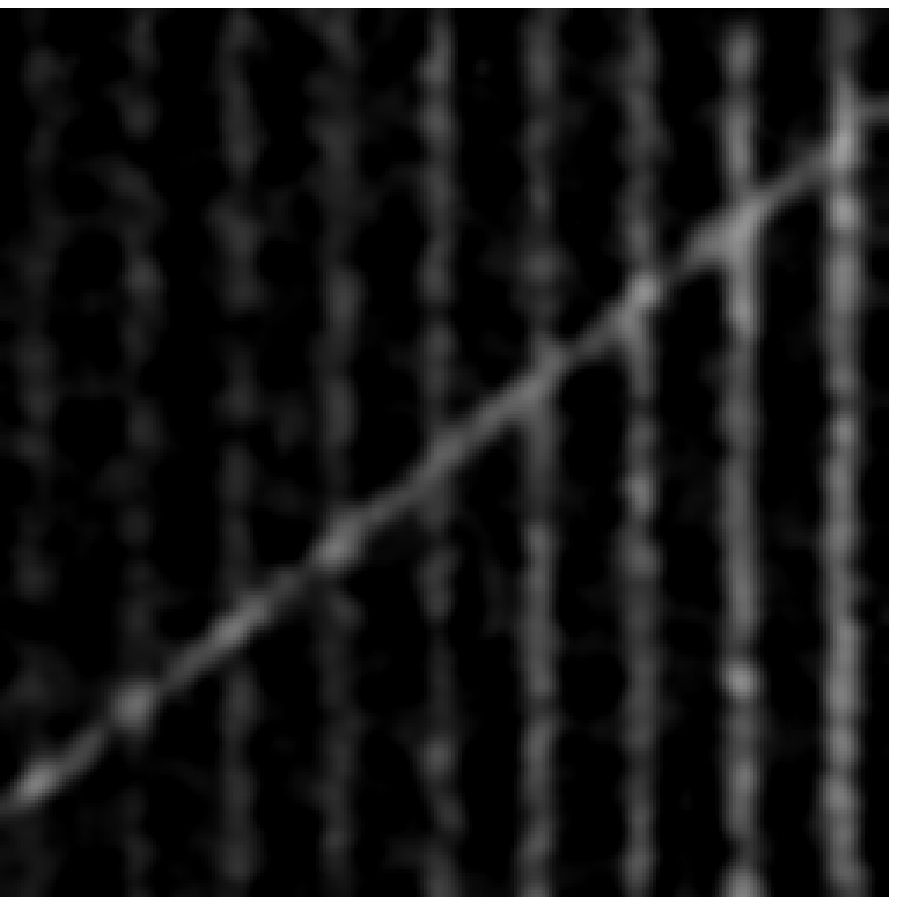} &  &  \\
(a) Original image & (b) Noisy image & (c)  OWF &  &  \\
\includegraphics[width=0.25\linewidth]{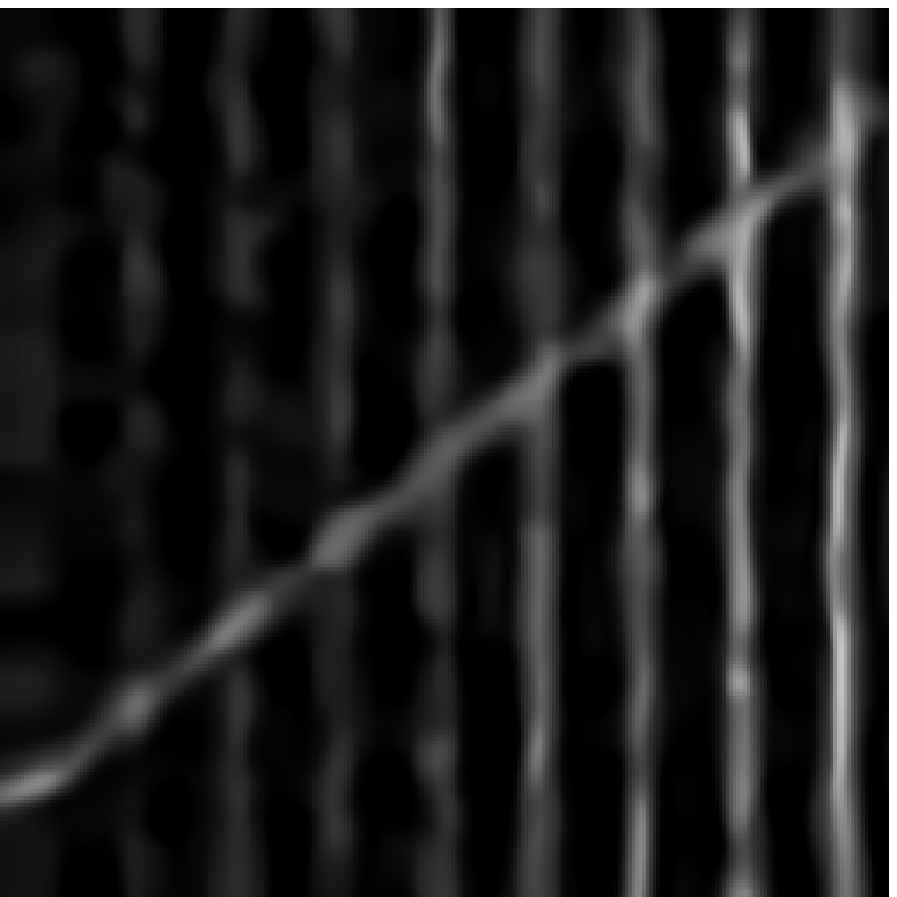} & %
\includegraphics[width=0.25\linewidth]{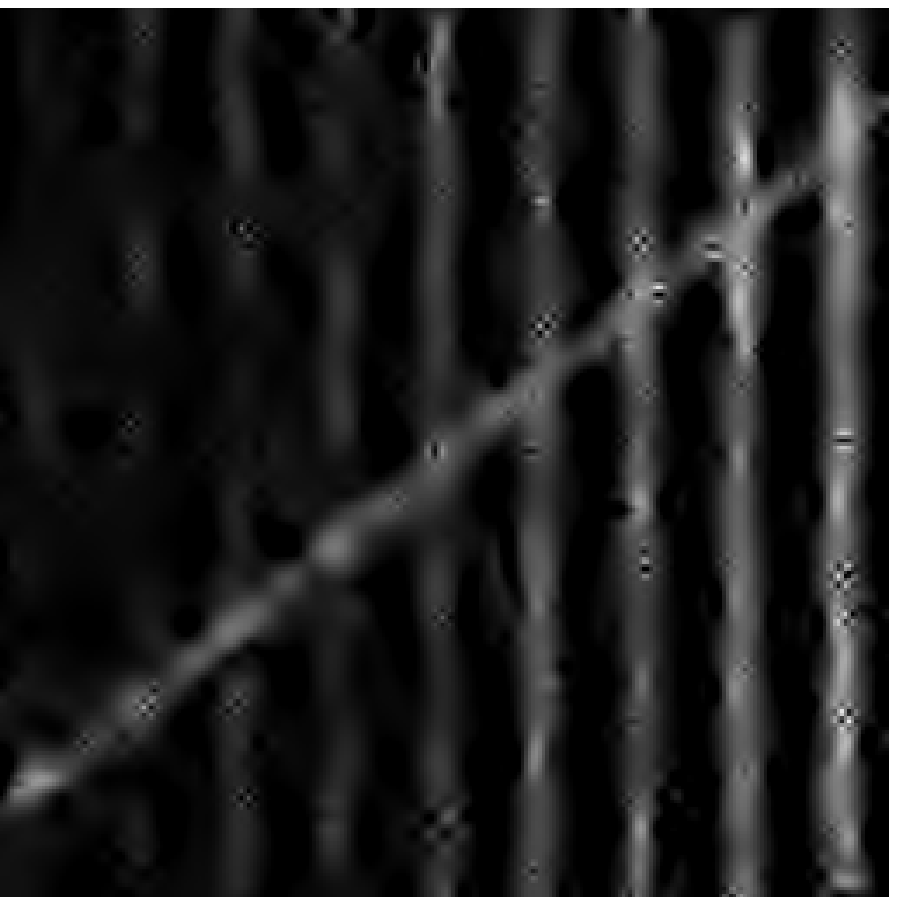} & %
\includegraphics[width=0.25\linewidth]{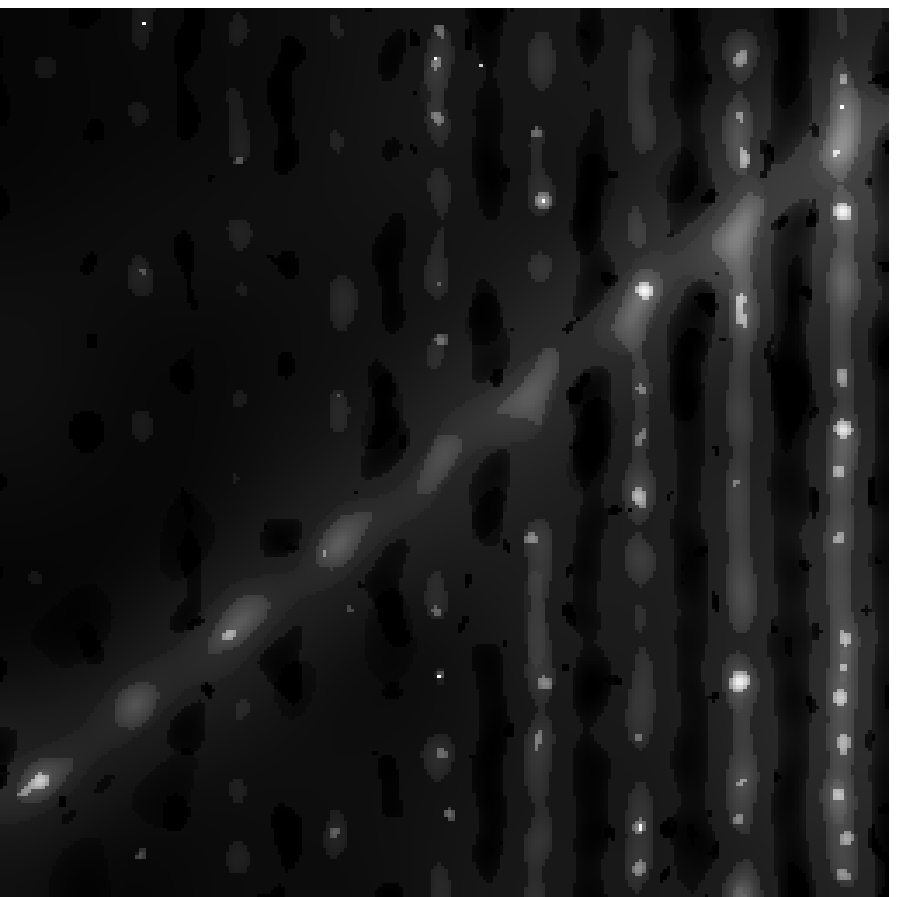} &  &  \\
(d)  EUI+BM3D & (e)  MS-VST + $7/9$ & (f)  MS-VST + B3 &  &
\end{tabular}
\end{center}
}
\caption{{\protect\small Poisson denoising of smooth ridges (image size: $%
256 \times 256$). (a) intensity image (the peak intensities of the $9$
vertical ridges vary progressively from $0.1$ to $0.5$; the inclined ridge
has a maximum intensity of $0.3$; background $= 0.05$); (b) Poisson noisy
image; (c) Optimal Weights Filter ($M=9\times9$, $m=19\times19$, $d=3$ and $%
H=2$, $NMISE = 0.0162$); (d) Exact unbiased inverse + BM3D ($NMISE=0.0121$);
 (e) MS-VST + $7/9$ biorthogonal wavelet ($J = 5$, $FPR =
0.001$, $N_{max}= 5$ iterations, $NMISE = 0.0193$); (f) MS-VST + B3
isotropic wavelet ($J = 3$, $FPR = 0.00001$, $N_{max}= 10$ iterations, $%
NMISE = 0.0416$).}}
\label{Fig Ridges}
\end{figure}
\end{center}

\begin{center}
\begin{figure}[tbp]
\renewcommand{\arraystretch}{0.5} \addtolength{\tabcolsep}{-5pt} \vskip3mm {%
\fontsize{8pt}{\baselineskip}\selectfont
\begin{center}
\begin{tabular}{ccccc}
\includegraphics[width=0.25\linewidth]{Barbara_original.eps} & %
\includegraphics[width=0.25\linewidth]{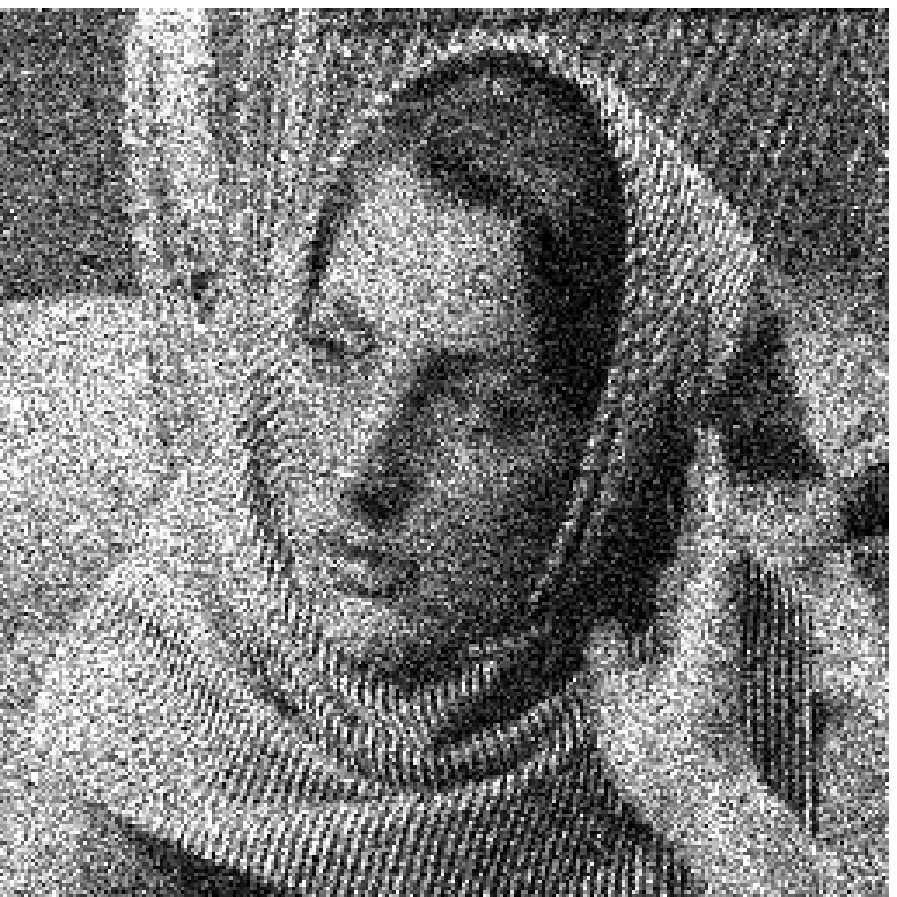} & %
\includegraphics[width=0.25\linewidth]{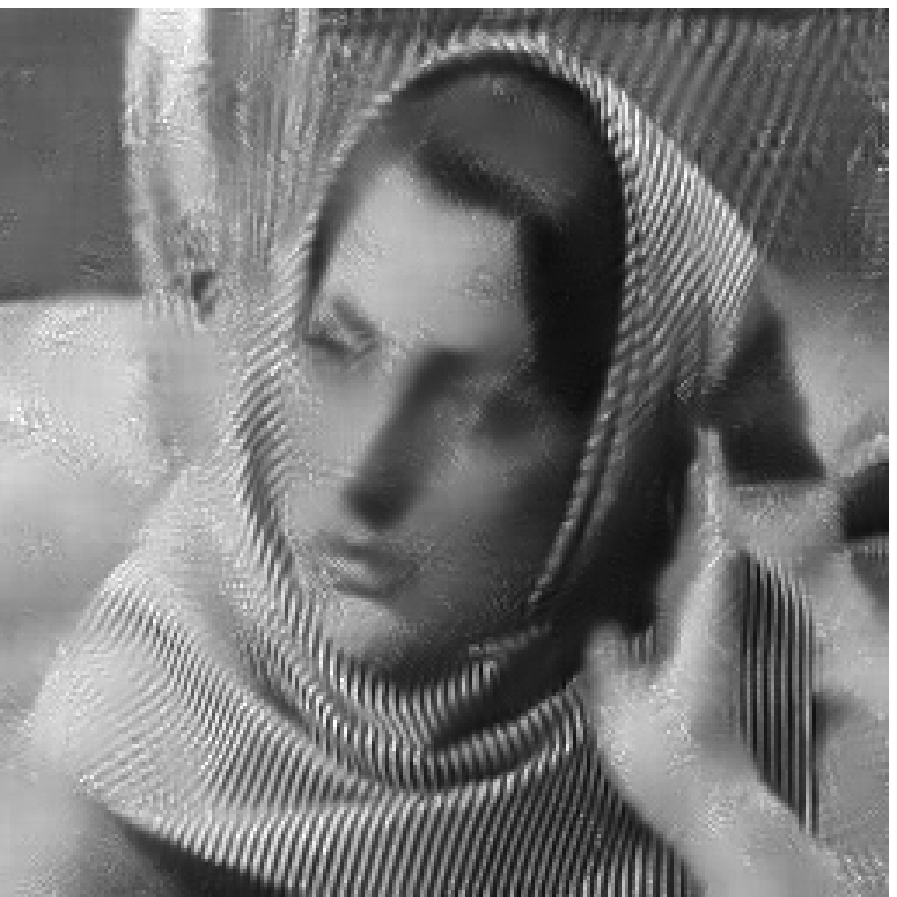} &  &  \\
(a) Original image & (b) Noisy image & (c)  OWF &  &  \\
\includegraphics[width=0.25\linewidth]{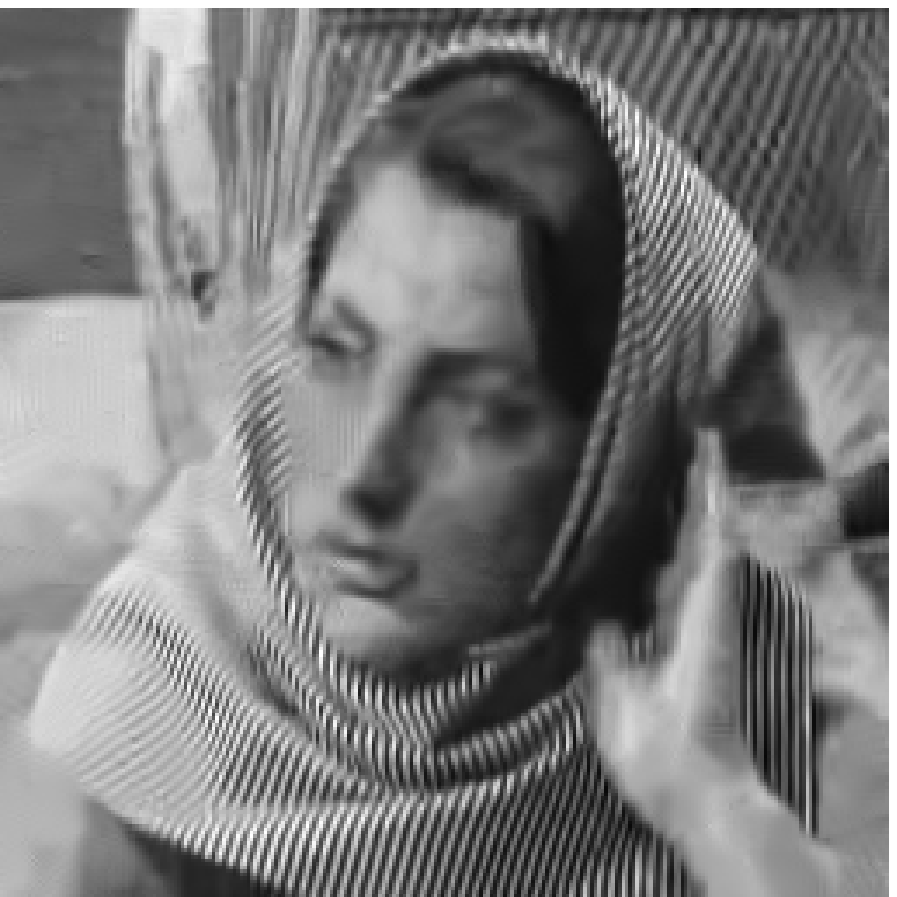} & %
\includegraphics[width=0.25\linewidth]{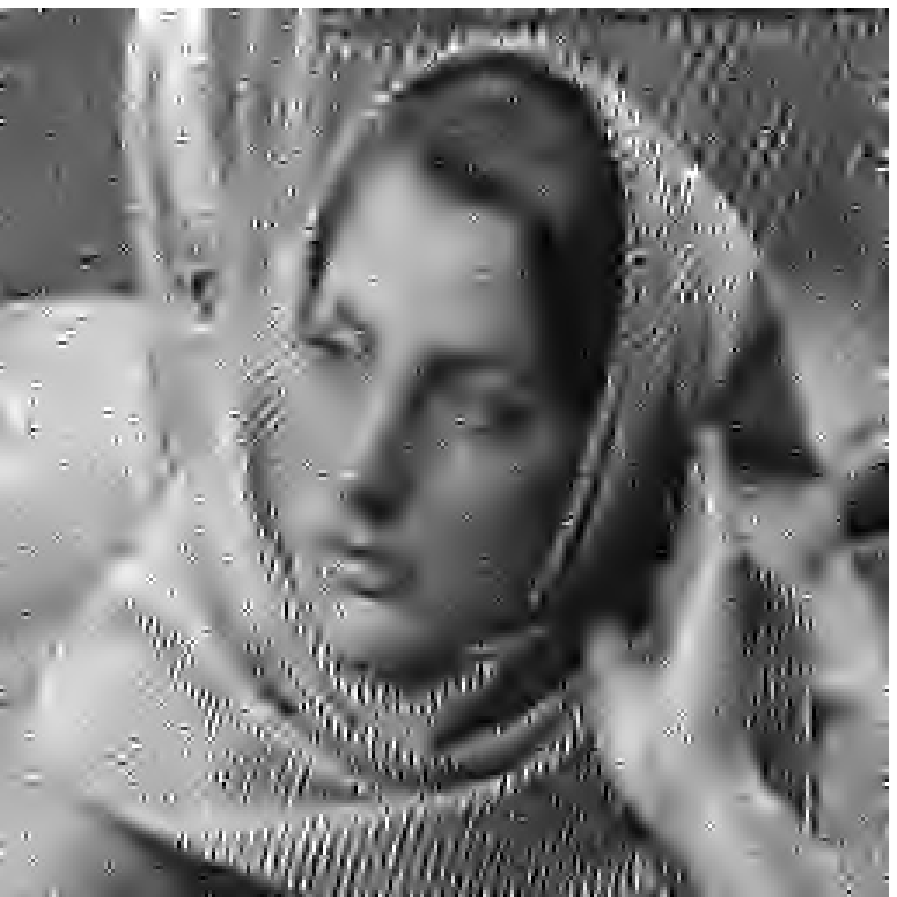} & %
\includegraphics[width=0.25\linewidth]{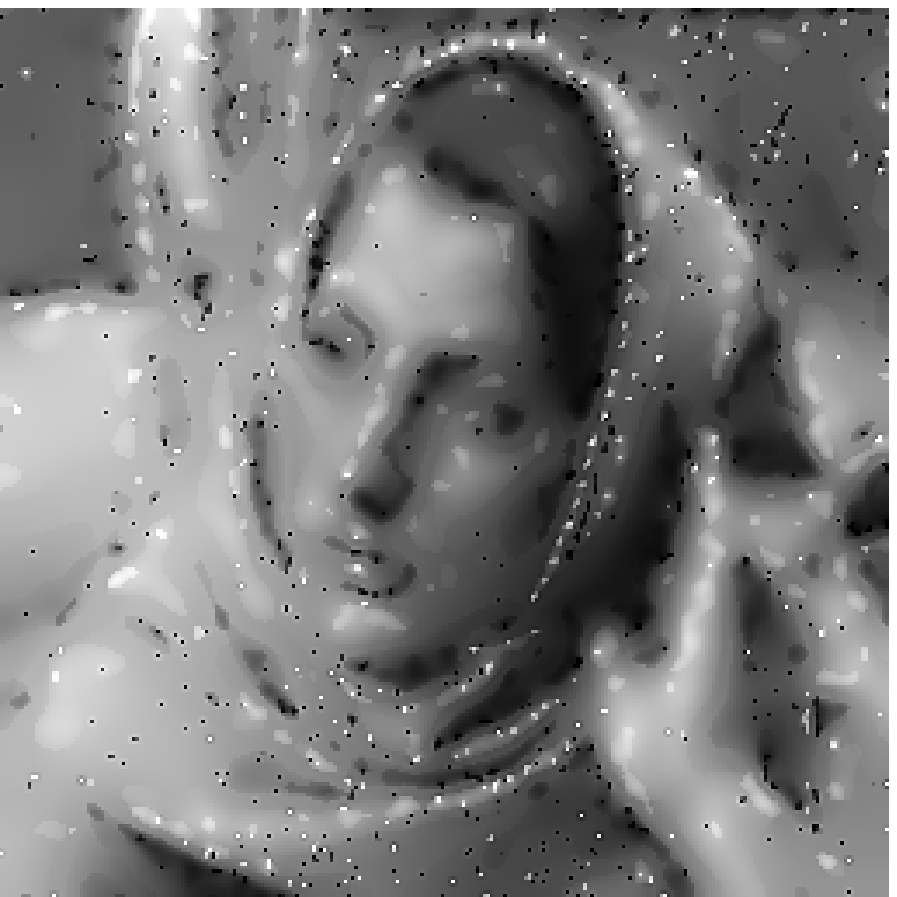} &  &  \\
(d)  EUI+BM3D & (e)  MS-VST + $7/9$ & (f)  MS-VST + B3 &  &
\end{tabular}
\end{center}
}
\caption{{\protect\small Poisson denoising of the Barbara image (image size:
$256 \times 256$). (a) intensity image (intensity $\in [0.93, 15.73])$; (b)
Poisson noisy image; (c) Optimal Weights Filter ($M=15\times15$, $%
m=21\times21$ and $d=0$, $NMISE = 0.1061$); (d) Exact unbiased inverse +
BM3D ($NMISE=0.0863$) (e) MS-VST + $7/9$ biorthogonal wavelet ($J = 4$, $FPR
= 0.001$, $N_{max}= 5$ iterations, $NMISE = 0.2391$); (f) MS-VST + B3
isotropic wavelet ($J = 5$, $FPR = 0.001$, $N_{max}= 5$ iterations, $NMISE =
0.3777$).}}
\label{Fig Barbara}
\end{figure}
\end{center}

\begin{center}
\begin{figure}[tbp]
\renewcommand{\arraystretch}{0.5} \addtolength{\tabcolsep}{-5pt} \vskip3mm {%
\fontsize{8pt}{\baselineskip}\selectfont
\begin{center}
\begin{tabular}{ccccc}
\includegraphics[width=0.25\linewidth]{Cells_original.eps} & %
\includegraphics[width=0.25\linewidth]{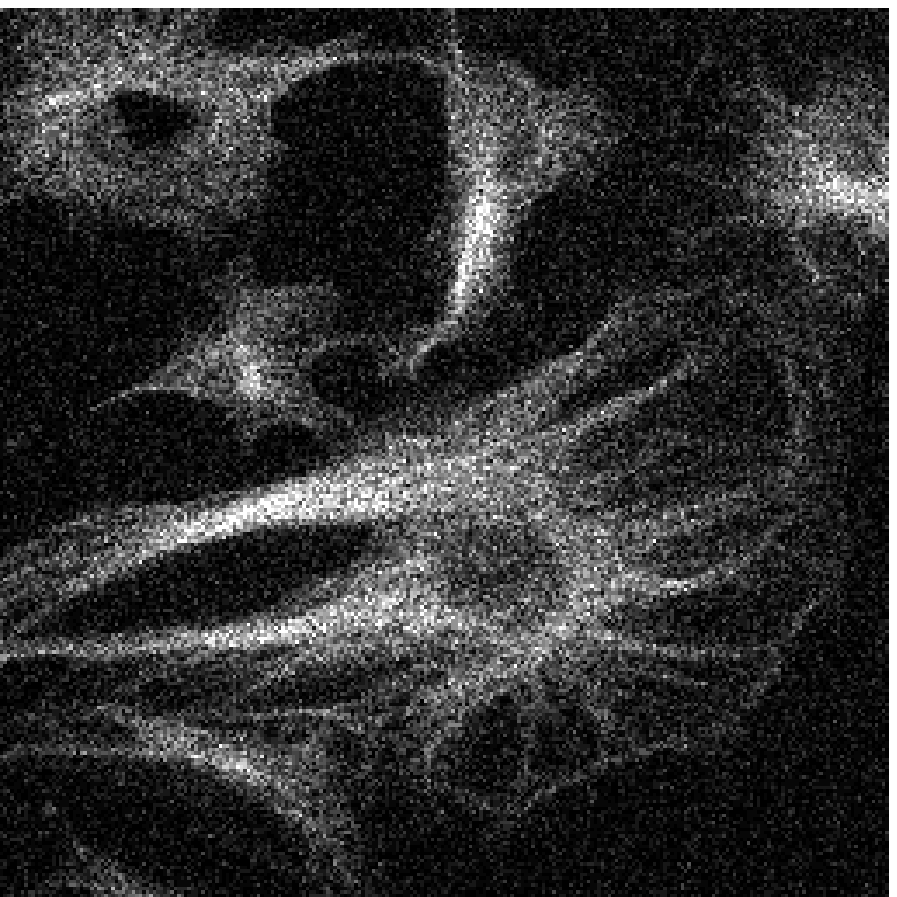} & %
\includegraphics[width=0.25\linewidth]{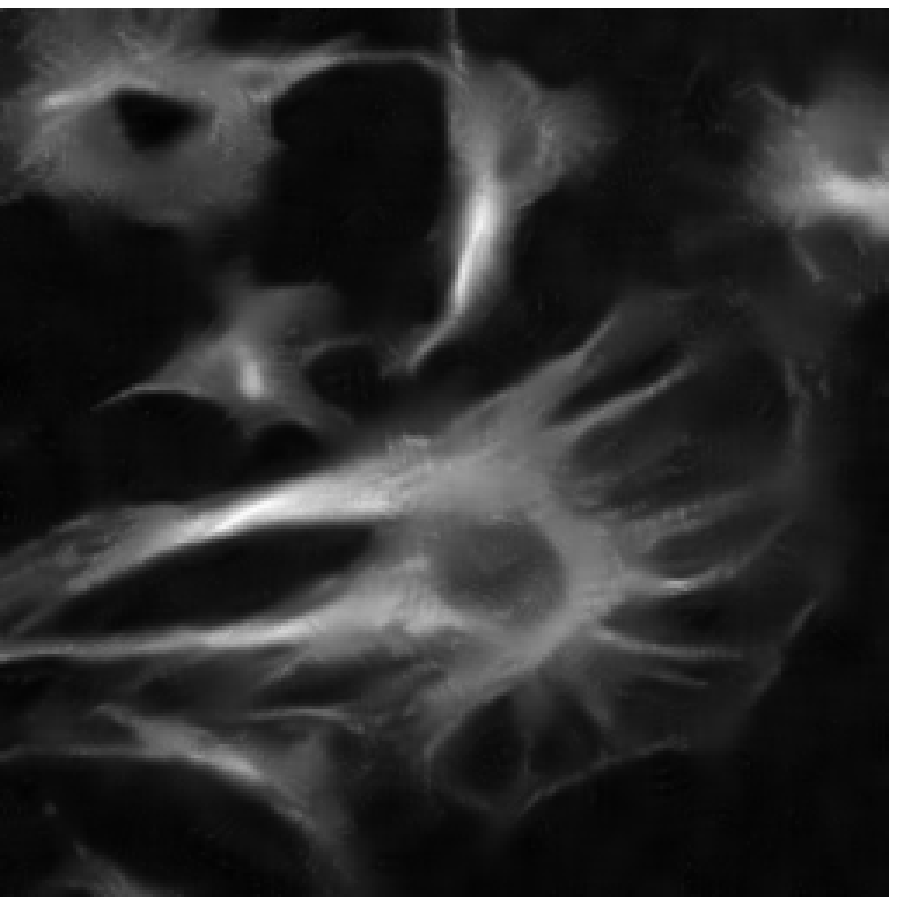} &  &  \\
(a) Original image & (b) Noisy image & (c)  OWF &  &  \\
\includegraphics[width=0.25\linewidth]{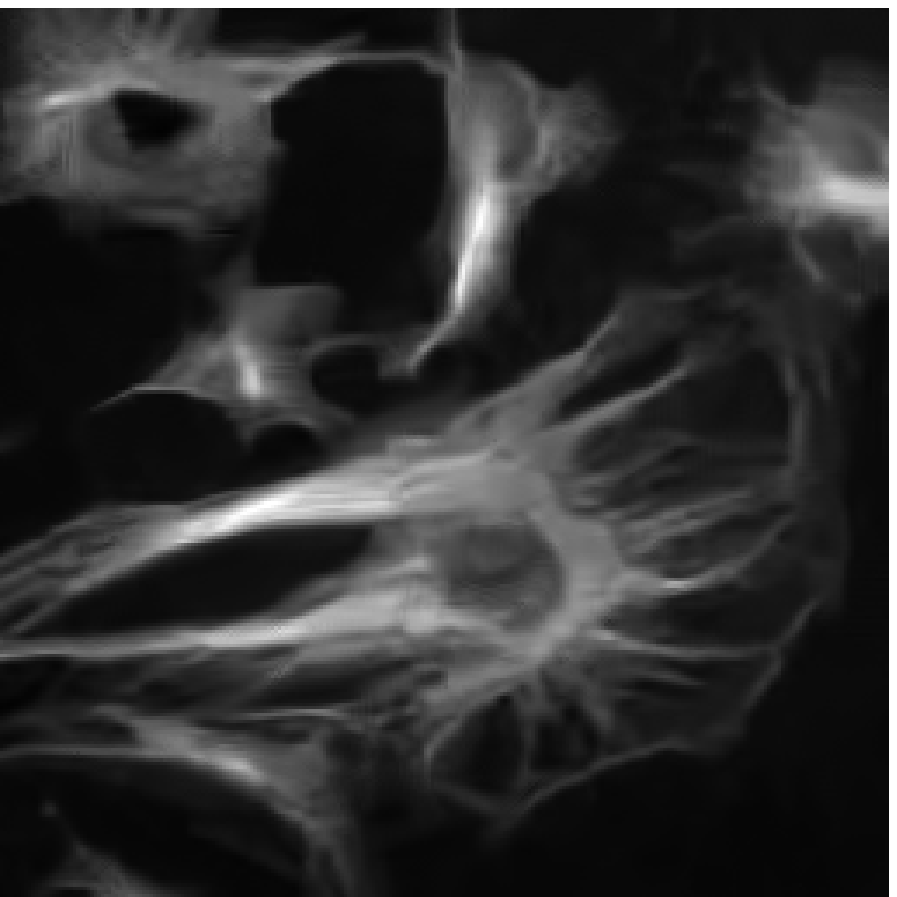} & %
\includegraphics[width=0.25\linewidth]{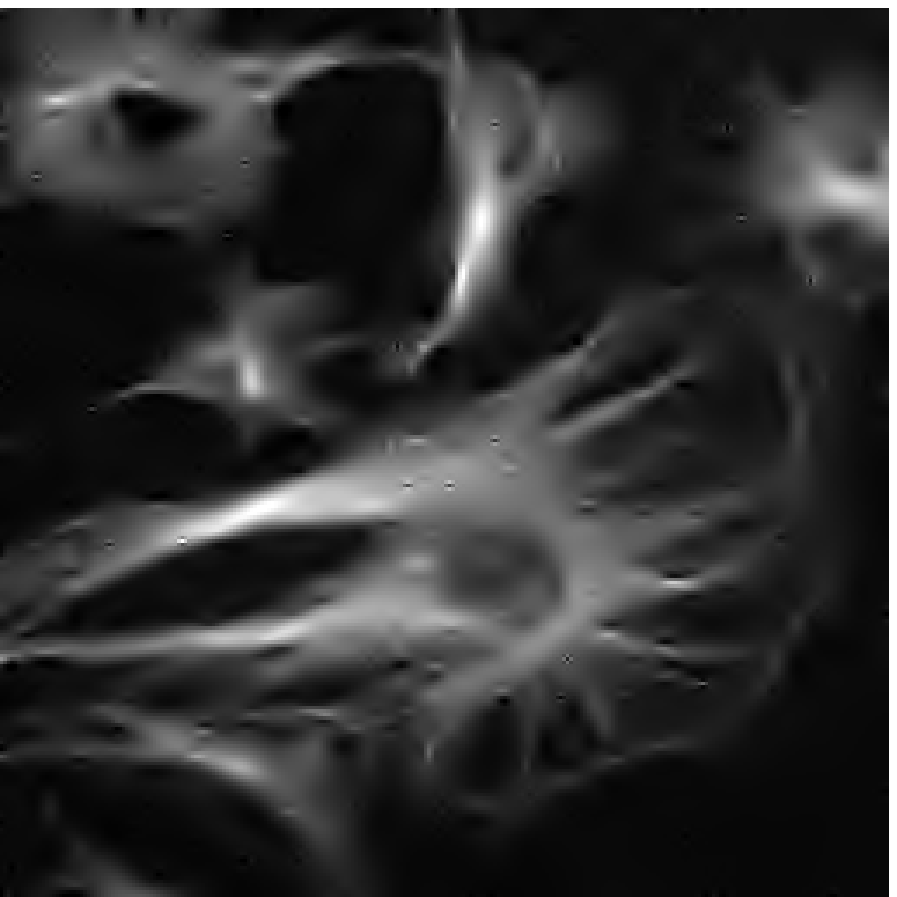} & %
\includegraphics[width=0.25\linewidth]{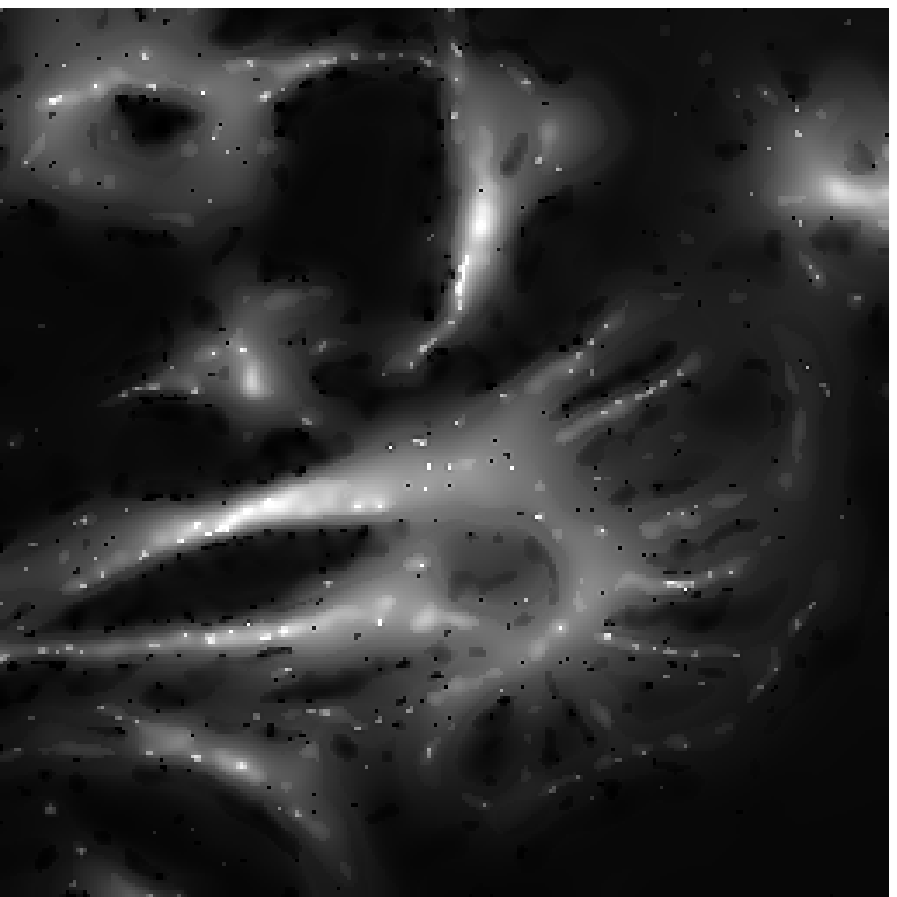} &  &  \\
(d)  EUI+BM3D & (e)  MS-VST + $7/9$ & (f)  MS-VST + B3 &  &
\end{tabular}
\end{center}
}
\caption{{\protect\small Poisson denoising of fluorescent tubules (image
size: $256 \times 256$). (a) intensity image (intensity $\in [0.53, 16.93])$%
; (b) Poisson noisy image; (c) Optimal Weights Filter ($M=11\times11$, $%
m=17\times17$, $d=1$ and $H=0.6$, $NMISE = 0.0794$); (d) Exact unbiased
inverse + BM3D ($NMISE=0.0643$) (e) MS-VST + $7/9$ biorthogonal wavelet ($J
= 5$, $FPR = 0.0001$,$N_{max}= 5$ iterations, $NMISE = 0.0909$); (f) MS-VST
+ B3 isotropic wavelet ($J = 5$, $FPR = 0.001$, $N_{max}= 10$ iterations, $%
NMISE = 0.1487$).}}
\label{Fig Cells}
\end{figure}
\end{center}
% BibTeX users please use one of
%\bibliographystyle{spbasic}      % basic style, author-year citations
%\bibliographystyle{spmpsci}      % mathematics and physical sciences
%\bibliographystyle{spphys}       % APS-like style for physics
%\bibliography{}   % name your BibTeX data base

% Non-BibTeX users please use

% end of file template.tex

\end{document}